\setlist{noitemsep} %% can do nosep for even more space saving
\definecolor{zx_red}{RGB}{232, 165, 165}
\definecolor{zx_green}{RGB}{216, 248, 216}
\tikzstyle{gn}=[rectangle,rounded corners=0.8em,fill=zx_green,draw=black,
\tikzstyle{rn}=[rectangle,rounded corners=0.8em,fill=zx_red,draw=black,
\tikzstyle{medium box}=[fill=white, draw=black, shape=rectangle, minimum width=0.75cm, minimum height=1cm]
\tikzstyle{small box}=[fill=white, draw=black, shape=rectangle, minimum width=0.5cm, minimum height=0.66cm]
\tikzstyle{H gate}=[fill={rgb,255: red,255; green,242; blue,52}, draw=black, shape=rectangle, minimum width=0.25cm, minimum height=0.25cm]
\tikzstyle{blackdot}=[fill=black, draw=black, shape=circle]
\tikzstyle{Hadamard edge}=[-, dashed, dash pattern=on 2pt off 1.5pt, thick, draw={rgb,255: red,68; green,136; blue,255}]
\tikzstyle{box}=[shape=rectangle, text height=1.5ex, text depth=0.25ex, yshift=0.5mm, fill=white, draw=black, minimum height=5mm, yshift=-0.5mm, minimum width=5mm, font={\small}]
\tikzstyle{Z dot}=[inner sep=0mm, minimum size=2mm, shape=circle, draw=black, fill={rgb,255: red,221; green,255; blue,221}]
\tikzstyle{Z phase dot}=[minimum size=1.2em, font={\footnotesize\boldmath}, shape=rectangle, rounded corners=0.5em, inner sep=0.2em, outer sep=-0.2em, scale=0.8, tikzit shape=circle, draw=black, fill={rgb,255: red,221; green,255; blue,221}, tikzit draw=blue]
\tikzstyle{X dot}=[Z dot, shape=circle, draw=black, fill={rgb,255: red,255; green,136; blue,136}]
\tikzstyle{X phase dot}=[Z phase dot, tikzit shape=circle, tikzit draw=blue, fill={rgb,255: red,255; green,136; blue,136}, font={\footnotesize\boldmath}]
\tikzstyle{hadamard}=[fill=yellow, draw=black, shape=rectangle, inner sep=0.6mm, minimum height=1.5mm, minimum width=1.5mm]
\tikzstyle{vertex}=[inner sep=0mm, minimum size=1mm, shape=circle, draw=black, fill=black]
\tikzstyle{vertex set}=[inner sep=0mm, minimum size=1mm, shape=circle, draw=black, fill=white, font={\footnotesize\boldmath}]
\tikzstyle{target}=[inner sep=0mm, minimum size=3mm, shape=circle, draw=black]
\tikzstyle{hadamard edge}=[-, dashed, dash pattern=on 2pt off 1.5pt, thick, draw={rgb,255: red,68; green,136; blue,255}]
\tikzstyle{brace edge}=[-, tikzit draw=blue, decorate, decoration={brace,amplitude=1mm,raise=-1mm}]
\tikzstyle{diredge}=[->]
\tikzstyle{dashed edge}=[-, dashed, dash pattern=on 2pt off 0.5pt, draw=black]
\newtheorem{theorem}{Theorem}[section]
\newtheorem{lemma}[theorem]{Lemma}
\newtheorem{proposition}[theorem]{Proposition}
\newtheorem{observation}[theorem]{Observation}
\theoremstyle{definition}
\newtheorem{definition}[theorem]{Definition}
\newtheorem{example}[theorem]{Example}
\newtheorem{remark}[theorem]{Remark}
\newcommand{\CC}{\mathbb{C}}
\newcommand{\ZZ}{\mathbb{Z}}
\newcommand{\symdiff}{\ensuremath{\bigtriangleup}}
\newcommand{\intf}[1]{\left\llbracket #1 \right\rrbracket} % interpretation functor
\let\sse=\subseteq
\def\vc#1#2{#1 _1\zd #1 _{#2}}
\def\zd{,\ldots,}
\let\ld=\lambda
\def\abs#1{\left| #1 \right|}
\renewcommand{\t}[1]{\ensuremath{^{\otimes #1}}}
\newcommand{\phase}[2]{\begin{tikzpicture}[baseline=-0.1cm]
	\begin{pgfonlayer}{nodelayer}
		\node [style=none] (0) at (-0.4, 0) {};
		\node [style=none] (2) at (0.4, 0) {};
		\node [style=#1 phase dot] (5) at (0, 0) {#2};
	\end{pgfonlayer}
	\begin{pgfonlayer}{edgelayer}
		\draw (0.center) to (5);
		\draw (5) to (2.center);
	\end{pgfonlayer}
\end{tikzpicture}}
\newcommand{\had}{\begin{tikzpicture}[baseline=-0.1cm]
	\begin{pgfonlayer}{nodelayer}
		\node [style=none] (0) at (-0.4, 0) {};
		\node [style=none] (2) at (0.4, 0) {};
		\node [style=H gate] (5) at (0, 0) {};
	\end{pgfonlayer}
	\begin{pgfonlayer}{edgelayer}
		\draw (0.center) to (5);
		\draw (5) to (2.center);
	\end{pgfonlayer}
\end{tikzpicture}}
\title{Complete Flow-Preserving Rewrite Rules for MBQC Patterns with Pauli Measurements}
\author{Tommy McElvanney
\institute{School of Computer Science\\
University of Birmingham}
\email{txm639@student.bham.ac.uk}
\and
Miriam Backens
\institute{School of Computer Science\\
University of Birmingham}
\email{m.backens@cs.bham.ac.uk}
}
\begin{document}
\maketitle

\begin{abstract}
 In the one-way model of measurement-based quantum computation (MBQC), computation proceeds via measurements on some standard resource state. So-called flow conditions ensure that the overall computation is deterministic in a suitable sense, with Pauli flow being the most general of these. Existing work on rewriting MBQC patterns while preserving the existence of flow has focused on rewrites that reduce the number of qubits.

 In this work, we show that introducing new $Z$-measured qubits, connected to any subset of the existing qubits, preserves the existence of Pauli flow. Furthermore, we give a unique canonical form for stabilizer ZX-diagrams inspired by recent work of Hu \& Khesin \cite{Hu_2021}. We prove that any MBQC-like stabilizer ZX-diagram with Pauli flow can be rewritten into this canonical form using only rules which preserve the existence of Pauli flow, and that each of these rules can be reversed while also preserving the existence of Pauli flow. Hence we have complete graphical rewriting for MBQC-like stabilizer ZX-diagrams with Pauli flow.
\end{abstract}

\maketitle

\section{Introduction}

The one-way model of measurement-based quantum computation (MBQC) shows how to implement quantum computations by successive adaptive single-qubit measurements on a resource state \cite{Raussendorf_One-Way_2001}, largely without using any unitary operations.
This contrasts with the more commonly-used circuit model and has applications in server-client scenarios as well as for certain quantum error-correcting codes.

An MBQC computation is given as a \emph{pattern}, which specifies the resource state -- usually a graph state -- and a sequence of measurements of certain types \cite{danos_parsimonious_2005}.
As measurements are non-deterministic, future measurements need to be adapted depending on the outcomes of past measurements to obtain an overall deterministic computation.
Yet not every pattern can be implemented deterministically.
Sufficient (and in some cases necessary) criteria for determinism are given by the different kinds of \emph{flow}, which define a partial order on the measured qubits and give instructions for how to adapt the future computation if a measurement yields the undesired outcome \cite{Danos_2006,Browne_2007} (cf.\ Section~\ref{s:pauli-flow}).

In addition to the applications mentioned above, the flexible structure of MBQC patterns is also useful as a theoretical tool.
For example, translations between circuits and MBQC patterns have been used to trade off circuit depth versus qubit number \cite{broadbent_parallelizing_2009} or to reduce the number of $T$-gates in a Clifford+T circuit \cite{kissinger_reducing_2020}.
When translating an MBQC pattern (back) into a circuit, it is important that the pattern still have flow, as circuit extraction algorithms rely on flow \cite{Danos_2006,miyazaki_analysis_2015,Duncan_2020,Backens_2021}

This work uses the ZX-calculus, a graphical language for representing and reasoning about quantum computations, which is convenient for representing both quantum circuits and MBQC patterns, and for translating between the two.
ZX-calculus diagrams directly corresponding to MBQC-patterns are said to be in \emph{MBQC form}.
The ZX-calculus has various complete sets of rewrite rules, meaning any two diagrams that represent the same linear map can be transformed into each other entirely graphically \cite{Backens_2014,Jeandel_2018,Ng_2017}.
Yet these rewrite rules do not necessarily preserve the existence of a flow, nor even the MBQC-form structure.
Thus, circuit optimisation using MBQC and the ZX-calculus relies on proofs that certain diagram rewrites do preserve both \cite{Duncan_2020,Backens_2021}.
Work so far has focused on rewrite rules that maintain or reduce the number of qubits, which find direct application in T-count optimisation \cite{Duncan_2020}.
Nevertheless, it is sometimes desirable to increase the number of qubits in an MBQC pattern while preserving the existence of flow, such as for more involved optimisation strategies \cite{staudacher_optimization_2021} or for obfuscation.

In this paper, we begin investigating rewrite rules that preserve the existence of flow while increasing the number of qubits.
In particular, we prove that a rewrite rule that introduces a new $Z$-measured qubit preserves flow.
Most work on flow-preserving rewriting so far has been done in the context of \emph{generalised flow}, also known as \emph{gflow} \cite{Browne_2007}, in either its simple \cite{Duncan_2020} or extended version \cite{Backens_2021}.
Yet with the qubit introduction rule, the setting shifts to that of \emph{Pauli flow} \cite{Browne_2007,Simmons_2021} since preserving the interpretation of the diagram requires that the new qubit be measured in the Pauli-$Z$ basis.

We show that adding this one new rule to the known flow-preserving rewrite rules suffices to get completeness for MBQC-form diagrams within the stabilizer fragment of the ZX-calculus.
To achieve completeness, we introduce a new unique normal form for stabilizer ZX-calculus diagrams, which is close to the MBQC form.
This normal form is based on work by Hu and Khesin \cite{Hu_2021} using the stabilizer graph notation of Elliott, Eastin and Caves \cite{Elliott_2008}, like the original stabilizer ZX-calculus completeness result \cite{Backens_2014}.
As the proof by Hu and Khesin is somewhat difficult to follow, we give an alternative uniqueness proof using the language of affine spaces.

The remainder of this paper is structured as follows: in Section~\ref{s:preliminaries}, we introduce the ZX-calculus, measurement-based quantum computing, and existing flow-preserving rewrite rules.
Section~\ref{s:canonical} contains the new canonical form and its uniqueness proof.
Section~\ref{s:completeness} presents the new flow-preserving rewrite rule and the completeness proof for the stabilizer MBQC-form fragment.
The conclusions are in Section~\ref{s:conclusions}.

\section{Preliminaries}
\label{s:preliminaries}

In this section, we give an overview of the ZX-calculus and then use it to introduce measurement-based quantum computing.
We discuss the notion of flow that will be used in this paper and some existing rewrite rules which preserve the existence of this flow.

\subsection{The ZX-calculus}
The ZX-calculus is a diagrammatic language for reasoning about quantum computations. We will provide a short introduction here; for a more thorough overview, see \cite{VDW_2020, Coecke_2017}.

A ZX-diagram consists of \textit{spiders} and \textit{wires}. Diagrams are read from left to right: wires entering a diagram from the left are inputs while wires exiting the diagram on the right are outputs, like in the quantum circuit model. ZX-diagrams compose in
two distinct ways: \textit{horizontal composition}, which involves connecting the output wires of one diagram to the input wires of another, and \textit{vertical composition} (or the tensor product), which just involves drawing one diagram vertically above the other.
The linear map corresponding to a ZX-diagram $D$ is denoted by $\intf{D}$.

ZX-diagrams are generated by two families of spiders which may have any number of inputs or outputs, corresponding to the Z and X bases respectively. $Z$-spiders are drawn as green dots and $X$-spiders as red dots; with $m$ inputs, $n$ outputs, and using $(\cdot)\t{k}$ to denote a $k$-fold tensor power, we have:
\[
 \intf{\begin{tikzpicture}[baseline=-0.1cm, scale=0.5]
	\begin{pgfonlayer}{nodelayer}
		\node [style=Z phase dot] (0) at (0, 0) {$\alpha$};
		\node [style=none] (1) at (-1, 1) {};
		\node [style=none] (2) at (-1, 0.5) {};
		\node [style=none] (3) at (-1, -1) {};
		\node [style=none] (4) at (1, -1) {};
		\node [style=none] (5) at (1, 0.5) {};
		\node [style=none] (6) at (1, 1) {};
		\node [style=none] (7) at (-0.85, -0.15) {\vdots};
		\node [style=none] (8) at (0.85, -0.15) {\vdots};
	\end{pgfonlayer}
	\begin{pgfonlayer}{edgelayer}
		\draw (0) to (2.center);
		\draw [bend right=15] (0) to (1.center);
		\draw [bend left=15] (0) to (3.center);
		\draw [bend right=15] (0) to (4.center);
		\draw (0) to (5.center);
		\draw [bend left=15] (0) to (6.center);
	\end{pgfonlayer}
\end{tikzpicture}}
 = \ket{0}\t{n}\bra{0}\t{m} + e^{i\alpha}\ket{1}\t{n}\bra{1}\t{m} \qquad\qquad
 \intf{\begin{tikzpicture}[baseline=-0.1cm, scale=0.5]
	\begin{pgfonlayer}{nodelayer}
		\node [style=X phase dot] (0) at (0, 0) {$\alpha$};
		\node [style=none] (1) at (-1, 1) {};
		\node [style=none] (2) at (-1, 0.5) {};
		\node [style=none] (3) at (-1, -1) {};
		\node [style=none] (4) at (1, -1) {};
		\node [style=none] (5) at (1, 0.5) {};
		\node [style=none] (6) at (1, 1) {};
		\node [style=none] (7) at (-0.85, -0.15) {\vdots};
		\node [style=none] (8) at (0.85, -0.15) {\vdots};
	\end{pgfonlayer}
	\begin{pgfonlayer}{edgelayer}
		\draw (0) to (2.center);
		\draw [bend right=15] (0) to (1.center);
		\draw [bend left=15] (0) to (3.center);
		\draw [bend right=15] (0) to (4.center);
		\draw (0) to (5.center);
		\draw [bend left=15] (0) to (6.center);
	\end{pgfonlayer}
\end{tikzpicture}}
 = \ket{+}\t{n}\bra{+}\t{m} + e^{i\alpha}\ket{-}\t{n}\bra{-}\t{m}
\]

Spiders with exactly one input and output are unitary, in particular $\intf{\phase{Z}{$\alpha$}} = \ket{0}\bra{0} +e^{i\alpha}\ket{1}\bra{1} = Z_\alpha$ and $\intf{\phase{X}{$\alpha$}} = \ket{+}\bra{+} +e^{i\alpha}\ket{-}\bra{-} = X_\alpha$.

Two diagrams $D$ and $D'$ are said to be equivalent if $\intf{D} = z \intf{D'}$ for some non-zero complex number $z$. For the rest of the paper, whenever we write a diagram equality we will mean equality up to some global scalar in this way.
For treatments of the ZX-calculus which do not ignore scalars see \cite{Backens_2015} for the stabilizer fragment, \cite{Jeandel_2018} for the Clifford+T fragment and \cite{Jeandel_2018_2, Ng_2017} for the full ZX-calculus.

The Hadamard gate $H=\ket{+}\bra{0}+\ket{-}\bra{1} \cong Z_{\frac{\pi}{2}} \circ X_{\frac{\pi}{2}} \circ Z_{\frac{\pi}{2}}$ will be used throughout the paper (where $\cong$ denotes equality up to non-zero scalar factor).
It has two common syntactic sugars -- a yellow square, or a blue dotted line -- with the latter only used between spiders:
\begin{center}
\begin{tikzpicture}[baseline=-0.1cm]
	\begin{pgfonlayer}{nodelayer}
		\node [style=none] (0) at (-2.75, 0) {};
		\node [style=none] (1) at (-1.75, 0) {};
		\node [style=none] (3) at (-0.75, 0) {};
		\node [style=none] (4) at (1.75, 0) {};
		\node [style=H gate] (5) at (-2.25, 0) {};
		\node [style=Z phase dot] (6) at (-0.25, 0) {$\frac{\pi}{2}$};
		\node [style=X phase dot] (7) at (0.5, 0) {$\frac{\pi}{2}$};
		\node [style=Z phase dot] (8) at (1.25, 0) {$\frac{\pi}{2}$};
		\node [style=none] (9) at (-1.25, 0) {$=$};
	\end{pgfonlayer}
	\begin{pgfonlayer}{edgelayer}
		\draw (0.center) to (5);
		\draw (5) to (1.center);
		\draw (3.center) to (6);
		\draw (6) to (7);
		\draw (7) to (8);
		\draw (8) to (4.center);
	\end{pgfonlayer}
\end{tikzpicture} \qquad\qquad \begin{tikzpicture}[baseline=-0.1cm]
	\begin{pgfonlayer}{nodelayer}
		\node [style=none] (0) at (-3, 0) {};
		\node [style=Z dot] (1) at (-2.5, 0) {};
		\node [style=H gate] (2) at (-1.75, 0) {};
		\node [style=Z dot] (3) at (-1, 0) {};
		\node [style=none] (4) at (-0.5, 0) {};
		\node [style=none] (5) at (0, 0) {$=$};
		\node [style=none] (6) at (0.5, 0) {};
		\node [style=Z dot] (7) at (1, 0) {};
		\node [style=Z dot] (8) at (2, 0) {};
		\node [style=none] (9) at (2.5, 0) {};
	\end{pgfonlayer}
	\begin{pgfonlayer}{edgelayer}
		\draw (0.center) to (1);
		\draw (1) to (2);
		\draw (2) to (3);
		\draw (3) to (4.center);
		\draw (6.center) to (7);
		\draw (8) to (9.center);
		\draw [style=Hadamard edge] (7) to (8);
	\end{pgfonlayer}
\end{tikzpicture}
\end{center}

The ZX-calculus is equipped with a set of rewrite rules which can be used to transform a ZX-diagram into another diagram representing the same linear map.
As this paper focuses on stabilizer quantum mechanics, we give a rule set for the stabilizer ZX-calculus in Figure~\ref{fig:ZX-rules}.
Together with the definition of \had{}, this set of rewrite rules is complete: any two stabilizer ZX-diagrams which correspond (up to non-zero scalar factor) to the same linear map can be rewritten into one another using these rules \cite{Backens_2014}.

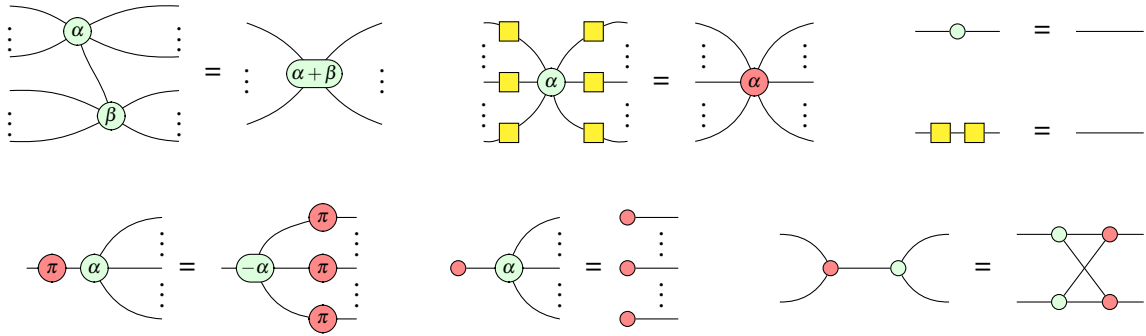
\begin{figure}
	\begin{tikzpicture}[scale=0.45]
	\begin{pgfonlayer}{nodelayer}
		\node [style=Z phase dot] (0) at (-17, 4.75) {$\alpha$};
		\node [style=Z phase dot] (1) at (-16, 2.25) {$\beta$};
		\node [style=none] (2) at (-19, 5.5) {};
		\node [style=none] (3) at (-19, 4) {};
		\node [style=none] (4) at (-19, 3) {};
		\node [style=none] (5) at (-19, 1.5) {};
		\node [style=none] (6) at (-14, 5.5) {};
		\node [style=none] (7) at (-14, 4) {};
		\node [style=none] (8) at (-14, 3) {};
		\node [style=none] (9) at (-14, 1.5) {};
		\node [style=none] (10) at (-19, 4.75) {$\vdots$};
		\node [style=none] (11) at (-19, 2.25) {$\vdots$};
		\node [style=none] (12) at (-14, 2.25) {$\vdots$};
		\node [style=none] (13) at (-14, 4.75) {$\vdots$};
		\node [style=none] (14) at (-13, 3.5) {=};
		\node [style=Z phase dot] (15) at (-10, 3.5) {$\alpha + \beta$};
		\node [style=none] (16) at (-12, 5) {};
		\node [style=none] (17) at (-12, 2) {};
		\node [style=none] (18) at (-8, 5) {};
		\node [style=none] (19) at (-8, 2) {};
		\node [style=none] (20) at (-12, 3.5) {$\vdots$};
		\node [style=none] (21) at (-8, 3.5) {$\vdots$};
		\node [style=Z phase dot] (23) at (-3, 3.25) {$\alpha$};
		\node [style=none] (24) at (-5, 5) {};
		\node [style=none] (25) at (-5, 3.25) {};
		\node [style=none] (26) at (-5, 1.5) {};
		\node [style=none] (27) at (-0.75, 5) {};
		\node [style=none] (28) at (-0.75, 3.25) {};
		\node [style=none] (29) at (-0.75, 1.5) {};
		\node [style=H gate] (30) at (-4.25, 4.75) {};
		\node [style=H gate] (31) at (-4.25, 3.25) {};
		\node [style=H gate] (32) at (-4.25, 1.75) {};
		\node [style=H gate] (33) at (-1.75, 1.75) {};
		\node [style=H gate] (34) at (-1.75, 3.25) {};
		\node [style=H gate] (35) at (-1.75, 4.75) {};
		\node [style=none] (36) at (0.25, 3.25) {=};
		\node [style=X phase dot] (38) at (3, 3.25) {$\alpha$};
		\node [style=none] (39) at (1.25, 3.25) {};
		\node [style=none] (40) at (1.25, 5) {};
		\node [style=none] (41) at (1.25, 1.5) {};
		\node [style=none] (42) at (4.75, 5) {};
		\node [style=none] (43) at (4.75, 3.25) {};
		\node [style=none] (44) at (4.75, 1.5) {};
		\node [style=none] (45) at (7.75, 4.75) {};
		\node [style=none] (46) at (10.25, 4.75) {};
		\node [style=none] (47) at (12.5, 4.75) {};
		\node [style=none] (48) at (14.5, 4.75) {};
		\node [style=none] (49) at (11.5, 4.75) {=};
		\node [style=Z dot] (51) at (9, 4.75) {};
		\node [style=none] (52) at (7.75, 1.75) {};
		\node [style=none] (53) at (10.25, 1.75) {};
		\node [style=none] (54) at (12.5, 1.75) {};
		\node [style=none] (55) at (14.5, 1.75) {};
		\node [style=none] (56) at (11.5, 1.75) {=};
		\node [style=H gate] (58) at (8.5, 1.75) {};
		\node [style=H gate] (59) at (9.5, 1.75) {};
		\node [style=none] (60) at (-18.5, -2.25) {};
		\node [style=X phase dot] (61) at (-17.75, -2.25) {$\pi$};
		\node [style=Z phase dot] (62) at (-16.5, -2.25) {$\alpha$};
		\node [style=none] (63) at (-14.5, -0.75) {};
		\node [style=none] (64) at (-14.5, -2.25) {};
		\node [style=none] (65) at (-14.5, -3.75) {};
		\node [style=none] (66) at (-14.5, -1.3) {$\vdots$};
		\node [style=none] (67) at (-14.5, -2.8) {$\vdots$};
		\node [style=none] (68) at (-12.75, -2.25) {};
		\node [style=Z phase dot] (70) at (-11.75, -2.25) {$-\alpha$};
		\node [style=none] (71) at (-8.75, -0.75) {};
		\node [style=none] (72) at (-8.75, -2.25) {};
		\node [style=none] (73) at (-8.75, -3.75) {};
		\node [style=none] (74) at (-8.75, -1.3) {$\vdots$};
		\node [style=none] (75) at (-8.75, -2.8) {$\vdots$};
		\node [style=none] (76) at (-13.75, -2.25) {=};
		\node [style=X phase dot] (77) at (-9.75, -0.75) {$\pi$};
		\node [style=X phase dot] (78) at (-9.75, -2.25) {$\pi$};
		\node [style=X phase dot] (79) at (-9.75, -3.75) {$\pi$};
		\node [style=X dot] (82) at (-5.75, -2.25) {};
		\node [style=Z phase dot] (83) at (-4.25, -2.25) {$\alpha$};
		\node [style=none] (84) at (-2.75, -0.75) {};
		\node [style=none] (85) at (-2.75, -2.25) {};
		\node [style=none] (86) at (-2.75, -3.75) {};
		\node [style=none] (87) at (-2.75, -1.3) {$\vdots$};
		\node [style=none] (88) at (-2.75, -2.8) {$\vdots$};
		\node [style=none] (89) at (-1.75, -2.25) {=};
		\node [style=X dot] (91) at (-0.75, -0.75) {};
		\node [style=X dot] (92) at (-0.75, -2.25) {};
		\node [style=X dot] (93) at (-0.75, -3.75) {};
		\node [style=none] (94) at (0.75, -0.75) {};
		\node [style=none] (95) at (0.75, -2.25) {};
		\node [style=none] (96) at (0.75, -3.75) {};
		\node [style=none] (97) at (3.75, -1.25) {};
		\node [style=none] (98) at (3.75, -3.25) {};
		\node [style=X dot] (99) at (5.25, -2.25) {};
		\node [style=Z dot] (100) at (7.25, -2.25) {};
		\node [style=none] (101) at (8.75, -1.25) {};
		\node [style=none] (102) at (8.75, -3.25) {};
		\node [style=none] (103) at (10.75, -1.25) {};
		\node [style=none] (104) at (10.75, -3.25) {};
		\node [style=none] (107) at (14.5, -1.25) {};
		\node [style=none] (108) at (14.5, -3.25) {};
		\node [style=Z dot] (109) at (12, -1.25) {};
		\node [style=Z dot] (110) at (12, -3.25) {};
		\node [style=X dot] (111) at (13.5, -1.25) {};
		\node [style=X dot] (112) at (13.5, -3.25) {};
		\node [style=none] (113) at (9.75, -2.25) {=};
		\node [style=none] (115) at (0.25, -1.3) {$\vdots$};
		\node [style=none] (116) at (0.25, -2.8) {$\vdots$};
		\node [style=none] (125) at (-5, 4.25) {$\vdots$};
		\node [style=none] (126) at (-0.75, 4.25) {$\vdots$};
		\node [style=none] (127) at (-0.75, 2.5) {$\vdots$};
		\node [style=none] (129) at (1.5, 4.25) {$\vdots$};
		\node [style=none] (130) at (1.5, 2.5) {$\vdots$};
		\node [style=none] (131) at (4.5, 2.5) {$\vdots$};
		\node [style=none] (132) at (4.5, 4.25) {$\vdots$};
		\node [style=none] (133) at (-5, 2.5) {$\vdots$};
	\end{pgfonlayer}
	\begin{pgfonlayer}{edgelayer}
		\draw [bend right=15] (0) to (2.center);
		\draw [bend left=15] (0) to (3.center);
		\draw [bend right=15] (1) to (4.center);
		\draw [bend left=15] (1) to (5.center);
		\draw [in=105, out=-75] (0) to (1);
		\draw [bend left=15] (1) to (8.center);
		\draw [bend right=15] (1) to (9.center);
		\draw [bend right=15] (0) to (7.center);
		\draw [bend left=15] (0) to (6.center);
		\draw [bend right=15] (15) to (16.center);
		\draw [bend left=15] (15) to (17.center);
		\draw [bend left=15] (15) to (18.center);
		\draw [bend right=15] (15) to (19.center);
		\draw [bend right=15] (23) to (30);
		\draw [bend right=15, looseness=0.75] (30) to (24.center);
		\draw (23) to (31);
		\draw (31) to (25.center);
		\draw [bend left=15] (23) to (32);
		\draw [bend left=15] (32) to (26.center);
		\draw [bend right] (23) to (33);
		\draw [bend right=15] (33) to (29.center);
		\draw (23) to (34);
		\draw (34) to (28.center);
		\draw [bend left=15] (23) to (35);
		\draw [bend left=15, looseness=1.25] (35) to (27.center);
		\draw [bend right] (38) to (40.center);
		\draw (39.center) to (38);
		\draw [bend left] (38) to (41.center);
		\draw [bend right] (38) to (44.center);
		\draw (43.center) to (38);
		\draw [bend left] (38) to (42.center);
		\draw (45.center) to (51);
		\draw (51) to (46.center);
		\draw (47.center) to (48.center);
		\draw (54.center) to (55.center);
		\draw (52.center) to (58);
		\draw (58) to (59);
		\draw (59) to (53.center);
		\draw (60.center) to (61);
		\draw (61) to (62);
		\draw [bend left] (62) to (63.center);
		\draw (62) to (64.center);
		\draw [bend right] (62) to (65.center);
		\draw (68.center) to (70);
		\draw [in=-165, out=75] (70) to (77);
		\draw (70) to (78);
		\draw [bend right] (70) to (79);
		\draw (79) to (73.center);
		\draw (78) to (72.center);
		\draw (77) to (71.center);
		\draw (82) to (83);
		\draw [bend left] (83) to (84.center);
		\draw (83) to (85.center);
		\draw [bend right] (83) to (86.center);
		\draw (91) to (94.center);
		\draw (92) to (95.center);
		\draw (93) to (96.center);
		\draw (99) to (100);
		\draw [bend right] (99) to (97.center);
		\draw [bend left] (99) to (98.center);
		\draw [bend left] (100) to (101.center);
		\draw [bend right] (100) to (102.center);
		\draw (103.center) to (109);
		\draw (111) to (107.center);
		\draw (112) to (108.center);
		\draw (110) to (104.center);
		\draw (109) to (112);
		\draw (110) to (111);
		\draw (109) to (111);
		\draw (110) to (112);
	\end{pgfonlayer}
\end{tikzpicture}
	\caption{A complete set of rewrite rules for the scalar-free stabilizer ZX-calculus. Each rule also holds with the colours or the directions reversed.}
	\label{fig:ZX-rules}
\end{figure}

\subsection{Measurement-based Quantum computation}

Measurement-based Quantum computation (MBQC) is a particularly interesting model of quantum computation with no classical analogue. In MBQC, one first constructs a highly entangled resource state that can be independent of the specific computation that one wants to perform (only depending on the `size' of the computation) by preparing qubits in the $\ket{+}$ state and applying $CZ$-gates to certain pairs of qubits. The computation then proceeds by performing single qubit measurements in a specified order. MBQC is a universal model for quantum computation -- any computation can be performed by choosing an appropriate resource state and then performing a certain combination of measurements on said state.

Measurement-based computations are traditionally expressed as \textit{measurement patterns}, which use a sequence of commands to describe how the resource state is constructed and how the computation proceeds \cite{danos_parsimonious_2005}.
As the resource states are graph states, a graphical representation of MBQC protocols can be more intuitive; we shall therefore introduce MBQC with ZX-diagrams.

\begin{definition}[\cite{Duncan_2009}]
A \emph{graph state diagram} is a ZX-diagram where each vertex is a (phase-free) green spider,
each edge connecting spiders has a Hadamard gate on it, and
there is a single output wire incident on each vertex.  
A ZX-diagram is in \emph{graph state with local Clifford (GS-LC) form} if it is a graph state up to single qubit Clifford operators on the input and output wires.
It is in \emph{reduced GS-LC (rGS-LC) form} if those single-qubit Clifford operators are all in the set $\{\phase{Z}{$\frac{k\pi}{2}$}, \phase{Z}{$\pm\frac{\pi}{2}$}\!\phase{X}{$\frac{\pi}{2}$}\}$ for some $k\in \ZZ_4$ and if no two qubits with red phases in their vertex operator are connected to each other.
\end{definition}

\begin{definition}\cite[Definitions~2.18, 2.23]{Backens_2021}
\label{def:MBQC-form}
A ZX-diagram is in \emph{MBQC-form} if it consists of a
graph state diagram in which each vertex of the graph may furthermore be connected to an input (in addition to its output), and a measurement effect instead of its output.
A ZX-diagram is in \emph{MBQC+LC-form} if it is in MBQC-form up to single qubit Clifford operators on the input and output wires.
\end{definition}

MBQC restricts the allowed single-qubit measurements to three planes of the Bloch sphere: those spanned by the eigenstates of two Pauli matrices, called the XY, YZ and XZ planes. Each time a qubit $u$ is measured in a plane $\lambda(u)$ at an angle $\alpha$, one may obtain either the desired outcome, denoted $\bra{+_{\lambda(u),\alpha}}$, or the undesired outcome $\bra{-_{\lambda(u), \alpha}} = \bra{+_{\lambda(u), \alpha+\pi}}$.
Measurements where the angle is an integer multiple of $\frac{\pi}{2}$ are Pauli measurements; the corresponding measurement type is denoted by simply $X$, $Y$, or $Z$.
The ZX-diagram corresponding to each (desired) measurement outcome is given in Table~\ref{tab:MBQC-ZX}.
The structure of an MBQC protocol is formalised as follows.

\begin{table}
	\centering	
	\renewcommand{\arraystretch}{1.5}
	\begin{tabular}{c||c|c|c|c|c|c|c|c|c|}
		operator & $\bra{+_{XY, \alpha}}_i$ & $\bra{+_{XZ, \alpha}}_i$ & $\bra{+_{YZ, \alpha}}_i$ & $\bra{+_{X,0}}_i$ & $\bra{+_{Y,0}}_i$ & $\bra{+_{Z,0}}_i$ & $\bra{+_{X,\pi}}_i$ & $\bra{+_{Y,\pi}}_i$ & $\bra{+_{Z,\pi}}_i$ \\ \hline
		diagram &  \begin{tikzpicture}[baseline=-0.05]
	\begin{pgfonlayer}{nodelayer}
		\node [style=none] (0) at (-0.25, 0) {};
		\node [style=Z phase dot] (1) at (0.25, 0) {$\alpha(i)$};
	\end{pgfonlayer}
	\begin{pgfonlayer}{edgelayer}
		\draw (0.center) to (1);
	\end{pgfonlayer}
\end{tikzpicture}  &  \begin{tikzpicture}[baseline=-0.05, scale=.8]
	\begin{pgfonlayer}{nodelayer}
		\node [style=none] (0) at (-0.25, 0) {};
		\node [style=Z phase dot] (1) at (0.25, 0) {$\frac{\pi}{2}$};
		\node [style=X phase dot] (2) at (1, 0) {$\alpha(i)$};
	\end{pgfonlayer}
	\begin{pgfonlayer}{edgelayer}
		\draw (0.center) to (1);
		\draw (1) to (2);
	\end{pgfonlayer}
\end{tikzpicture} &  \begin{tikzpicture}[baseline=-0.05]
	\begin{pgfonlayer}{nodelayer}
		\node [style=none] (0) at (-0.25, 0) {};
		\node [style=X phase dot] (1) at (0.25, 0) {$\alpha(i)$};
	\end{pgfonlayer}
	\begin{pgfonlayer}{edgelayer}
		\draw (0.center) to (1);
	\end{pgfonlayer}
\end{tikzpicture} &  \begin{tikzpicture}[baseline=-0.05]
	\begin{pgfonlayer}{nodelayer}
		\node [style=none] (0) at (-0.25, 0) {};
		\node [style=Z dot] (1) at (0.25, 0) {};
	\end{pgfonlayer}
	\begin{pgfonlayer}{edgelayer}
		\draw (0.center) to (1);
	\end{pgfonlayer}
\end{tikzpicture}  &  \begin{tikzpicture}[baseline=-0.05]
	\begin{pgfonlayer}{nodelayer}
		\node [style=none] (0) at (-0.25, 0) {};
		\node [style=Z phase dot] (1) at (0.25, 0) {$\frac{\pi}{2}$};
	\end{pgfonlayer}
	\begin{pgfonlayer}{edgelayer}
		\draw (0.center) to (1);
	\end{pgfonlayer}
\end{tikzpicture} &  \begin{tikzpicture}[baseline=-0.05]
	\begin{pgfonlayer}{nodelayer}
		\node [style=none] (0) at (-0.25, 0) {};
		\node [style=X dot] (1) at (0.25, 0) {};
	\end{pgfonlayer}
	\begin{pgfonlayer}{edgelayer}
		\draw (0.center) to (1);
	\end{pgfonlayer}
\end{tikzpicture} &  \begin{tikzpicture}[baseline=-0.05]
	\begin{pgfonlayer}{nodelayer}
		\node [style=none] (0) at (-0.25, 0) {};
		\node [style=Z phase dot] (1) at (0.25, 0) {$\pi$};
	\end{pgfonlayer}
	\begin{pgfonlayer}{edgelayer}
		\draw (1) to (0.center);
	\end{pgfonlayer}
\end{tikzpicture} &  \begin{tikzpicture}[baseline=-0.05]
	\begin{pgfonlayer}{nodelayer}
		\node [style=none] (0) at (-0.25, 0) {};
		\node [style=Z phase dot] (1) at (0.25, 0) {$-\frac{\pi}{2}$};
	\end{pgfonlayer}
	\begin{pgfonlayer}{edgelayer}
		\draw (1) to (0.center);
	\end{pgfonlayer}
\end{tikzpicture} &  \begin{tikzpicture}[baseline=-0.05]
	\begin{pgfonlayer}{nodelayer}
		\node [style=none] (0) at (-0.25, 0) {};
		\node [style=X phase dot] (1) at (0.25, 0) {$\pi$};
	\end{pgfonlayer}
	\begin{pgfonlayer}{edgelayer}
		\draw (1) to (0.center);
	\end{pgfonlayer}
\end{tikzpicture}\\ 
	\end{tabular}
	\renewcommand{\arraystretch}{1}
	\caption{MBQC measurement effects in Dirac notation and their corresponding ZX-diagrams}
	\label{tab:MBQC-ZX}
\end{table}

\begin{definition} 	
A \emph{labelled open graph} is a tuple $\Gamma = (G, I, O, \lambda)$,
where $G=(V,E)$ is a simple undirected graph, $I\subseteq V$ is a set of input vertices, $O \subseteq V$ is a set of output vertices, and $\lambda: V\setminus O \to \{X, Y, Z, XY, XZ, YZ\}$ assigns a measurement plane or Pauli measurement to each non-output vertex.
\end{definition}

In this paper, we consider \emph{stabilizer MBQC diagrams}: MBQC-form diagrams where every non-output qubit has a Pauli measurement applied to it, i.e.\ where $\lambda: V \setminus O \to \{X, Y, Z\}$.

\subsection{Pauli flow}
\label{s:pauli-flow}

Measurement-based computations are inherently probabilistic because measurements are probabilistic.
Computations can be made deterministic overall (up to Pauli corrections on the outputs) by tracking which measurements result in undesired outcomes and then correcting for these by adapting future measurements.
A sufficient (and in some cases necessary) condition for this to be possible on a given labelled open graph is \emph{Pauli flow}.
In the following, $\mathcal{P}(S)$ denotes the powerset of a set $S$.

\begin{definition}[{\cite[Definition 5]{Browne_2007}}]
A labelled open graph $(G, I, O, \lambda)$ has Pauli flow if there exists a map $p: V\setminus O \to \mathcal{P}(V\setminus I)$ and a partial order $\prec$ over V such that for all $u \in V\setminus O$,
\begin{enumerate}
    \item if $v\in p(u)$, $v \not = u$ and $\lambda(v)\not \in \{X,Y\}$, then $u \prec v$. \label{P1}
    \item if $v\in \mathrm{Odd}_G(p(u))$, $v \not = u$ and $\lambda(v)\not \in \{Y, Z\}$, then $u \prec v$. \label{P2}
    \item if $\lnot (u \prec v)$ and $\lambda(v) = Y$, then $v \in p(u) \Longleftrightarrow v \in \mathrm{Odd}_G(p(u))$. \label{P3}
    \item if $\lambda(u) = XY$, then $u\not \in p(u)$ and $u \in \mathrm{Odd}_G(p(u))$. \label{P4}
    \item if $\lambda(u) = XZ$, then $u \in p(u)$ and $u \in \mathrm{Odd}_G(p(u))$. \label{P5}
    \item if $\lambda(u) = YZ$, then $u\in p(u)$ and $u \not \in \mathrm{Odd}_G(p(u))$.\label{P6}
    \item if $\lambda(u) = X$, then $u \in \mathrm{Odd}_G(p(u))$. \label{P7}
    \item if $\lambda(u) = Z$, then $u \in p(u)$. \label{P8}
    \item if $\lambda(u) = Y$ then either $u \in p(u)$ and $u\not \in \mathrm{Odd}_G(p(u))$ or $u \not \in p(u)$ and $u \in \mathrm{Odd}_G(p(u))$. \label{P9}
\end{enumerate}
\end{definition}

Here, the partial order restricts the time order in which the qubits need to be measured.
The set $p(u)$ denotes qubits that are modified by Pauli-$X$ to compensate for an undesired measurement outcome on $u$, $\mathrm{Odd}_G(p(u))$ denotes the set of vertices that are modified by Pauli-$Z$.

Pauli flow is a sufficient condition for strong, stepwise and uniform determinism: this means all branches of the computation should implement the same linear operator up to a phase, any interval of the computation should be deterministic on its own, and the computation should be deterministic for all choices of measurement angles that satisfy $\ld$ \cite[p.~5]{Browne_2007}.
Pauli flow (and related flow conditions) are particularly interesting from a ZX-calculus perspective as there are polynomial-time algorithms for extracting circuits from MBQC-form ZX-diagrams with flow \cite{Duncan_2020,Backens_2021,Simmons_2021}, while circuit extraction from general ZX-diagrams is \#P-hard \cite{deBeaudrap_2022}.

\subsection{Existing flow-preserving rewrite rules}
\label{s:existing-rules}

The basic ZX-calculus rewrite rules in Figure~\ref{fig:ZX-rules} do not generally preserve even the MBQC-form structure of a ZX-calculus diagram.
Yet there are some more complex derived rewrite rules that are known to preserve both the MBQC-form structure and the existence of a flow.
These rules were previously considered in the context of gflow \cite{Duncan_2020} and extended gflow \cite{Backens_2021}; the Pauli-flow preservation proofs are due to \cite{Simmons_2021}.
The simplest of these rules is $Z$-deletion:

\begin{lemma}[{\cite[Lemma D.6]{Simmons_2021}}]
\label{lem:Z-delete}
Deleting a $Z$-measured vertex preserves the existence of Pauli flow.
\begin{center}
    \begin{tikzpicture}[scale=0.5]
	\begin{pgfonlayer}{nodelayer}
		\node [style=Z dot] (0) at (-3.25, 0) {};
		\node [style=Z dot] (1) at (-4.25, 1.25) {};
		\node [style=Z dot] (2) at (-4.25, -1.25) {};
		\node [style=Z dot] (3) at (-2.25, -1.25) {};
		\node [style=Z dot] (4) at (-2.25, 1.25) {};
		\node [style=none] (5) at (-1, 2) {};
		\node [style=none] (6) at (-1, 0.5) {};
		\node [style=none] (7) at (-1, -0.5) {};
		\node [style=none] (8) at (-1, -2) {};
		\node [style=none] (9) at (-5.75, 2) {};
		\node [style=none] (10) at (-5.75, 0.5) {};
		\node [style=none] (11) at (-5.75, -0.5) {};
		\node [style=none] (12) at (-5.75, -2) {};
		\node [style=none] (13) at (-5.75, 1.25) {$\vdots$};
		\node [style=none] (15) at (-1, -1.25) {$\vdots$};
		\node [style=none] (16) at (-1, 1.25) {$\vdots$};
		\node [style=none] (17) at (-5.75, -1.25) {$\vdots$};
		\node [style=X phase dot] (18) at (-3.25, 1.5) {$a\pi$};
		\node [style=none] (19) at (0, 0) {=};
		\node [style=none] (37) at (-4.25, 0.25) {$\vdots$};
		\node [style=none] (38) at (-2.25, 0.25) {$\vdots$};
		\node [style=Z phase dot] (42) at (2.25, 1.25) {$a\pi$};
		\node [style=Z phase dot] (43) at (2.25, -1.25) {$a\pi$};
		\node [style=Z phase dot] (44) at (4.25, -1.25) {$a\pi$};
		\node [style=Z phase dot] (45) at (4.25, 1.25) {$a\pi$};
		\node [style=none] (46) at (5.5, 2) {};
		\node [style=none] (47) at (5.5, 0.5) {};
		\node [style=none] (48) at (5.5, -0.5) {};
		\node [style=none] (49) at (5.5, -2) {};
		\node [style=none] (50) at (0.75, 2) {};
		\node [style=none] (51) at (0.75, 0.5) {};
		\node [style=none] (52) at (0.75, -0.5) {};
		\node [style=none] (53) at (0.75, -2) {};
		\node [style=none] (54) at (0.75, 1.25) {$\vdots$};
		\node [style=none] (55) at (5.5, -1.25) {$\vdots$};
		\node [style=none] (56) at (5.5, 1.25) {$\vdots$};
		\node [style=none] (57) at (0.75, -1.25) {$\vdots$};
		\node [style=none] (59) at (2.25, 0.25) {$\vdots$};
		\node [style=none] (60) at (4.25, 0.25) {$\vdots$};
	\end{pgfonlayer}
	\begin{pgfonlayer}{edgelayer}
		\draw (1) to (9.center);
		\draw (1) to (10.center);
		\draw (2) to (11.center);
		\draw (2) to (12.center);
		\draw (3) to (8.center);
		\draw (3) to (7.center);
		\draw (4) to (6.center);
		\draw (4) to (5.center);
		\draw (18) to (0);
		\draw [style=Hadamard edge] (1) to (0);
		\draw [style=Hadamard edge] (0) to (4);
		\draw [style=Hadamard edge] (0) to (3);
		\draw [style=Hadamard edge] (0) to (2);
		\draw (42) to (50.center);
		\draw (42) to (51.center);
		\draw (43) to (52.center);
		\draw (43) to (53.center);
		\draw (44) to (49.center);
		\draw (44) to (48.center);
		\draw (45) to (47.center);
		\draw (45) to (46.center);
	\end{pgfonlayer}
\end{tikzpicture}
\end{center}
\end{lemma}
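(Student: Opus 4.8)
The plan is to show that from any Pauli flow $(p, \prec)$ on the original labelled open graph $\Gamma = (G, I, O, \lambda)$ one can build a Pauli flow $(p', \prec')$ on the graph $\Gamma' = (G', I, O, \lambda')$ obtained by deleting the $Z$-measured vertex, which I will call $d$; here $G'$ is $G$ with $d$ and its incident edges removed, and $\lambda', \prec'$ are the restrictions of $\lambda, \prec$ to $V \setminus \{d\}$. The measurement angles never enter the flow conditions, so the phases $a\pi$ appearing on the neighbours in the diagram are irrelevant to flow preservation, and the underlying diagram equality is a routine consequence of the rules in Figure~\ref{fig:ZX-rules}; the real content is the construction of $p'$. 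First I would record two structural facts. Since $\lambda(d) = Z$, condition~\ref{P8} gives $d \in p(d)$, and as $p(d) \subseteq V \setminus I$ this already shows $d \notin I$, so deletion leaves $I, O \subseteq V \setminus \{d\}$. Moreover, since $Z \notin \{X, Y\}$, condition~\ref{P1} applied to any $v \neq d$ with $d \in p(v)$ forces $v \prec d$; this single ordering fact is the engine of the whole argument.

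Next I would define the corrected flow by $p'(v) = p(v)$ when $d \notin p(v)$, and $p'(v) = p(v) \symdiff p(d)$ when $d \in p(v)$. Because $d \in p(d)$, taking the symmetric difference with $p(d)$ removes $d$ from every correction set, so $p'$ maps into $\mathcal{P}((V \setminus \{d\}) \setminus I)$. The two computational identities used throughout are that $\mathrm{Odd}$ is linear over symmetric difference, $\mathrm{Odd}_G(A \symdiff B) = \mathrm{Odd}_G(A) \symdiff \mathrm{Odd}_G(B)$, and that deleting $d$ only removes $d$ itself from odd neighbourhoods, i.e.\ $\mathrm{Odd}_{G'}(S) = \mathrm{Odd}_G(S) \setminus \{d\}$ for $S \subseteq V \setminus \{d\}$.

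With these in hand I would check conditions~\ref{P1}--\ref{P9} for $(p', \prec')$. The order conditions~\ref{P1} and~\ref{P2} are the easy direction: for a vertex $v$ with $d \in p(v)$, any new element of $p'(v)$ or $\mathrm{Odd}_{G'}(p'(v))$ either already lay in $p(v)$ (resp.\ $\mathrm{Odd}_G(p(v))$), where the original condition applies directly, or comes from $p(d)$ (resp.\ $\mathrm{Odd}_G(p(d))$), where conditions~\ref{P1}/\ref{P2} for $d$ supply $d \prec (\cdot)$; combined with $v \prec d$ and transitivity this gives the required $v \prec' (\cdot)$. The genuinely delicate cases are the self-conditions~\ref{P4}--\ref{P9} and the $Y$-condition~\ref{P3}, and I expect these to be the main obstacle. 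Here I would exploit that for $v \prec d$ the contrapositives of~\ref{P1} and~\ref{P2} for $d$ control membership of $v$ in $p(d)$ and $\mathrm{Odd}_G(p(d))$: whenever $\lambda(v) \notin \{X, Y\}$ then $v \notin p(d)$, and whenever $\lambda(v) \notin \{Y, Z\}$ then $v \notin \mathrm{Odd}_G(p(d))$. This makes the $X$-, $Z$-, $XY$-, $XZ$- and $YZ$-cases collapse, since the relevant memberships of $v$ in the added set $p(d)$ (or in its odd neighbourhood) vanish and the self-condition is inherited unchanged from $(p, \prec)$.

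The one case where both memberships can survive is $\lambda(v) = Y$, and this is exactly where condition~\ref{P3} earns its keep: since $v \prec d$ implies $\lnot(d \prec v)$, condition~\ref{P3} for $d$ gives $v \in p(d) \Leftrightarrow v \in \mathrm{Odd}_G(p(d))$, so the two contributions of $p(d)$ cancel in the parity and~\ref{P9} is preserved. For condition~\ref{P3} itself I would fix $x$ with $d \in p(x)$ (hence $x \prec d$) and a $Y$-measured $v$ with $\lnot(x \prec v)$; transitivity forces $\lnot(d \prec v)$, so~\ref{P3} for $d$ again supplies $v \in p(d) \Leftrightarrow v \in \mathrm{Odd}_G(p(d))$, and together with~\ref{P3} for $x$ the added terms cancel to give $v \in p'(x) \Leftrightarrow v \in \mathrm{Odd}_{G'}(p'(x))$. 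All remaining instances, namely those with $d \notin p(v)$, are immediate because the correction set is unchanged and removing $d$ from an odd neighbourhood cannot affect membership of any $v \neq d$. This exhausts the conditions and establishes that $(p', \prec')$ is a Pauli flow on $\Gamma'$.
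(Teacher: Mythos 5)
The paper does not actually prove this lemma---it is imported verbatim as a citation to Simmons \& Kissinger, so there is no in-paper argument to compare against. Your blind proof is correct and is the standard argument for this kind of statement: the key move of setting $p'(v) = p(v) \symdiff p(d)$ whenever $d \in p(v)$, the observation that condition~\ref{P1} forces $v \prec d$ for every such $v$ (which lets you chain through $d$ for the order conditions and invoke the contrapositives of~\ref{P1}/\ref{P2} for $d$ to kill the unwanted memberships in the self-conditions), and the use of condition~\ref{P3} at $d$ to make the two $p(d)$-contributions cancel in the $Y$ cases are exactly the ingredients of the cited proof. The supporting identities you rely on ($\mathrm{Odd}$ linear over $\symdiff$, and $\mathrm{Odd}_{G'}(S) = \mathrm{Odd}_G(S)\setminus\{d\}$ for $S \subseteq V\setminus\{d\}$) are both correct, and I found no gap in the case analysis.
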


Other rewrite rules are based around quantum generalisations of two graph-theoretic operations.

\begin{definition}
Let $G=(V,E)$ be a graph and $u\in V$. The \emph{local complementation of $G$ about $u$} is the operation which maps $G$ to $G\star u := (V, E\symdiff \{(b,c) | (b,u),(c,u) \in E \text{ and } b\not = c\})$, where $\symdiff$ is the symmetric difference operator given by $A\symdiff B = (A\cup B)\setminus(A\cap B)$.
 The \emph{pivot of $G$ about the edge $(u,v)$} is the operation mapping $G$ to the graph $G \wedge uv := G\star u \star v \star u$.
\end{definition}

Local complementation keeps the vertices of the graph the same but toggles some edges: for each pair of neighbours of $u$, i.e. $v, v' \in N_G(u)$, there is an edge connecting $v$ and $v'$ in $G\star u$ if and only if there is no edge connecting $v$ and $v'$ in $G$.
Pivoting is a series of three local complementations, but has some special properties which make it worth distinguishing.
It interchanges the vertices $u$ and $v$ and complements (or `toggles') the connectivity between the following three subsets of vertices \cite[Section~8]{Bouchet_1988}:

\begin{itemize}
	\item $N_G(u)\setminus (\{v\}\cup N_G(v))$, the neighbours of $u$ that are neither neighbours of $v$ nor $v$ itself.
	\item $N_G(v)\setminus (\{u\} \cup N_G(u))$, the neighbours of $v$ that are neither neighbours of $u$ nor $u$ itself.
	\item $N_G(u)\cap N_G(v)$, the common neighbours of $u$ and $v$.
\end{itemize}
From the above characterisation we see that pivoting is symmetric, i.e.\ $G\wedge uv = G\wedge vu$.

Both local complementation and pivoting give rise to operations on MBQC-form diagrams which preserve the MBQC form as well as the existence of Pauli flow (after some simple merging of single-qubit Cliffords into measurement effects, cf.~\cite[Section~4.2]{Backens_2021}).
We illustrate the operations with examples as they are difficult to express in ZX-calculus in generality.

\begin{lemma}[{\cite[Lemma D.12]{Simmons_2021}}]
 \label{lem:local-complementation}
 A local complementation about a vertex $u$ preserves the existence of Pauli flow.
 \begin{center}
      \begin{tikzpicture}[xscale=0.4,yscale=0.3]
	\begin{pgfonlayer}{nodelayer}
		\node [style=Z dot] (0) at (2, 3) {};
		\node [style=Z dot] (1) at (-1, 1) {};
		\node [style=Z dot] (2) at (1, -1) {};
		\node [style=Z dot] (3) at (-1, -3) {};
		\node [style=none] (4) at (4, 3) {};
		\node [style=none] (5) at (4, 1) {};
		\node [style=none] (6) at (4, -1) {};
		\node [style=none] (7) at (4, -3) {};
		\node [style=none] (8) at (1, -1.9) {$\vdots$};
		\node [style=none] (9) at (1, 3) {u};
		\node [style=none] (10) at (5, 0) {=};
	\end{pgfonlayer}
	\begin{pgfonlayer}{edgelayer}
		\draw [style=Hadamard edge] (0) to (3);
		\draw [style=Hadamard edge] (1) to (3);
		\draw [style=Hadamard edge] (1) to (0);
		\draw [style=Hadamard edge] (0) to (2);
		\draw (0) to (4.center);
		\draw (1) to (5.center);
		\draw (2) to (6.center);
		\draw (3) to (7.center);
	\end{pgfonlayer}
\end{tikzpicture}  \begin{tikzpicture}[xscale=0.4,yscale=0.3]
	\begin{pgfonlayer}{nodelayer}
		\node [style=Z dot] (0) at (0, 3) {};
		\node [style=Z dot] (1) at (-3, 1) {};
		\node [style=Z dot] (2) at (-1, -1) {};
		\node [style=Z dot] (3) at (-3, -3) {};
		\node [style=none] (4) at (4, 3) {};
		\node [style=none] (5) at (4, 1) {};
		\node [style=none] (6) at (4, -1) {};
		\node [style=none] (7) at (4, -3) {};
		\node [style=none] (8) at (-1, -1.9) {$\vdots$};
		\node [style=X phase dot] (9) at (2, 3) {$-\frac{\pi}{2}$};
		\node [style=Z phase dot] (10) at (2, 1) {$\frac{\pi}{2}$};
		\node [style=Z phase dot] (11) at (2, -1) {$\frac{\pi}{2}$};
		\node [style=Z phase dot] (12) at (2, -3) {$\frac{\pi}{2}$};
		\node [style=none] (15) at (-1, 3) {u};
	\end{pgfonlayer}
	\begin{pgfonlayer}{edgelayer}
		\draw [style=Hadamard edge] (0) to (3);
		\draw [style=Hadamard edge] (1) to (0);
		\draw [style=Hadamard edge] (0) to (2);
		\draw (0) to (9);
		\draw (9) to (4.center);
		\draw (1) to (10);
		\draw (10) to (5.center);
		\draw (2) to (11);
		\draw (11) to (6.center);
		\draw (3) to (12);
		\draw (12) to (7.center);
		\draw [style=Hadamard edge] (1) to (2);
		\draw [style=Hadamard edge] (2) to (3);
	\end{pgfonlayer}
\end{tikzpicture}
 \end{center}
\end{lemma}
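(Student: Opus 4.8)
The plan is to establish flow preservation by explicitly constructing a Pauli flow for the locally complemented labelled open graph out of the given one. A semantic argument does not suffice here: the existence of Pauli flow is a property of the graph together with its labelling, not of the underlying linear map alone, so knowing that the two diagrams are equal up to scalar says nothing about flow by itself. Concretely, I would assume $(G,I,O,\lambda)$ has Pauli flow $(p,\prec)$ and produce a Pauli flow $(p',\prec')$ for $(G\star u, I, O, \lambda')$, where $\lambda'$ is the labelling induced by the rewrite, then check the nine defining conditions P1--P9.

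Two technical ingredients drive the construction. The first is the label update rule. Local complementation about $u$ is realised on the underlying graph state by the local Cliffords $X_{-\frac\pi2}$ on $u$ and $Z_{\frac\pi2}$ on each neighbour of $u$ --- exactly the operators appearing on the right-hand side of the figure --- so $\lambda'$ is obtained by conjugating the measurement effects by these Cliffords. On Pauli labels this gives, for $u$, the assignment $X\mapsto X$, $Y\mapsto Z$, $Z\mapsto Y$, and for each $v\in N_G(u)$ the assignment $X\mapsto Y$, $Y\mapsto X$, $Z\mapsto Z$, with the planar labels transforming by the corresponding Bloch-sphere rotations and all remaining labels unchanged. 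The second ingredient is the transformation of odd neighbourhoods. Writing $\mathbf n = \mathbf 1_{N_G(u)}$ and working over $GF(2)$, the adjacency matrix satisfies $\Gamma' = \Gamma + \mathbf n\mathbf n^{\top} + \mathrm{diag}(\mathbf n)$, so that for every $A\subseteq V$,
\[
 \mathrm{Odd}_{G\star u}(A) = \mathrm{Odd}_G(A)\ \symdiff\ \big(\abs{N_G(u)\cap A}\bmod 2\big)\, N_G(u)\ \symdiff\ \big(N_G(u)\cap A\big),
\]
where the middle term is $N_G(u)$ when $\abs{N_G(u)\cap A}$ is odd and empty otherwise. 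This isolates the fact that every vertex newly entering an odd neighbourhood lies in $N_G(u)$.

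I would then keep the order unchanged, $\prec'=\prec$ (local complementation, unlike pivoting, interchanges no vertices), and define $p'$ by targeted corrections of $p$. The guiding observation is that $N_G(u) = \mathrm{Odd}_G(\{u\})$, so by $GF(2)$-linearity of $\mathrm{Odd}$, replacing a correction set $p(v)$ by $p(v)\symdiff\{u\}$ toggles precisely the spurious $N_G(u)$-contribution identified above. Accordingly I would set $p'(v)=p(v)\symdiff\{u\}$ on those $v$ for which $\abs{N_G(u)\cap p(v)}$ is odd, and further symmetric-difference with $p(u)$ where needed to repair the self-membership conditions of $u$ and of the relabelled neighbours; the set $p(u)$ acts as a gadget that can cancel unwanted membership facts because its own relations to $u$ are fixed by P4--P9. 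Throughout one must check $p'(v)\subseteq V\setminus I$, which is the point at which the special status of input vertices (in particular whether $u\in I$, since then $\{u\}$ may not be added freely) has to be handled.

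Finally I would verify P1--P9 for $(p',\prec',\lambda')$ on $G\star u$ by substituting the odd-neighbourhood formula and the label update rule. I expect the case analysis to be the main obstacle: each condition must be re-examined for the three vertex classes --- $u$, its neighbours, and the rest --- and the relabelling changes which condition governs a vertex (for instance an $X$-measured neighbour becomes $Y$-measured, so the order-forcing conditions P1 and P2 give way to the biconditional P3). The genuinely delicate points are the $Y$-measured vertices, whose biconditionals P3 and P9 couple membership in $p'(\cdot)$ with membership in $\mathrm{Odd}_{G\star u}(p'(\cdot))$ and so must be checked in tandem, and the requirement that $\prec$ remain a valid order: every vertex that the modification newly inserts into some $p'(v)$ or $\mathrm{Odd}_{G\star u}(p'(v))$ must already be forced above $v$ by an existing constraint, which is precisely where the label changes (turning order-forcing conditions into vacuous biconditionals) are what make the argument close. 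An alternative route would be to phrase Pauli flow as the solvability of a linear order-respecting system and argue that local complementation transforms that system by an invertible congruence; this can sidestep the case analysis but requires setting up the corresponding matrix characterisation first.
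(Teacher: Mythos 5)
The paper does not actually prove this lemma: it is imported by citation from Simmons \& Kissinger \cite{Simmons_2021} (their Lemma~D.12), so there is no in-paper proof to compare against. Your overall strategy --- construct $(p',\prec')$ explicitly from $(p,\prec)$, obtain the new labelling by conjugating the measurement effects through the local Cliffords $X_{-\frac{\pi}{2}}$ on $u$ and $Z_{\frac{\pi}{2}}$ on $N_G(u)$, keep the partial order, and use the $\mathbb{Z}_2$-linear identity
$\mathrm{Odd}_{G\star u}(A)=\mathrm{Odd}_G(A)\symdiff\bigl(\abs{N_G(u)\cap A}\bmod 2\bigr)N_G(u)\symdiff\bigl(N_G(u)\cap A\bigr)$
--- is exactly the approach of the cited proof, and both technical ingredients you isolate (the label permutations $X\mapsto X$, $Y\leftrightarrow Z$ on $u$ and $X\leftrightarrow Y$, $Z\mapsto Z$ on its neighbours, and the adjacency-matrix identity $\Gamma'=\Gamma+\mathbf{n}\mathbf{n}^{\top}+\mathrm{diag}(\mathbf{n})$) are correct as stated.

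As a proof, however, the proposal has a genuine gap: the new correction function $p'$ is never actually defined. ``Symmetric-difference with $p(u)$ where needed to repair the self-membership conditions'' is precisely where all the difficulty lives --- the cited proof makes the choice of when to toggle $\{u\}$ and when to toggle $p(u)$ (or $p(u)\symdiff\{u\}$) depend on a case split over $u\in p(v)$, $u\in\mathrm{Odd}_G(p(v))$ and the parity of $\abs{N_G(u)\cap p(v)}$, and getting that split right is what makes conditions \ref{P3}--\ref{P9} close simultaneously, especially for $Y$-measured vertices where membership in $p'(v)$ and in $\mathrm{Odd}_{G\star u}(p'(v))$ are coupled. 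You likewise defer the entire verification of \ref{P1}--\ref{P9}, and you flag but do not resolve the obstruction that $\{u\}$ cannot be added to any correction set when $u\in I$. So the plan points in the right direction and would likely succeed if executed, but as written it is a strategy rather than a proof; completing it requires writing down $p'$ exhaustively and checking each condition for the three vertex classes ($u$, $N_G(u)$, and the rest), which is the bulk of the argument in \cite{Simmons_2021}.
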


\begin{lemma}[{\cite[Lemma D.21]{Simmons_2021}}]
 \label{lem:pivot}
 A pivot about an edge $(u,v)$ preserves the existence of Pauli flow.
\begin{center}
    \begin{tikzpicture}[scale=0.4]
	\begin{pgfonlayer}{nodelayer}
		\node [style=Z dot] (0) at (2, 3) {};
		\node [style=Z dot] (1) at (2, -3) {};
		\node [style=Z dot] (2) at (0, -1) {};
		\node [style=Z dot] (3) at (0, 1) {};
		\node [style=Z dot] (4) at (-2.5, 3.25) {};
		\node [style=Z dot] (5) at (-4, 2) {};
		\node [style=Z dot] (6) at (-4, -2) {};
		\node [style=Z dot] (7) at (-2.5, -3.25) {};
		\node [style=none] (8) at (-4.5, -2.5) {};
		\node [style=none] (9) at (-3, -3.75) {};
		\node [style=none] (10) at (-4.5, 2.5) {};
		\node [style=none] (11) at (-3, 3.75) {};
		\node [style=none] (12) at (5, 1) {};
		\node [style=none] (13) at (5, -1) {};
		\node [style=none] (14) at (5, 3) {};
		\node [style=none] (15) at (5, -3.25) {};
		\node [style=none] (16) at (6, 0) {=};
		\node [style=none] (17) at (2.5, 3.5) {$v$};
		\node [style=none] (18) at (2.5, -3.5) {$u$};
	\end{pgfonlayer}
	\begin{pgfonlayer}{edgelayer}
		\draw [style=Hadamard edge] (5) to (7);
		\draw [style=Hadamard edge] (6) to (4);
		\draw [style=Hadamard edge] (4) to (0);
		\draw [style=Hadamard edge] (5) to (0);
		\draw [style=Hadamard edge] (7) to (1);
		\draw [style=Hadamard edge] (6) to (1);
		\draw [style=Hadamard edge] (2) to (1);
		\draw [style=Hadamard edge] (2) to (0);
		\draw [style=Hadamard edge] (3) to (1);
		\draw [style=Hadamard edge] (0) to (1);
		\draw [style=Hadamard edge] (3) to (0);
		\draw [style=Hadamard edge] (2) to (4);
		\draw [style=Hadamard edge] (3) to (5);
		\draw [style=Hadamard edge] (3) to (6);
		\draw [style=Hadamard edge] (2) to (7);
		\draw (10.center) to (5);
		\draw (11.center) to (4);
		\draw (6) to (8.center);
		\draw (7) to (9.center);
		\draw (2) to (13.center);
		\draw (3) to (12.center);
		\draw [bend right=13] (0) to (15.center);
		\draw [bend left=13] (1) to (14.center);
	\end{pgfonlayer}
\end{tikzpicture} \begin{tikzpicture}[scale=0.4]
	\begin{pgfonlayer}{nodelayer}
		\node [style=Z dot] (0) at (2, 3) {};
		\node [style=Z dot] (1) at (2, -3) {};
		\node [style=Z dot] (2) at (0, -1) {};
		\node [style=Z dot] (3) at (0, 1) {};
		\node [style=Z dot] (4) at (-2.5, 3.25) {};
		\node [style=Z dot] (5) at (-4, 2) {};
		\node [style=Z dot] (6) at (-4, -2) {};
		\node [style=Z dot] (7) at (-2.5, -3.25) {};
		\node [style=none] (8) at (-4.5, -2.5) {};
		\node [style=none] (9) at (-3, -3.75) {};
		\node [style=none] (10) at (-4.5, 2.5) {};
		\node [style=none] (11) at (-3, 3.75) {};
		\node [style=none] (12) at (5, 1) {};
		\node [style=none] (13) at (5, -1) {};
		\node [style=none] (14) at (5, 3) {};
		\node [style=H gate] (16) at (3.5, 3) {};
		\node [style=H gate] (17) at (3.5, -3) {};
		\node [style=none] (18) at (2.5, 3.5) {$u$};
		\node [style=none] (19) at (2.5, -3.5) {$v$};
		\node [style=Z phase dot] (20) at (3, 1) {$\pi$};
		\node [style=Z phase dot] (21) at (3, -1) {$\pi$};
		\node [style=none] (22) at (5, -3) {};
	\end{pgfonlayer}
	\begin{pgfonlayer}{edgelayer}
		\draw [style=Hadamard edge] (4) to (0);
		\draw [style=Hadamard edge] (5) to (0);
		\draw [style=Hadamard edge] (7) to (1);
		\draw [style=Hadamard edge] (6) to (1);
		\draw [style=Hadamard edge] (2) to (1);
		\draw [style=Hadamard edge] (2) to (0);
		\draw [style=Hadamard edge] (3) to (1);
		\draw [style=Hadamard edge] (0) to (1);
		\draw [style=Hadamard edge] (3) to (0);
		\draw (10.center) to (5);
		\draw (11.center) to (4);
		\draw (6) to (8.center);
		\draw (7) to (9.center);
		\draw (1) to (17);
		\draw (0) to (16);
		\draw (16) to (14.center);
		\draw [style=Hadamard edge] (5) to (6);
		\draw [style=Hadamard edge] (4) to (7);
		\draw [style=Hadamard edge] (6) to (2);
		\draw [style=Hadamard edge] (7) to (3);
		\draw [style=Hadamard edge] (5) to (2);
		\draw [style=Hadamard edge] (4) to (3);
		\draw (3) to (20);
		\draw (20) to (12.center);
		\draw (2) to (21);
		\draw (21) to (13.center);
		\draw (17) to (22.center);
	\end{pgfonlayer}
\end{tikzpicture}
\end{center}
\end{lemma}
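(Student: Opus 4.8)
The plan is to leverage the fact, already recorded just above the statement, that pivoting is by definition a composite of three local complementations: $G \wedge uv := G \star u \star v \star u$. Since Lemma~\ref{lem:local-complementation} establishes that a single local complementation preserves the existence of Pauli flow, and since the existence of Pauli flow is a property preserved under composition of such operations, it follows that the pivot preserves it too. In other words, I would reduce the pivot lemma to three successive applications of the local-complementation lemma.

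Concretely, I would start from a labelled open graph $\Gamma = (G, I, O, \lambda)$ carrying a Pauli flow and apply Lemma~\ref{lem:local-complementation} to the vertex $u$, obtaining that $G \star u$ still has Pauli flow. Before invoking the lemma a second time I must check that $(G\star u)$ is again a valid labelled open graph of the required kind: the local complementation introduces an $X_{-\frac{\pi}{2}}$ rotation on $u$ and a $Z_{\frac{\pi}{2}}$ rotation on each of its neighbours, and one verifies that, once merged into the measurement effects, these Cliffords send Pauli labels to Pauli labels (for example $Z_{\frac{\pi}{2}}$ carries an $X$-measurement to a $Y$-measurement and fixes a $Z$-measurement). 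Hence the hypotheses of Lemma~\ref{lem:local-complementation} are met again, and I apply it to $v$ in $G\star u$, and finally to $u$ in $G \star u \star v$. Each step preserves the existence of Pauli flow, so the composite $G \wedge uv$ does as well. It then remains to confirm the bookkeeping: that the net change to the underlying graph is exactly the interchange of $u$ and $v$ together with the toggling of connectivity between the three subsets $N_G(u)\setminus(\{v\}\cup N_G(v))$, $N_G(v)\setminus(\{u\}\cup N_G(u))$ and $N_G(u)\cap N_G(v)$, and that the accumulated single-qubit Cliffords combine to give precisely the Hadamards on the $u$ and $v$ output wires and the $Z_\pi$ phases on the common-neighbour wires shown on the right-hand side of the statement. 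The graph-theoretic part of this is the standard characterisation of pivoting \cite[Section~8]{Bouchet_1988}.

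The main obstacle I anticipate is not the flow argument itself, which is immediate once pivoting is read as three local complementations, but rather the verification that the intermediate objects remain within the intended fragment at every step (so that Lemma~\ref{lem:local-complementation} is genuinely applicable three times in succession) and the reconciliation of the accumulated local Cliffords with the decorations drawn in the stated pivot diagram. As an alternative that sidesteps the intermediate-diagram issue, one could instead give a fully self-contained proof: fix a Pauli flow $(p, \prec)$ on $G$ and construct an explicit flow $(p', \prec')$ on $G \wedge uv$, then check conditions~P1--P9 directly. This avoids reasoning about the three stages but replaces it with a longer case analysis organised around the swapped roles of $u$ and $v$ and the corresponding relabelling of the Pauli measurements; I would therefore prefer the compositional route above and only fall back on the direct construction if the Clifford bookkeeping proves awkward to state cleanly.
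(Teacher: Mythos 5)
Your compositional route is sound, but note that the paper does not actually prove this lemma itself: it is imported verbatim from \cite[Lemma~D.21]{Simmons_2021}, so there is no in-paper argument to match. Comparing your proposal with the cited source: Simmons et al.\ prove pivot-preservation \emph{directly}, by writing down an explicit new correction function and partial order for $G\wedge uv$ and checking conditions P1--P9, whereas you reduce to three applications of Lemma~\ref{lem:local-complementation} via $G\wedge uv = G\star u\star v\star u$. Your reduction is legitimate for the statement as used in this paper (only the \emph{existence} of Pauli flow is needed, and existence-preservation composes trivially), and it is consistent with how the paper itself reasons in Observation~\ref{obs:inverses}. The two checks you flag are the genuine content and both go through: (i) each intermediate object is again a labelled open graph of the right kind, because the local Cliffords introduced by a local complementation permute the six measurement labels (the $X_{-\pi/2}$ on the complemented vertex swaps $XY\leftrightarrow XZ$ and $X\leftrightarrow Y$ while fixing $YZ$ and $Z$; the $Z_{\pi/2}$ on neighbours swaps $XZ\leftrightarrow YZ$ and $X\leftrightarrow Y$ while fixing $XY$ and $Z$), so Lemma~\ref{lem:local-complementation} applies at every stage; (ii) the accumulated Cliffords reduce to Hadamards on $u,v$ and $\pi$-phases on $N_G(u)\cap N_G(v)$, which is the standard computation behind the pivot rule (the edge-toggling bookkeeping is \cite[Section~8]{Bouchet_1988}, and the Clifford bookkeeping is the usual ZX derivation of pivot from local complementation). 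What the direct construction buys instead is an explicit formula for the new flow, which matters if one wants to track correction sets through rewriting rather than merely assert existence; for the completeness argument of this paper, your shorter route suffices.
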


\begin{observation} \label{obs:inverses}
	Lemmas~\ref{lem:local-complementation} and \ref{lem:pivot} provide their own inverses since four successive local complementations about the same vertex, or two successive pivots about the same edge, leave the diagram invariant.
	Two successive local complementations correspond to the $\pi$-copy rule.
\end{observation}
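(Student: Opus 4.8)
The plan is to separate the purely combinatorial action of each rewrite on the underlying open graph from the single-qubit Clifford decorations (phases and Hadamards) that the rewrites of Lemmas~\ref{lem:local-complementation} and~\ref{lem:pivot} leave on the wires, show that each part has finite order, and take the least common multiple. Four emerges for local complementation and two for the pivot precisely because the graph operation and the decoration have different orders.

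First I would treat the graph operations. Local complementation about $u$ only toggles edges among the neighbours of $u$ and never alters $N_G(u)$ itself; hence applying $\star u$ a second time toggles exactly the same edge set back, giving $G\star u\star u = G$, so $\star u$ is an involution. For pivoting I would expand $G\wedge uv\wedge uv = G\star u\star v\star u\star u\star v\star u$ and cancel adjacent repeated stars using the involution just proved: the central $\star u\star u$ collapses, then the resulting adjacent $\star v\star v$ collapses, and finally the outer $\star u\star u$ collapses, leaving $G$. Thus $\wedge uv$ is also an involution, and in particular four local complementations and two pivots both fix the underlying graph.

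Next I would track the decorations. By Lemma~\ref{lem:local-complementation}, one local complementation about $u$ deposits an $X_{-\frac{\pi}{2}}$ on the wire of $u$ and a $Z_{\frac{\pi}{2}}$ on the wire of every $w\in N_G(u)$; because $N_G(u)$ is unchanged, successive applications simply add these phases. Four applications therefore accumulate $X_{-2\pi}\cong X_0$ on $u$ and $Z_{2\pi}\cong Z_0$ on each neighbour, i.e.\ no decoration at all, so together with the graph involution the whole diagram is returned to itself. By Lemma~\ref{lem:pivot}, one pivot about $(u,v)$ deposits a Hadamard on each of the $u$- and $v$-wires and a $Z_\pi$ on every common neighbour in $N_G(u)\cap N_G(v)$. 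Since this common-neighbour set is symmetric in $u,v$ and is preserved by the pivot, a second pivot deposits the same decorations; as $H^2\cong\mathrm{id}$ and $Z_\pi Z_\pi = Z_{2\pi}\cong Z_0$, all decorations cancel and the diagram is again invariant.

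Finally, for the $\pi$-copy remark and the ``own inverse'' conclusion: after exactly two local complementations the graph is unchanged but $u$ carries $X_\pi\cong X_{-\pi}$ and each neighbour carries $Z_\pi$, and this is precisely an instance of the $\pi$-copy rule of Figure~\ref{fig:ZX-rules}, since copying the red $\pi$ through the green spider $u$ and across its incident Hadamard edges yields exactly a green $\pi$ on each neighbour. Because four local complementations (respectively two pivots) return the diagram to itself, three further local complementations (respectively one further pivot) realise the inverse of a single application; as each individual application preserves the existence of Pauli flow by the corresponding lemma, so does the composite, whence each rule provides its own flow-preserving inverse. I expect the one genuinely non-routine step to be verifying that the common-neighbourhood $N_G(u)\cap N_G(v)$ is invariant under the pivot, so that the $Z_\pi$ corrections of the two successive pivots fall on the same vertices and cancel; the remaining phase arithmetic and the graph-level cancellations are purely mechanical.
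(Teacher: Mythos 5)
Your argument is correct and follows exactly the line the paper itself sketches (the paper states this as an Observation without a written-out proof): the graph operations $\star u$ and $\wedge uv$ are involutions, the deposited local Cliffords accumulate to $X_{-2\pi}$, $Z_{2\pi}$, $H^2$ and $Z_{2\pi}$ respectively and hence cancel, and the intermediate two-fold local complementation is the graph-state stabilizer identity witnessed by $\pi$-copy. The one point you flag as non-routine --- invariance of $N_G(u)\cap N_G(v)$ under the pivot --- does hold, since the pivot swaps the neighbourhoods of $u$ and $v$ and only toggles edges between (not into or out of) the three classes, so your cancellation of the $Z_\pi$ corrections goes through.
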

While the inverse of the $Z$-deletion rule of Lemma~\ref{lem:Z-delete} straightforwardly preserves the MBQC-form, it is not obvious that it also preserves the existence of Pauli flow.
In Section~\ref{s:Z-insert}, we will prove that this is indeed the case.

\section{A canonical form for stabilizer state diagrams}
\label{s:canonical}

Stabilizer state diagrams in the ZX-calculus have a pseudo-normal form: the rGS-LC form, which arises from the representation of a stabilizer state in terms of a graph state and local Clifford operators \cite{Backens_2014}.

Here, we propose a new pseudo-normal form, based on the representation of a stabilizer state in terms of its affine support and a phase polynomial \cite{amy_polynomial-time_2014}.
Like the rGS-LC form, this is closely related to the stabilizer graphs of Elliott et al.~\cite{Elliott_2008} but it translates them into the ZX-calculus differently.
The new normal form allows (and in most cases requires) both green and red spiders, meaning it is not strictly `graph-like'.

Based on a recent proposal by Hu and Khesin \cite{Hu_2021}, we then show how to make this new pseudo-normal form unique, yielding a canonical form for stabilizer state diagrams in the ZX-calculus\footnote{At QCTIP 2022, we learned that an analogous result was independently derived by John van de Wetering~\cite{wetering_2022}.}.
In the process, we simplify the uniqueness proof of Hu and Khesin by making use of formalisms and results from the literature about holant problems.

We first prove some lemmas about the algebraic representation of stabilizer states which will be useful in proving uniqueness of the canonical form.
Next we introduce to the new pseudo-normal `phase polynomial form' and show how it corresponds to stabilizer states in phase-polynomial representation.
Finally, we define the canonical form, prove its uniqueness, and give an algorithm for rewriting diagrams into canonical form.
Throughout this section, diagrams contain red spiders and thus are not in MBQC+LC-form; yet by colour changing all of the red vertices and unfusing phases these can straightforwardly be transformed into MBQC+LC-form diagrams.

\subsection{Stabilizer states in terms of affine support and phase polynomial}
\label{s:afffine-phase}

It has long been known \cite{Dehaene_2003,VDN_2010} that an $n$-qubit stabilizer state can be written (up to normalisation) as
	\begin{equation}\label{eq:stabiliser_state}
 \sum_{x\in A} i^{l(x)} (-1)^{q(x)} \ket{x},
\end{equation}
where $A$ is an affine subspace of $\ZZ_2^n$, $l(x)= \sum_j d_j x_j$ for some fixed $d_j\in\ZZ_2$ is a linear function computed modulo 2, and $q(x) = \sum_{j<k} c_{jk}x_jx_k + \sum_j c_j x_j$ for some fixed $c_{jk},c_j\in\ZZ_2$ is a quadratic function.
The functions $l$ and $q$ together form a phase polynomial for the state, while $A$ determines the support.

Assuming $\dim(A)=n-m$, the elements of the affine space $A$ are the solutions to a set of linear equations $Rx=b$, where $R$ is an $m\times n$ binary matrix of rank $m$ (with $0\leq m\leq n$) and $b\in\ZZ_2^m$.
Each component of $x$ is considered a variable.
With respect to this linear system, the variables $\vc{x}{n}$ can be partitioned (not generally uniquely) into a set of $(n-m)$ \emph{free} variables and a set of $m$ \emph{dependent} variables such that every assignment of values to the free variables induces exactly one assignment of values to the dependent variables which satisfies all the linear equations.
This follows from a standard process of solving the system of linear equations, which also yields a linear equation in terms of the free variables for each dependent variable.
In the following, we will denote the set of indices by $[n] :=\{1,2\zd n\}$ and the free variables by a subset $F\sse [n]$ of the indices, and write the dependent variables as $x_j = a_j \oplus \bigoplus_{k\in F} a_{jk}x_k$, where $a_j,a_{jk}\in\ZZ_2$ and the sum is modulo 2.
If $a_{jk}=1$, we say the variable $x_j$ depends on $x_k$.

It will be useful to give a canonical choice of free variables, this is inspired by Hu and Khesin's normal form for stabilizer states \cite{Hu_2021}, and will lead us to an analogous normal form for stabilizer diagrams.

\begin{definition}\label{def:canonical-free}
 We call the result of the following procedure the canonical set of free variables.
 Start with $x_1$ and consider the variables in ascending order.
 For each $j$, if the value of $x_j$ is fixed by the requirement to satisfy $Rx=b$ given values for all free variables among $\vc{x}{j-1}$ then we say that $x_j$ is dependent. Otherwise we say that $x_j$ is free.
\end{definition}

\begin{lemma}\label{lem:canonical-variables-unique}
 Given an affine space $A$, the canonical set $F$ is the unique set of free variables with the following property: if $x_j$ depends on the free variable $x_k$, then $k<j$.
\end{lemma}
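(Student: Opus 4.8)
The plan is to translate the entire statement into linear algebra over the subspace underlying $A$. Write $A = v_0 + A_0$ with $A_0\sse \ZZ_2^n$ the associated linear subspace and $d := \dim A_0 = n-m$. The additive constants $a_j$ in $x_j = a_j \oplus \bigoplus_{k\in F} a_{jk}x_k$ come only from the shift $v_0$, so the dependence coefficients $a_{jk}$, and hence the property in question, are governed entirely by $A_0$; we may therefore work throughout with $A_0$. For $0\ne v\in A_0$ let $\mathrm{lead}(v)$ denote the least index in its support. The whole argument rests on expressing both the canonical set and the triangularity property in terms of these leading indices.

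First I would reread the procedure of Definition~\ref{def:canonical-free} in terms of $A_0$. Fixing the free variables of index $<j$ determines all of $x_1\zd x_{j-1}$, so $x_j$ fails to be forced, i.e.\ is declared free, exactly when some $v\in A_0$ satisfies $v_i=0$ for all $i<j$ and $v_j=1$, that is, when $\mathrm{lead}(v)=j$ for some $v\in A_0$. This yields the clean characterisation $F=\{\mathrm{lead}(v): 0\ne v\in A_0\}$.

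Next I would characterise the triangularity property. Since a set of free variables $S$ makes the projection onto the $S$-coordinates a bijection $A_0\to\ZZ_2^S$, for each $k\in S$ there is a unique $w^{(k)}\in A_0$ with $w^{(k)}_k=1$ and $w^{(k)}_{k'}=0$ for all $k'\in S\setminus\{k\}$; these form a basis of $A_0$, and for a dependent index $j$ one reads off $a_{jk}=w^{(k)}_j$ by setting the $k$-th free variable to $1$ and the rest to $0$. The condition that $a_{jk}=1$ implies $k<j$ then says precisely that each $w^{(k)}$ vanishes on every index below $k$, i.e.\ $\mathrm{lead}(w^{(k)})=k$. Hence $S$ has the property if and only if $A_0$ has a basis $\{w^{(k)}:k\in S\}$ whose leading indices are exactly the elements of $S$.

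The crux, which I expect to be the main (though elementary) obstacle to phrase cleanly, is a basis-independence fact: if $\{b_1\zd b_d\}$ is any basis of $A_0$ with pairwise distinct leading indices, then $\{\mathrm{lead}(b_i)\}=\{\mathrm{lead}(v):0\ne v\in A_0\}=F$. The inclusion $\sse$ is immediate; for the reverse, given $0\ne v=\sum_{i\in T}b_i$ I would take $i_0\in T$ minimising $\mathrm{lead}(b_i)$ and note that, as the leading indices are distinct, every other summand vanishes on all indices $\le\mathrm{lead}(b_{i_0})$, forcing $\mathrm{lead}(v)=\mathrm{lead}(b_{i_0})$. Combining the three steps gives both halves at once: any set of free variables $S$ with the property produces such a basis with leading-index set $S$, so $S=F$ (uniqueness); and applying the same lemma to the reduced row-echelon basis of $A_0$ shows $\abs{F}=d$, that $F$ is a genuine set of free variables, and that its own vectors $w^{(k)}$ satisfy $\mathrm{lead}(w^{(k)})=k$, so $F$ itself has the property (existence). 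The only delicate point is maintaining a faithful dictionary between the combinatorial ``depends on'' relation and the coordinatewise leading-index condition; once that is secured, uniqueness is exactly the echelon-pivot lemma.
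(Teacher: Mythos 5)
Your proposal is correct, but it takes a genuinely different route from the paper. The paper argues directly on the defining equations: assuming a second set $F'$ with the triangularity property and taking the smallest $j\in F\setminus F'$, it writes $x_j = a_j\oplus\bigoplus_{k\in F',\,k<j}a_{jk}x_k$ and repeatedly substitutes the canonical equations for any non-$F$ variables occurring there; the process terminates with $x_j$ expressed in terms of $F$-variables of index below $j$, contradicting $j\in F$ by the definition of the canonical set. You instead pass to the underlying linear space $A_0$, characterise the canonical set as $F=\{\mathrm{lead}(v):0\neq v\in A_0\}$, show that a set of free variables $S$ has the triangularity property exactly when its associated basis $\{w^{(k)}\}_{k\in S}$ (the preimages of the standard basis under the coordinate projection) satisfies $\mathrm{lead}(w^{(k)})=k$, and reduce uniqueness to the fact that the leading-index set of a basis with pairwise distinct leads is an invariant of $A_0$. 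Both arguments are elementary and of comparable length. Yours buys a clean invariant description of $F$ as the pivot set of an echelon basis of $A_0$, and it also delivers the existence half of the statement -- that $F$ is a genuine set of free variables and itself satisfies the property -- which the paper's proof leaves implicit (it only shows that any $F'$ with the property equals $F$). The paper's substitution argument, in turn, needs no auxiliary lemma and never leaves the language of the equations $Rx=b$. The points you flag as delicate (the inductive invariant that the free variables of index $<j$ determine all of $x_1,\ldots,x_{j-1}$, and the dictionary $a_{jk}=w^{(k)}_j$) are routine to fill in exactly as you sketch them, so I see no genuine gap.
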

\begin{proof}
 Let $F'$ be another set of free variables for $A$ which also has the property that if $x_j$ is a dependent variable and depends on the free variable $x_k$, then $k<j$.
 In other words, for each $j\in [n]\setminus F'$, there is an equation $x_j = a_j + \sum_{k<j} a_{jk} x_k$, where furthermore $a_{jk}=0$ if $k\notin F'$.
 
 Now suppose for a contradiction that $F\neq F'$.
 The two sets must have the same size $\abs{F}=\abs{F'}=\dim(A)$.
 Thus, there must be a smallest element $j\in F$ such that $j\notin F'$.
 Then $F'$ induces an equation
 \begin{equation}\label{eq:dependent-var-linear-dec}
  x_j = a_j \oplus \bigoplus_{k\in F', \; k<j} a_{jk} x_k.
 \end{equation}
 Suppose $a_{jk}=1$ only if $k\in F$.
 Then the value of $x_j$ is fixed by the free variables of lower index in $F$, so $j$ should not be free according to Definition~\ref{def:canonical-free}, a contradiction.
 
 Otherwise, there exists some $k'\notin F$ such that $a_{jk'}=1$.
 But then by the definition of $F$, there exists some equation $x_{k'} = b_{k'} \oplus \bigoplus_{\ell\in F, \; \ell<k'} b_{k'\ell}$.
 Thus we can substitute for $x_{k'}$ in \eqref{eq:dependent-var-linear-dec} while preserving the property that $x_j$ only depends on variables of lower index.
 The process eliminates one variable which is not in $F$ from the decomposition and does not introduce any new variables which are not in $F$.
 Hence repeated application will terminate, at which point we have an equation that fixes $x_j$ from only variables in $F$ of index less than $j$.
 Again, this means $j$ should not be in $F$, a contradiction.

 Hence we must have $F=F'$.
\end{proof}

As pointed out in the holant literature, it is possible to express the functions $l$ and $q$ solely in terms of the free variables, while keeping their other properties the same \cite[Definition~8]{cai_dichotomy_2018}.
We give a proof in Appendix~\ref{a:canonical-form-proofs} for completeness.

\begin{lemma}\label{lem:phase-poly-free-only}
 Suppose $F$ denotes a set of free variables for the affine space $A$, and $\ket{\psi}$ is some stabilizer state with support on $A$.
 Then there exists a linear function $l$ and a quadratic function $q$, both depending only on the free variables, as well as a scalar $\ld\in\CC\setminus\{0\}$, such that:
 \[
  \ket{\psi} = \ld \sum_{x\in A} i^{l(x)} (-1)^{q(x)} \ket{x}.
 \]
\end{lemma}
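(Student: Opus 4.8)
The plan is to substitute, on the support $A$, the affine expressions for the dependent variables in terms of the free variables directly into $l$ and $q$, and then tidy up the resulting constants into the global scalar. Concretely, for every $x\in A$ and every dependent index $j\in[n]\setminus F$ we have $x_j = a_j \oplus \bigoplus_{k\in F} a_{jk} x_k$, so on the support the phase $i^{l(x)}(-1)^{q(x)}$ is already a function of the free-variable part $x_F := (x_k)_{k\in F}$ alone. The work is therefore to rewrite it in the required form with a genuinely \emph{linear} $l'$ and \emph{quadratic} $q'$, both in the free variables only.

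First I would substitute into $l(x)=\sum_j d_j x_j$. Using that XOR agrees with addition modulo $2$, this produces an \emph{affine} function of $x_F$, say $l(x)\equiv e_0 \oplus l'(x_F) \pmod 2$ with $l'$ linear and $e_0\in\ZZ_2$ a constant. Next I would substitute into $q(x)=\sum_{j<k}c_{jk}x_jx_k+\sum_j c_j x_j$: each product $x_jx_k$ becomes a product of two $\ZZ_2$-affine forms in $x_F$, which on expansion and reduction via $x_k^2=x_k$ stays (at most) quadratic, yielding $q(x)\equiv c_0\oplus q''(x_F)\pmod 2$ with $q''$ quadratic and $c_0\in\ZZ_2$ a constant. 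Since $(-1)^{(\cdot)}$ depends only on its exponent modulo $2$ and $(-1)^{a\oplus b}=(-1)^a(-1)^b$, the constant $c_0$ simply factors out as a global sign.

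The one genuinely delicate point --- and the step I expect to be the main obstacle --- is that the constant $e_0$ cannot be factored out of the $i$-exponent in the same naive way, because $i^{a\oplus b}\neq i^a i^b$ in general. The fix is the bit identity
\[
 i^{e_0\oplus l'(x_F)} = i^{e_0}\, i^{l'(x_F)}\, (-1)^{e_0\, l'(x_F)},
\]
valid for all bit values (checked on the four cases). Here $i^{e_0}$ is a constant to be absorbed into $\lambda$, the factor $i^{l'(x_F)}$ supplies the required linear function in the $i$-exponent, and the remaining factor $(-1)^{e_0 l'(x_F)}$ is a sign whose exponent $e_0\, l'(x_F)$ is linear in $x_F$; I would fold it into the quadratic function by setting $q'(x_F) := q''(x_F) \oplus e_0\, l'(x_F)$, which is still quadratic (and constant-free, matching the prescribed shape of $q$).

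Putting the pieces together gives $\ket{\psi} = \lambda \sum_{x\in A} i^{l'(x)}(-1)^{q'(x)}\ket{x}$ with $l'$ linear, $q'$ quadratic, both depending on the free variables only, and $\lambda = i^{e_0}(-1)^{c_0}$ a nonzero fourth root of unity. Apart from the $i$-exponent subtlety just described, everything reduces to routine mod-$2$ bookkeeping: confirming that substituting affine forms into a quadratic and applying $x_k^2=x_k$ keeps the degree at most $2$, and that each substitution preserves the phase pointwise on $A$.
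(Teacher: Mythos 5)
Your proposal is correct and follows essentially the same route as the paper: substitute the affine expressions $x_j = a_j \oplus \bigoplus_{k\in F} a_{jk}x_k$ for the dependent variables into $l$ and $q$, reduce degree via $x_k^2=x_k$, and push the resulting constants into the global scalar (the paper does this one dependent variable at a time and argues termination, whereas you do it in a single pass, but that is cosmetic). The one substantive difference is in your favour: the paper simply asserts that the constant $a_s$ appearing in the $i$-exponent ``can be absorbed into the overall scalar $\ld$,'' which is not literally true since $i^{a\oplus b}\neq i^a i^b$; your identity $i^{e_0\oplus l'(x)} = i^{e_0}\, i^{l'(x)}\,(-1)^{e_0 l'(x)}$ and the ensuing redefinition $q' := q'' \oplus e_0 l'$ supply exactly the correction the paper's proof leaves implicit, so your argument is the more careful of the two.
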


There are generally multiple ways of expressing the same state in the form of \eqref{eq:stabiliser_state}.
Yet if we pick a set of free variables $F$ and require $l$ and $q$ to depend only on free variables, the representation becomes unique.
Moreover, we can even give a unique representation in terms of a phase polynomial (evaluated modulo 4, rather than 2).
Again, the proof is in Appendix~\ref{a:canonical-form-proofs}.

\begin{lemma}\label{lem:true-phase-poly}
 Given an $n$-qubit stabilizer state $\ket{\psi}$ and a set $F\sse [n]$, there exists a unique polynomial $p(x) = \sum_{j\in F} r_j x_j + 2\sum_{j,k\in F,\; j<k} s_{jk} x_j x_k$ with $r_j\in\ZZ_4$ and $s_{jk}\in \ZZ_2$ and scalar $\ld\in\CC\setminus\{0\}$ such that $\ket{\psi} = \ld\sum_{x\in A} i^{p(x)} \ket{x}$.
\end{lemma}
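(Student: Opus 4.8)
The plan is to obtain existence by merging the linear function $l$ and the quadratic function $q$ supplied by Lemma~\ref{lem:phase-poly-free-only} into a single $\ZZ_4$-valued phase polynomial, and to obtain uniqueness by comparing the coefficients of each computational basis vector $\ket{x}$. Lemma~\ref{lem:phase-poly-free-only} gives functions $l,q$ depending only on the free variables and a scalar $\lambda$ with $\ket{\psi}=\lambda\sum_{x\in A} i^{l(x)}(-1)^{q(x)}\ket{x}$, so for existence I would rewrite $i^{l(x)}(-1)^{q(x)}=i^{l(x)+2q(x)}$ and reduce the exponent modulo $4$. The one subtlety is that $l(x)=\bigoplus_{j\in F} d_j x_j$ is an iterated XOR rather than an integer sum; since the $x_j$ are Boolean, the inclusion–exclusion identity $\bigoplus_j a_j=\sum_{\emptyset\neq S}(-2)^{|S|-1}\prod_{j\in S}a_j$ holds over $\ZZ$, and every term with $|S|\geq 3$ is divisible by $4$, so modulo $4$ one has $l(x)\equiv \sum_{j} d_j x_j - 2\sum_{j<k} d_j d_k x_j x_k$. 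Substituting this and using $x_j^2=x_j$ to fold the diagonal part $2c_j x_j$ of $2q$ into the linear term yields $l(x)+2q(x)\equiv \sum_{j\in F}(d_j+2c_j)\,x_j + 2\sum_{j<k}(c_{jk}\oplus d_jd_k)\,x_jx_k \pmod 4$. Setting $r_j:=(d_j+2c_j)\bmod 4\in\ZZ_4$ and $s_{jk}:=(c_{jk}\oplus d_jd_k)\in\ZZ_2$ then gives a polynomial $p$ of the required shape with $i^{p(x)}=i^{l(x)}(-1)^{q(x)}$, so the same $\lambda$ works.

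For uniqueness, suppose $\lambda\sum_{x\in A} i^{p(x)}\ket{x}=\lambda'\sum_{x\in A} i^{p'(x)}\ket{x}$ for two polynomials of the prescribed form. Since the $\{\ket{x}\}_{x\in A}$ are linearly independent, matching coefficients gives $\lambda\, i^{p(x)}=\lambda'\, i^{p'(x)}$ for every $x\in A$. Let $x^{(0)}\in A$ be the point obtained by setting every free variable to $0$; as $p$ and $p'$ involve only free variables, $p(x^{(0)})=p'(x^{(0)})=0$, which forces $\lambda=\lambda'$ and hence $p(x)\equiv p'(x)\pmod 4$ for all $x\in A$. Because each assignment of the free variables in $\{0,1\}^{F}$ extends to exactly one point of $A$, this congruence holds for all Boolean values of the free variables. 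Evaluating $p-p'$ at the free-variable unit vector $e_j$ gives $r_j-r_j'\equiv 0\pmod 4$, so $r_j=r_j'$; evaluating at $e_j+e_k$ then gives $2(s_{jk}-s_{jk}')\equiv 0\pmod 4$, hence $s_{jk}=s_{jk}'$. Thus $p=p'$ and the representation is unique.

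I expect the main obstacle to lie entirely in the existence step's bookkeeping: one must respect that $l$ is computed modulo $2$ while the target exponent lives modulo $4$, so the XOR-to-integer conversion together with its quadratic correction $-2\sum d_jd_k x_jx_k$ cannot be dropped, and one must check that no constant term and no cubic-or-higher terms survive modulo $4$. Once the phase is correctly recast as a genuine $\ZZ_4$-valued polynomial of the prescribed shape, both halves of the statement follow by the routine evaluations above.
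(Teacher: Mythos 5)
Your proof is correct and follows essentially the same route as the paper's: the existence step produces the identical coefficients $r_j = d_j + 2c_j$ and $s_{jk} = c_{jk}\oplus d_jd_k$ (the paper converts the XOR via $y\bmod 2 = y^2\bmod 4$ rather than inclusion--exclusion, but the resulting quadratic correction $2\sum_{j<k}d_jd_kx_jx_k$ is the same). For uniqueness the paper invokes Lemma~\ref{lem:phase-poly-uniqueness} on the $(l,q)$ representation together with the invertibility of the coefficient change $d_j=r_j\bmod 2$, $c_j=\tfrac12(r_j-d_j)$, $c_{jk}=s_{jk}\oplus d_jd_k$, whereas you evaluate $p-p'$ directly at the all-zeros, unit, and two-bit free-variable assignments --- the same underlying computations, just carried out in place.
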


\subsection{A new pseudo-normal form related to phase polynomials}

In the rGS-LC form for stabilizer state diagrams, local Clifford operators on the graph state are expressed in terms of green and red spiders.
Alternatively, it is also possible to express local Clifford operators in terms of green spiders and Hadamards (and this is what is done in the stabilizer graph formalism of \cite{Elliott_2008}).
In ZX-terms, this means the allowed local Clifford operators are \phase{Z}{$\frac{k\pi}{2}$} and \phase{Z}{$a\pi$}\!\had, where $k\in\ZZ_4$ and $a\in\ZZ_2$.
As for red nodes in rGS-LC diagrams, qubits whose local Clifford operator contains an $H$ are not allowed to be connected to each other; therefore we can `push' the Hadamards through and get the following pseudo-normal form.
It is possible to convert between the two kinds of local Clifford operators via local complementations on the qubits that have red nodes or Hadamards.

\begin{definition}\label{def:phase-poly-form}
 A stabilizer ZX-calculus diagram is in phase-polynomial form if the following hold:
 \begin{itemize}
  \item Each dangling edge is connected to a unique red or green spider.
  \item Red spiders have phases that are 0 or $\pi$.
  \item Green spiders have phases that are integer multiples of $\pi/2$.
  \item There may be edges connecting spiders of different colours.
  \item Furthermore, green spiders may be connected to other green spiders via Hadamard nodes.
 \end{itemize}
\end{definition}

\begin{observation}\label{obs:rGS-LC_to_phasepoly}
An rGS-LC diagram can be brought into phase-polynomial form via the following process. First, apply local complementations to all qubits that have red nodes in their local Cliffords. This maps $\phase{Z}{$\frac{\pi}{2}$}\!\phase{X}{$\frac{\pi}{2}$}$ to $\had$ and $\phase{Z}{-$\frac{\pi}{2}$}\!\phase{X}{$\frac{\pi}{2}$}$ to $\had\!\phase{X}{$\pi$}$. Then, change the colour of all spiders which now have Hadamards as part of their vertex operators and merge adjacent spiders of the same colour.
\end{observation}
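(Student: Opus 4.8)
The plan is to verify the two steps of the stated procedure separately, using the explicit form of the local complementation rule recorded in Lemma~\ref{lem:local-complementation} together with the colour-change and spider-fusion rules of Figure~\ref{fig:ZX-rules}, and then to check that the output satisfies every clause of Definition~\ref{def:phase-poly-form}.

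First I would analyse Step~1 on a single red-operator qubit $u$. By the definition of rGS-LC form, $u$ carries one of the vertex operators $\phase{Z}{$\frac{\pi}{2}$}\!\phase{X}{$\frac{\pi}{2}$}$ or $\phase{Z}{$-\frac{\pi}{2}$}\!\phase{X}{$\frac{\pi}{2}$}$. Lemma~\ref{lem:local-complementation} records that a local complementation about $u$ toggles the edges among the neighbours of $u$, inserts an $\phase{X}{$-\frac{\pi}{2}$}$ on the wire of $u$, and inserts a $\phase{Z}{$\frac{\pi}{2}$}$ on the wire of each neighbour. The only genuinely new content is a single-qubit identity: composing the pre-existing operator on $u$ with the inserted $\phase{X}{$-\frac{\pi}{2}$}$ and simplifying (for instance via the Euler decomposition $H \cong Z_{\frac\pi2}\circ X_{\frac\pi2}\circ Z_{\frac\pi2}$ and the rules of Figure~\ref{fig:ZX-rules}) yields a Hadamard followed by $\phase{X}{$a\pi$}$, with $a=0$ in the first case and $a=1$ in the second. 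This is exactly the claimed behaviour $\phase{Z}{$\frac{\pi}{2}$}\!\phase{X}{$\frac{\pi}{2}$}\mapsto\had$ and $\phase{Z}{$-\frac{\pi}{2}$}\!\phase{X}{$\frac{\pi}{2}$}\mapsto\had\!\phase{X}{$\pi$}$.

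Next I would argue that all these local complementations can be carried out together without interfering. The key hypothesis is the rGS-LC condition that no two red-operator qubits are adjacent: hence every neighbour of a red-operator qubit carries a green operator $\phase{Z}{$\frac{k\pi}{2}$}$, the inserted $\phase{Z}{$\frac{\pi}{2}$}$ merely advances this to $\phase{Z}{$\frac{(k+1)\pi}{2}$}$ (still a multiple of $\pi/2$), and the edge-toggling acts only between green neighbours. Because distinct red-operator qubits are non-adjacent, the phase corrections and edge-toggles produced by different local complementations land only on green vertices and accumulate harmlessly, adding green phases and taking symmetric differences of Hadamard-edge sets, so the order is immaterial. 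After Step~1 every former red-operator vertex is therefore a green graph-state spider whose vertex operator is a Hadamard (possibly followed by $\phase{X}{$\pi$}$) and all of whose neighbours are green.

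Finally I would justify Step~2. Each former red-operator vertex is now a green spider with a Hadamard on every incident wire: a Hadamard on its output leg from Step~1, and a Hadamard edge on each leg joining it to a neighbour. The colour-change rule of Figure~\ref{fig:ZX-rules} therefore rewrites it to a red spider joined to those neighbours by plain edges, and the residual $\phase{X}{$a\pi$}$ fuses in as a phase in $\{0,\pi\}$. After merging any adjacent equal-colour spiders, I would check the clauses of Definition~\ref{def:phase-poly-form}: the dangling edges still attach to unique spiders, the new red spiders have phase $0$ or $\pi$, the green spiders keep phases that are multiples of $\pi/2$, the former red-to-green links are now cross-colour plain edges, and the surviving green--green links are Hadamard edges. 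I expect the non-adjacency bookkeeping to be the main obstacle: the colour-change of $u$ relies on there being a Hadamard on each of its legs, which would fail if an incident edge joined it to another red-operator vertex also being colour-changed, since that shared Hadamard edge would be consumed by one colour-change and then absent for the other. Confirming that the rGS-LC non-adjacency hypothesis excludes this, and that the simultaneous corrections never recreate such an adjacency, is the crux of the argument.
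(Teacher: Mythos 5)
The paper states this as an Observation without an accompanying proof, and your write-up supplies exactly the justification it leaves implicit: the single-qubit Euler-decomposition identity for the new vertex operator on each red-operator qubit, the rGS-LC non-adjacency condition guaranteeing that the simultaneous local complementations only toggle edges and add green phases among green-operator qubits (so they neither interfere with one another nor obstruct the later colour changes), and the final check against Definition~\ref{def:phase-poly-form}. This is correct and follows the route the paper intends; the one caveat is a sign convention in your first step -- reading Lemma~\ref{lem:local-complementation} in the stated direction inserts $X_{-\pi/2}$ on $u$, and $X_{\pi/2}Z_{\pm\pi/2}X_{-\pi/2}$ is a $Y$-rotation rather than a Hadamard, so one actually needs the inverse local complementation (inserting $X_{+\pi/2}$, whence $X_{\pi/2}Z_{\pi/2}X_{\pi/2}\cong H$), exactly as the paper itself does via ``triple local complementations'' in its worked example after Theorem~\ref{thm:completeness}.
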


\begin{example}\label{ex:rGS-LCtoPhasePoly}
 Applying this procedure to the rGS-LC diagram on the left yields the phase polynomial-form diagram in the middle.
 Colour-changing each red spider and unfusing the phases leads to an equivalent GS-LC form diagram which we will say is in phase-polynomial form up to colour changing the spiders with Hadamard gates in their vertex operators.
  \begin{center}
    \begin{tikzpicture}[scale=0.4]
	\begin{pgfonlayer}{nodelayer}
		\node [style=Z dot] (0) at (-3, 2.25) {};
		\node [style=Z dot] (1) at (-1, 0.75) {};
		\node [style=Z dot] (2) at (-1, -0.75) {};
		\node [style=Z dot] (3) at (-3, -2.25) {};
		\node [style=Z phase dot] (4) at (0.5, 2.25) {$\frac{\pi}{2}$};
		\node [style=Z phase dot] (5) at (0.5, 0.75) {$\frac{\pi}{2}$};
		\node [style=Z phase dot] (6) at (0.5, -0.75) {$-\frac{\pi}{2}$};
		\node [style=Z phase dot] (7) at (0.5, -2.25) {$\frac{\pi}{2}$};
		\node [style=X phase dot] (10) at (2.5, -0.75) {$\frac{\pi}{2}$};
		\node [style=X phase dot] (11) at (2.5, -2.25) {$\frac{\pi}{2}$};
		\node [style=none] (12) at (4, 2.25) {};
		\node [style=none] (13) at (4, 0.75) {};
		\node [style=none] (14) at (4, -0.75) {};
		\node [style=none] (15) at (4, -2.25) {};
	\end{pgfonlayer}
	\begin{pgfonlayer}{edgelayer}
		\draw [style=Hadamard edge] (0) to (3);
		\draw [style=Hadamard edge] (0) to (2);
		\draw [style=Hadamard edge] (3) to (1);
		\draw (0) to (4);
		\draw (1) to (5);
		\draw (2) to (6);
		\draw (6) to (10);
		\draw (3) to (7);
		\draw (7) to (11);
		\draw (4) to (12.center);
		\draw (5) to (13.center);
		\draw (10) to (14.center);
		\draw (11) to (15.center);
	\end{pgfonlayer}
\end{tikzpicture}\qquad\qquad\qquad
     \begin{tikzpicture}[scale=0.5]
	\begin{pgfonlayer}{nodelayer}
		\node [style=Z phase dot] (0) at (-3, 1.5) {$\frac{\pi}{2}$};
		\node [style=Z phase dot] (1) at (-1, 0.5) {};
		\node [style=X phase dot] (2) at (-1, -0.5) {$\pi$};
		\node [style=X dot] (3) at (-3, -1.5) {};
		\node [style=none] (4) at (2, 1.5) {};
		\node [style=none] (5) at (2, 0.5) {};
		\node [style=none] (6) at (2, -0.5) {};
		\node [style=none] (7) at (2, -1.5) {};
	\end{pgfonlayer}
	\begin{pgfonlayer}{edgelayer}
		\draw [style=Hadamard edge] (0) to (1);
		\draw (0) to (2);
		\draw (0) to (3);
		\draw (3) to (1);
		\draw (0) to (4.center);
		\draw (1) to (5.center);
		\draw (2) to (6.center);
		\draw (3) to (7.center);
	\end{pgfonlayer}
\end{tikzpicture}\qquad\qquad\qquad \begin{tikzpicture}[scale=0.5]
	\begin{pgfonlayer}{nodelayer}
		\node [style=Z dot] (0) at (-3, 1.5) {};
		\node [style=Z dot] (1) at (-1, 0.5) {};
		\node [style=none] (4) at (2, 1.5) {};
		\node [style=none] (5) at (2, 0.5) {};
		\node [style=none] (6) at (2, -0.5) {};
		\node [style=none] (7) at (2, -1.5) {};
		\node [style=Z dot] (8) at (-1, -0.5) {};
		\node [style=Z dot] (9) at (-3, -1.5) {};
		\node [style=H gate] (10) at (1.25, -0.5) {};
		\node [style=H gate] (11) at (0, -1.5) {};
		\node [style=Z phase dot] (12) at (0, 1.5) {$\frac{\pi}{2}$};
		\node [style=Z phase dot] (13) at (0, -0.5) {$\pi$};
	\end{pgfonlayer}
	\begin{pgfonlayer}{edgelayer}
		\draw [style=Hadamard edge] (0) to (1);
		\draw (1) to (5.center);
		\draw [style=Hadamard edge] (0) to (8);
		\draw [style=Hadamard edge] (9) to (0);
		\draw [style=Hadamard edge] (9) to (1);
		\draw (9) to (11);
		\draw (11) to (7.center);
		\draw (10) to (6.center);
		\draw (0) to (12);
		\draw (12) to (4.center);
		\draw (8) to (13);
		\draw (13) to (10);
	\end{pgfonlayer}
\end{tikzpicture}
 \end{center}
\end{example}

Diagrams in phase-polynomial form correspond directly to pairs of a state and a set of free variables for the underlying affine support.
Appendix~\ref{a:bijection-example} contains an example illustrating this correspondence.

\begin{lemma}\label{lem:phase-poly-bijection}
 Ignoring scaling, there is a bijection between phase-polynomial form diagrams and pairs $\left(\ket{\psi}, \; F\right)$, where $\ket{\psi}$ is an $n$-qubit stabilizer state and $F\sse[n]$ indicates a set of free variables for the affine space $A$ which is the support of $\ket{\psi}$.
\end{lemma}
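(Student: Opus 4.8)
The plan is to realise the claimed bijection by reading off, from a phase-polynomial form diagram, both a set of free variables and the affine-support-plus-phase-polynomial data of a stabilizer state, and conversely by building a diagram from such data; the real content is a direct computation of the diagram's interpretation in the computational basis. Throughout I use that a phase-polynomial form diagram on $n$ qubits has exactly $n$ spiders, one per dangling edge, that its only non-output edges are plain edges between a green and a red spider or Hadamard edges between two green spiders, that red phases lie in $\{0,\pi\}$, and that green phases lie in $\frac{\pi}{2}\ZZ$.

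First I would set up the forward map $\Phi$. Given such a diagram $D$, let $F\sse[n]$ be the set of indices whose spider is green and $\bar F=[n]\setminus F$ those whose spider is red. I compute $\intf{D}$ in the computational basis: each green spider, being a $Z$-spider, broadcasts a single bit $x_k$ to all of its legs and contributes a phase $i^{r_k x_k}$, where $r_k\in\ZZ_4$ encodes its phase $\tfrac{r_k\pi}{2}$; each Hadamard edge between green spiders $j,k$ acts as a $CZ$ and contributes $(-1)^{x_j x_k}=i^{2x_j x_k}$; and each red spider $j$, being a phase-free or $\pi$ $X$-spider, enforces a parity constraint on the $Z$-values of its legs, namely $x_j\oplus\bigoplus_{k\sim j}x_k=a_j$, where $a_j\in\ZZ_2$ is its phase divided by $\pi$ and $k\sim j$ ranges over its green neighbours. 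Collecting these contributions gives $\intf{D}\cong\sum_{x\in A}i^{p(x)}\ket{x}$, where $A$ is the affine space cut out by the equations $x_j=a_j\oplus\bigoplus_{k\sim j}x_k$ for $j\in\bar F$, and $p(x)=\sum_{k\in F}r_k x_k+2\sum_{j<k\in F}s_{jk}x_j x_k$ with $s_{jk}=1$ exactly when green spiders $j,k$ are joined by a Hadamard edge. Since every red spider is adjacent only to green spiders, each dependent equation expresses $x_j$ purely in terms of variables indexed by $F$; hence $F$ is a legitimate set of free variables for $A$, $\intf{D}$ is a stabilizer state (the diagram is Clifford), and I set $\Phi(D)=(\intf{D},F)$.

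Next I would exhibit the inverse $\Psi$ and check the two maps compose to the identity. Given a pair $(\ket{\psi},F)$, Lemma~\ref{lem:true-phase-poly} provides the unique phase polynomial $p(x)=\sum_{j\in F}r_j x_j+2\sum_{j<k\in F}s_{jk}x_j x_k$ with $\ket{\psi}\cong\sum_{x\in A}i^{p(x)}\ket{x}$, and Gaussian elimination of the linear system defining $A$ with free columns $F$ yields, for each $j\in\bar F$, a unique equation $x_j=a_j\oplus\bigoplus_{k\in F}a_{jk}x_k$. I build $\Psi(\ket{\psi},F)$ by placing a green spider of phase $\tfrac{r_k\pi}{2}$ at each $k\in F$, a red spider of phase $a_j\pi$ at each $j\in\bar F$ joined by plain edges to the green spiders $\{k\in F:a_{jk}=1\}$, and a Hadamard edge between green spiders $j,k$ whenever $s_{jk}=1$. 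This is manifestly in phase-polynomial form, and the forward computation above shows $\Phi\Psi=\mathrm{id}$. For $\Psi\Phi=\mathrm{id}$, I note that the colour of each spider is recovered from $F$, the green phases and green-green Hadamard edges are recovered from $p$ by the uniqueness in Lemma~\ref{lem:true-phase-poly}, and the red phases and green-red edges are recovered from the unique expressions for the dependent variables of $A$ in terms of $F$; thus the entire diagram is determined, giving injectivity.

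The one point requiring care is the semantic computation in the forward direction: verifying that a phase-free (respectively $\pi$) red $X$-spider imposes precisely the even- (respectively odd-) parity constraint on the $Z$-values of its adjacent wires, and that a Hadamard edge between green spiders contributes exactly the quadratic term $i^{2x_j x_k}$. Once these two local computations are established, matching $\intf{D}$ against the canonical form $\sum_{x\in A}i^{p(x)}\ket{x}$ is routine, and the uniqueness assertions of Lemma~\ref{lem:true-phase-poly} together with the uniqueness of the affine solution for a fixed free set complete the argument.
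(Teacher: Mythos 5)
Your proposal is correct and takes essentially the same route as the paper: the same pair of mutually inverse constructions between phase-polynomial form diagrams and pairs $(\ket{\psi},F)$, with uniqueness supplied by Lemma~\ref{lem:true-phase-poly}. The only difference is that you spell out the computational-basis semantics (the parity constraint from each red spider and the $i^{2x_jx_k}$ contribution from each Hadamard edge), which the paper compresses into ``by phase-polynomial techniques, it is quick to check.''
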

\begin{proof}
 By Lemma~\ref{lem:true-phase-poly}, there exists a unique function $p(x) = \sum_{j\in F} r_j x_j + 2\sum_{j,k\in F,\; j<k} s_{jk} x_j x_k$ with $r_j\in\ZZ_4$ and $s_{jk}\in\ZZ_2$ such that $\ket{\psi} \cong \sum_{x\in A} i^{p(x)} \ket{x}$.
 To construct a diagram from a state and a set of free variables from this, proceed as follows:
 \begin{itemize}
  \item For each dependent variable $x_k$ with $k\in [n]\setminus F$, find the unique linear expression $x_k = a_k \oplus \bigoplus_{j\in F} a_{kj} x_j$ which satisfies the defining linear equations $Rx=b$ of the affine space $A$.
  \item For each $j\in F$, place a green spider with an output wire. The phase of this spider is $r_j \frac{\pi}{2}$.
  \item For each $k\in [n]\setminus F$, place a red spider with an output wire. The phase of this spider is $a_j\pi$.
  \item Draw a (plain) edge connecting the green spider $j$ to the red spider $k$ whenever $a_{kj}=1$.
  \item Draw a Hadamard edge connecting the green spiders $j$ and $j'$ whenever $s_{jj'}=1$.
 \end{itemize}
 Conversely, given a diagram in phase-polynomial form, construct the corresponding state as below:
 \begin{itemize}
  \item The set $F$ of free variables consists of the indices of the green spiders.
  \item The affine space $A$ is defined by the set of equations $\left\{x_j = a_j \oplus \bigoplus_{k\in N(j)} x_k \right\}_{j\in [n]\setminus F}$,
   where $a_j=0$ if the phase of the red spider with index $j$ is 0, and 1 otherwise.
  \item For each $j\in F$ such that the phase of the green spider $j$ is $\alpha_j$, define $r_j$ to be the value in $\ZZ_4$ that is equivalent to $\frac{2\alpha_j}{\pi}\bmod 4$.
  \item For each $j,k\in F$ with $j<k$, define $s_{jk}=1$ if there exists a Hadamard edge between spiders $j$ and $k$, and $s_{jk}=0$ otherwise.
 \end{itemize}
  Let $p(x) := \sum_{j\in F} r_j x_j + 2\sum_{j,k\in F,\; j<k} s_{jk} x_j x_k$, then the desired state is $\sum_{x\in A} i^{p(x)}\ket{x}$.
  The two procedures are inverses of each other (noting that $\frac{3\pi}{2}\equiv-\frac{\pi}{2}\bmod 2\pi$).
  
  Suppose $D$ is the ZX-diagram corresponding to some stabilizer state $\ket{\psi}$ according to the above translation.
  Then it is straightforward to see that the support of $\intf{D}$ and the support of $\ket{\psi}$ are equal.
  Thus, by phase-polynomial techniques, it is quick to check that $\intf{D}$ equals $\ket{\psi}$ up to scalar factor.
\end{proof}

\subsection{The canonical phase-polynomial diagram}

Using the bijection between phase-polynomial form diagrams and pairs of a state and a set of free variables, we can now define a unique canonical diagram for any stabilizer state.

\begin{definition}\label{def:canonical-diagram}
 Let $\ket{\psi}$ be a stabilizer state, then its canonical diagram is the one translated from $(\ket{\psi},F)$ by Lemma~\ref{lem:phase-poly-bijection}, where $F$ is the canonical set of free variables according to Definition~\ref{def:canonical-free}.
\end{definition}

Apart from the translation into our terminology, this differs from the normal form definition of Hu and Khesin~\cite{Hu_2021} only by reversing the order: we ask for free variables to come first whereas they put them last.
Our uniqueness proof, making use of the properties of the affine support of a stabilizer state is shorter and simpler than that in \cite{Hu_2021}.

\begin{theorem}
 The canonical form is unique.
\end{theorem}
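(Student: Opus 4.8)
The plan is to reduce the claimed uniqueness of the canonical form to the uniqueness results already established for the algebraic representation of stabilizer states. The key observation is that the canonical diagram of Definition~\ref{def:canonical-diagram} is built from two ingredients, both of which have already been shown to be canonical: the stabilizer state $\ket{\psi}$ itself, and the \emph{canonical} set of free variables $F$ attached to it. So the strategy is to show that a diagram in phase-polynomial form is completely determined once we fix these two pieces of data, and that the canonical choice of $F$ is itself forced by $\ket{\psi}$.

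First I would invoke Lemma~\ref{lem:phase-poly-bijection}, which (ignoring scalars) gives a bijection between phase-polynomial form diagrams and pairs $(\ket{\psi}, F)$, where $F$ is a set of free variables for the affine support $A$ of $\ket{\psi}$. Under this bijection the canonical diagram of $\ket{\psi}$ is exactly the image of the pair $(\ket{\psi}, F_{\mathrm{can}})$, where $F_{\mathrm{can}}$ is the canonical set of free variables given by Definition~\ref{def:canonical-free}. Since the bijection assigns to each such pair a \emph{unique} diagram, to prove the theorem it suffices to show that the pair $(\ket{\psi}, F_{\mathrm{can}})$ is itself uniquely determined by $\ket{\psi}$ alone. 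The state $\ket{\psi}$ is of course fixed, so the only thing to check is that its canonical set of free variables is unique.

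Next I would appeal to Lemma~\ref{lem:canonical-variables-unique}, which states precisely that, for a given affine space $A$, the canonical set $F$ is the \emph{unique} set of free variables satisfying the triangularity property that every dependent variable $x_j$ depends only on free variables $x_k$ with $k<j$. The affine support $A$ is determined by $\ket{\psi}$, so this lemma pins down $F_{\mathrm{can}}$ as a function of $\ket{\psi}$ with no remaining freedom. Chaining the two results together: $\ket{\psi}$ determines $A$, which determines $F_{\mathrm{can}}$, and then $(\ket{\psi}, F_{\mathrm{can}})$ determines the diagram via the bijection of Lemma~\ref{lem:phase-poly-bijection}. Hence two canonical diagrams for the same state must coincide, which is exactly the uniqueness claim.

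I do not expect a serious obstacle here, since the heavy lifting has all been front-loaded into the earlier lemmas (particularly the uniqueness of the phase polynomial in Lemma~\ref{lem:true-phase-poly} and the uniqueness of the canonical free variables in Lemma~\ref{lem:canonical-variables-unique}). The only point requiring a little care is to make sure the chain of determinations is genuinely well-defined: I would verify that the support $A$ read off from $\ket{\psi}$ is intrinsic (independent of any presentation), so that the input to Definition~\ref{def:canonical-free} is unambiguous, and that Lemma~\ref{lem:true-phase-poly} supplies the unique phase-polynomial data $r_j, s_{jk}$ once $F_{\mathrm{can}}$ is fixed, so that no choices leak into the construction of the diagram. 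Given those checks, the theorem follows immediately as a corollary of the preceding results.
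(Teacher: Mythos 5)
Your argument is correct and matches the paper's proof exactly: the paper likewise derives uniqueness of the canonical form by combining the bijection of Lemma~\ref{lem:phase-poly-bijection} with the uniqueness of the canonical set of free variables from Lemma~\ref{lem:canonical-variables-unique}. Your additional remarks about the intrinsic nature of the affine support and the role of Lemma~\ref{lem:true-phase-poly} are sensible elaborations but do not change the substance of the argument.
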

\begin{proof}
 This follows from the uniqueness of the canonical set of free variables proved in Lemma~\ref{lem:canonical-variables-unique} and from the bijection between pairs consisting of a state and a set of free variables in Lemma~\ref{lem:phase-poly-bijection}.
\end{proof}

\begin{proposition} \label{prop:phasepoly-to-canonical}
Every phase-polynomial form diagram can be re-written into canonical form using only local complementation and pivoting.
\end{proposition}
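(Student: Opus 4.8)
The plan is to read everything through the bijection of Lemma~\ref{lem:phase-poly-bijection}. A phase-polynomial form diagram is exactly the image of a pair $(\ket{\psi},F)$, where $\ket{\psi}$ is a stabilizer state and $F$ is a choice of free variables for its affine support $A$; the canonical diagram of Definition~\ref{def:canonical-diagram} is the image of $(\ket{\psi},F_{\mathrm{can}})$, with $F_{\mathrm{can}}$ the canonical free set of Definition~\ref{def:canonical-free}. Since local complementation and pivoting are sound equalities (Lemmas~\ref{lem:local-complementation} and~\ref{lem:pivot}) and, after the colour-changing and spider-merging of Observation~\ref{obs:rGS-LC_to_phasepoly}, send phase-polynomial diagrams to phase-polynomial diagrams representing the same state, it suffices to show that these operations can change the underlying free set from an arbitrary $F$ to $F_{\mathrm{can}}$ while fixing $\ket{\psi}$ and $A$.

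First I would pin down the effect of the operations on the free/dependent partition. Reading each red spider as a defining equation $x_j = a_j \oplus \bigoplus_{k\in N(j)} x_k$ of $A$ and each green spider as a free variable, I claim that a single \emph{basis exchange} --- promoting a dependent variable $x_j$ that depends on a free variable $x_k$ to free status while demoting $x_k$ --- is realised by a pivot about the edge $(j,k)$. The point is that in phase-polynomial form a green--red plain edge is, up to colour change, a Hadamard edge between two graph-state vertices (colour-changing the red spider absorbs one Hadamard), so the Bouchet characterisation of pivoting --- interchanging $j$ and $k$ and toggling the connectivity among their three neighbourhood sets --- reproduces exactly the substitution of $x_k = a_j \oplus x_j \oplus \bigoplus_{k'\in N(j)\setminus\{k\}} x_{k'}$ into every equation mentioning $x_k$. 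Local complementations, for their part, reorganise the quadratic Hadamard-edge structure among the green spiders and mediate the colour changes and phase-clearing.

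Next I would establish reachability. The sets $[n]\setminus F$ of dependent variables are precisely the bases of the column matroid of the constraint matrix $R$: demoting $x_k$ and promoting $x_j$ keeps a valid free set exactly when column $k$ has nonzero coefficient on column $j$ in the current basis, i.e.\ when $a_{jk}=1$, which is also exactly the condition for the edge $(j,k)$ to exist in the diagram. Hence every admissible exchange is a pivot about an \emph{existing} edge. By the matroid basis-exchange property any two bases are joined by a sequence of such single-element swaps that remain bases throughout, so composing the corresponding pivots (bracketed by the local complementations needed for colour-changing) rewrites any $(\ket{\psi},F)$ into $(\ket{\psi},F_{\mathrm{can}})$, which by Lemma~\ref{lem:canonical-variables-unique} is the canonical diagram.

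The hard part will be the bookkeeping in the second step: checking that the graph-theoretic pivot coincides with the linear-algebraic substitution once all phase data is tracked. Pivoting acts not only on adjacency but also introduces $\pi$-phases and Hadamard gates on the relevant wires and interacts with the Hadamards hidden inside the red spiders, and colour-changing a green spider carrying a $\pm\frac{\pi}{2}$ phase does not by itself yield an admissible red ($\{0,\pi\}$-phase) spider. I therefore expect to need local complementations to clear non-Pauli phases off the vertices that are to become red and to restore the ``no two red spiders adjacent'' condition, as in the reduction to rGS-LC form, and then to verify via the normalisation of Observation~\ref{obs:rGS-LC_to_phasepoly} that the result is genuinely in phase-polynomial form and encodes $(\ket{\psi},F')$ for the intended exchanged set $F'$ rather than $\ket{\psi}$ up to a stray local Clifford.
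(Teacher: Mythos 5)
Your proposal is correct in substance and rests on the same core rewriting move as the paper --- swapping the colours of a red spider and an adjacent green spider, which under the bijection of Lemma~\ref{lem:phase-poly-bijection} is exactly the exchange of a dependent variable $x_j$ with a free variable $x_k$ on which it depends --- but you organise the argument around a genuinely different reachability lemma. The paper gives an explicit greedy algorithm: fix the order, repeatedly take the \emph{minimal} red spider $d_k$ having a later green neighbour and its \emph{maximal} green neighbour $f_h$, swap them, and observe that the number of incidences between red spiders and later green spiders strictly decreases, so the process terminates exactly when the free set satisfies the characterisation of Lemma~\ref{lem:canonical-variables-unique}. You instead invoke the matroid basis-exchange property to walk $F$ to $F_{\mathrm{can}}$ through valid free sets by single swaps, each admissible precisely when $a_{jk}=1$, i.e.\ when the corresponding edge is present; this is correct and yields a shorter swap sequence (of length $\tfrac{1}{2}|F\symdiff F_{\mathrm{can}}|$) at the cost of importing matroid theory, where the paper needs only a counting measure. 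On the one issue you flag as the ``hard part'': your worry about green spiders with phase $\pm\frac{\pi}{2}$ becoming inadmissible red spiders is real, and the paper resolves it with a case split inside the swap step itself --- if $f_h$ has phase $\pm\frac{\pi}{2}$ one performs a local complementation about $f_h$ followed by one about $d_k$, and only otherwise a pivot about the edge $(f_h,d_k)$; in either case the two spiders exchange colours and the diagram remains in phase-polynomial form. Neither your sketch nor the paper's proof verifies in full detail that the graph-theoretic operation reproduces the algebraic substitution together with all phase data, so modulo that shared level of informality your route goes through.
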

\begin{proof}
Pick some order $<$ on the spiders, say from top to bottom.
We want each red spider to only be connected to spiders that appear earlier in $<$.
While this does not hold, repeat the following procedure:
\begin{enumerate}
	\item Let $d_k$ be the minimal red spider under $<$ such that there exists some green spider $f_j$ connected to $d_k$ with $d_k < f_j$.
	
	\item Let $f_h$ be the maximal green spider under $<$ such that $d_k$ is connected to $f_h$.
	
	\item If $f_h$ has a phase of $\pm \frac{\pi}{2}$, perform local complementation about $f_h$ and then about $d_k$. Otherwise, pivot about the edge connecting $f_h$ and $d_k$.	After applying either of these equivalence transformations, $f_h$ is now red and $d_k$ is now green and the diagram is still in phase-polynomial form.
	\item By maximality of $f_h$, we have that $f_h$ is only connected to green spiders $f_n$ with $f_n < f_h$. By minimality of $d_k$, we have that $d_k$ is only connected to red spiders $d_m$ with $d_k < d_m$.
\end{enumerate}
This procedure strictly reduces the number of connections between red spiders and green spiders that appear later in the order.
Hence repeating it will eventually terminate, transforming any phase-polynomial form diagram into canonical form.
\end{proof}

\begin{remark}
	The canonical form is unique only
	up to the choice of order on the qubits; different orders may yield
	different `canonical forms'. Thus the choice of order is arbitrary (but
	needs to happen in advance, independently of the diagram considered) -- we
	have chosen top-to-bottom for simplicity.
\end{remark}

\section{Completeness}
\label{s:completeness}

Having established a canonical form for stabilizer ZX-calculus diagrams, we now give the completeness proof.
This first requires proving that a new rewrite rule preserves the existence of Pauli flow: an inverse to the $Z$-deletion rule of Lemma~\ref{lem:Z-delete}.
While there has been a lot of previous research on rewrite rules which reduce the number of spiders while preserving flow conditions, rewrite rules which increase the number of spiders have not been studied beyond introducing new degree-2 vertices along input or output wires (e.g.\ \cite[Lemma~4.1]{Backens_2021}).

\subsection{Inserting new $Z$-measured qubits}
\label{s:Z-insert}

Inserting $Z$-measured qubits into MBQC+LC form diagram preserves the existence of Pauli flow.

\begin{proposition}\label{prop:Z-insertion}
	Let $G=(V,E, I, O, \lambda)$ be a labelled open graph with Pauli flow and let $W\subseteq V$ be some arbitrary subset of the vertices. Then $G' = (V', E', I, O, \lambda')$ has a Pauli flow, where $V' = V \cup \{x\}$, $E' = E \cup \{(x, w) \mid w\in W\}$ with $\ld'(v)=\ld(v)$ if $v\neq x$ and $\ld'(x)=Z$.
\end{proposition}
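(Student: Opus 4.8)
The plan is to construct an explicit Pauli flow $(p',\prec')$ for $G'$ by extending the given flow $(p,\prec)$ of $G$ with the least possible data for the new vertex. Since $x$ is $Z$-measured, condition~\ref{P8} forces $x\in p'(x)$, so the cleanest choice is $p'(x)=\{x\}$; for every original vertex $u\in V\setminus O$ I would keep $p'(u)=p(u)$. This is well-typed: $p(u)\sse V\setminus I\sse V'\setminus I$, and $x\notin I$ so $\{x\}\sse V'\setminus I$. For the order, I would take $\prec'$ to be the transitive closure of $\prec$ together with the relations $x\prec' w$ for every $w\in W$. Because $x$ is a fresh vertex placed below existing vertices only (nothing is forced below $x$), no cycle is created and $\prec'$ is a genuine partial order extending $\prec$.

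The key structural observation is that $N_{G'}(x)=W$, and the new edges alter the neighbourhood only of vertices in $W$. Hence for any $S\sse V$ we have $x\notin S$, so $N_{G'}(v)\cap S = N_G(v)\cap S$ for every $v\in V$, and therefore membership of an original vertex $v\in V$ in $\mathrm{Odd}_{G'}(S)$ is unchanged from $\mathrm{Odd}_G(S)$; the only vertex whose membership can be new is $x$ itself, with $x\in\mathrm{Odd}_{G'}(S)\Leftrightarrow \abs{W\cap S}$ odd. Since $p'(u)=p(u)\sse V$ for original $u$, it follows that $\mathrm{Odd}_{G'}(p'(u))$ agrees with $\mathrm{Odd}_G(p(u))$ on all of $V$ and differs at most by including $x$. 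Consequently the ``self'' conditions \ref{P4}--\ref{P9} for original vertices, which only test membership of $u$ itself in $p'(u)$ and $\mathrm{Odd}_{G'}(p'(u))$, are inherited verbatim from the flow of $G$.

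It then remains to check the ordering conditions \ref{P1}--\ref{P3}. For an original $u$, the only genuinely new situation is when the affected vertex $v$ equals $x$; but each of \ref{P1}, \ref{P2}, \ref{P3} carries a hypothesis on $\lambda(v)$ (namely $\lambda(v)\notin\{X,Y\}$, $\lambda(v)\notin\{Y,Z\}$, and $\lambda(v)=Y$ respectively) that is never satisfied by a $Z$-measured vertex, so these conditions are vacuous at $v=x$ and otherwise unchanged. For the new vertex $x$ itself: \ref{P8} holds since $x\in\{x\}$; \ref{P1} is vacuous as $p'(x)=\{x\}$ contains no vertex other than $x$; and using $\mathrm{Odd}_{G'}(p'(x))=W$, conditions \ref{P2} and \ref{P3} reduce to constraints on the neighbours $w\in W$, which are exactly the vertices placed after $x$ in $\prec'$, so the required relations $x\prec' w$ make both conditions hold (for \ref{P3}, any $Y$-measured $v\notin W$ satisfies the biconditional trivially, and any $v\in W$ has $x\prec' v$ so its premise fails).

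The one point that really needs care — and the only place the hypotheses do any work — is that inserting $x$ enlarges the odd neighbourhoods of the existing correction sets by possibly adding $x$, and one must confirm this never forces a new order relation on the original vertices. This is precisely where the $Z$-label of $x$ is essential: every condition that would promote ``$x\in\mathrm{Odd}_{G'}(p(u))$'' to an order requirement is guarded by a hypothesis excluding $Z$-measured vertices, so the insertion is harmless. I expect the verification of \ref{P1}--\ref{P9} to be routine once these observations are in place; the substantive content is the bookkeeping that the interpretation of $x$ as a $Z$-measurement is exactly what neutralises the new odd-neighbourhood entries, which is also what distinguishes this from the (false-in-general) insertion of a non-$Z$-measured qubit, and thereby establishes the inverse of Lemma~\ref{lem:Z-delete}.
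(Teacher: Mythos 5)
Your construction is exactly the paper's: the same correction function ($p'(x)=\{x\}$, $p'(u)=p(u)$ otherwise), the same order (transitive closure of $\prec$ with $x$ placed before its neighbours $W$), and the same observation that the only change to odd neighbourhoods of existing correction sets is the possible addition of $x$, which is neutralised by $\ld'(x)=Z$. The argument is correct overall, with one slip in the justification of condition~\ref{P1} at $v=x$: you claim its hypothesis $\lambda(v)\notin\{X,Y\}$ is never satisfied by a $Z$-measured vertex, but $Z\notin\{X,Y\}$ holds, so that hypothesis \emph{is} satisfied; the condition is nonetheless vacuous there for the reason you already established earlier, namely that $x\notin p'(u)=p(u)\sse V$ for every original $u$, so the premise $v\in p'(u)$ fails. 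This is also the reason the paper gives, and it matters: had $x$ been allowed into some correction set, \ref{P1} would genuinely force $u\prec' x$.
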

\begin{proof}
	Let $(p,\prec)$ be a Pauli flow for $G$ and define $p': V'\setminus O \to \mathcal{P}(V'\setminus I)$ by $p'(v):=p(v)$ if $v\neq x$ and $p'(x) := \{x\}$.
	For vertices from the original graph, measurement planes and correction sets remain the same while the only change to odd neighbourhoods is that $x$ may be added. Thus conditions \ref{P4}--\ref{P7} and~\ref{P9} remain trivially satisfied. Condition \ref{P8} holds for $x$ as $x \in p'(x)$, and for all other $Z$-measured vertices because $(p,\prec)$ is a Pauli flow.
	
	Let $\prec'$ be the transitive closure of ${\prec} \cup \{(x,v) | v \in N_{G'}(x)\}$.
	Then $\prec'$ is a partial order because $\prec	$ is a partial order and we only add successors for $x$.
	Now, condition \ref{P1} of Pauli flow is inherited from $(p,\prec)$ for all $u \in V\setminus O$ because $u \not \in p'(x)$. Condition \ref{P2} is satisfied for all $u \in V\setminus O$ because $\lambda(x)=Z$ and $(p,\prec)$ is a Pauli flow. Condition \ref{P3} is inherited because the new vertex has only successors.
\end{proof}

\subsection{Complete flow-preserving rewrite rules}

We are now able to assemble the main proof. In the following, we will say an MBQC+LC-form diagram has \emph{no interior spiders} if the MBQC-form part of the diagram (i.e.\ ignoring the local Cliffords) has no interior vertices ($V\setminus(I\cup O)= \emptyset$). Additionally, we say an MBQC+LC-form diagram has Pauli flow if its MBQC-form part has Pauli flow (analogous to gflow in \cite[Section~4.1]{Backens_2021}).

\begin{theorem}\label{thm:completeness}
Given two equivalent stabilizer MBQC+LC-form diagrams $D$ and $D'$ with Pauli flow and satisfying $\intf{D}\cong\intf{D'}$, there exists a sequence of rewrite rules -- each preserving the existence of Pauli flow and preserving the MBQC+LC-form -- transforming $D$ into $D'$.
\end{theorem}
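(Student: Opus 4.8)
The plan is to prove completeness in the standard way, via a unique canonical form together with reversibility of every rewrite used. Concretely, I would show that every stabilizer MBQC+LC-form diagram with Pauli flow can be transformed, using only flow-preserving and MBQC+LC-form-preserving rewrites, into the canonical diagram (Definition~\ref{def:canonical-diagram}) associated to its semantics. Since $\intf{D}\cong\intf{D'}$ forces $D$ and $D'$ to have the same canonical form $C$, and since each rewrite I use is reversible while preserving the existence of Pauli flow, I can then concatenate the reduction $D\to C$ with the reverse of the reduction $D'\to C$ to obtain a flow- and form-preserving sequence $D\to C\to D'$.

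The core of the argument is the reduction to canonical form. First I would eliminate interior spiders, using local complementation (Lemma~\ref{lem:local-complementation}) and pivoting (Lemma~\ref{lem:pivot}) to convert interior vertices into $Z$-measured (Pauli) vertices, and then removing those with $Z$-deletion (Lemma~\ref{lem:Z-delete}), iterating until the diagram has no interior spiders. This is the stabilizer analogue of the Clifford-simplification procedures of \cite{Duncan_2020,Backens_2021}, and all three rules preserve both the MBQC+LC form and the existence of Pauli flow. To connect the resulting boundary-only diagram with the state-based canonical form of Section~\ref{s:canonical}, I would pass to the corresponding state via map--state duality (so that $\intf{D}\cong\intf{D'}$ becomes an equality of stabilizer states), bring that state into phase-polynomial form by colour-changing the $Z$-measurement effects and unfusing phases (cf.\ Observation~\ref{obs:rGS-LC_to_phasepoly} and Example~\ref{ex:rGS-LCtoPhasePoly}), and finally invoke Proposition~\ref{prop:phasepoly-to-canonical} to reach canonical form using only local complementation and pivoting.

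Reversibility is what ties the two halves together, and it is exactly where the new rule is needed. By Observation~\ref{obs:inverses}, local complementation and pivoting are their own inverses (after four, respectively two, applications), and the colour-change and phase-unfusing steps are trivially invertible; the one remaining rule, $Z$-deletion, is made reversible precisely by the new $Z$-insertion rule of Proposition~\ref{prop:Z-insertion}, which reinserts a $Z$-measured vertex attached to an arbitrary neighbourhood while preserving Pauli flow. Hence the full reduction $D\to C$ can be run backwards, the reduction $D'\to C$ can be reversed, and the completeness claim follows.

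The step I expect to be the main obstacle is reconciling the state-based canonical form of Section~\ref{s:canonical} with the input/output structure of an MBQC+LC \emph{map} diagram: the canonical form is defined for states with a fixed ordering of the qubits, whereas an MBQC+LC diagram distinguishes inputs, outputs, and their measurement data, and map--state duality treats these symmetrically. I would need to check that the interior-elimination phase never gets stuck on a vertex that resists conversion to a $Z$-measured vertex, and that translating the canonical state back to a map yields a genuine MBQC+LC-form diagram with the correct inputs $I$, outputs $O$, and a valid Pauli flow. Verifying that this boundary bookkeeping is compatible with the uniqueness argument of Lemma~\ref{lem:canonical-variables-unique} is where I would expect the real work to lie.
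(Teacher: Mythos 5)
Your proposal is correct and follows essentially the same route as the paper: both reduce $D$ and $D'$ to the unique canonical form of Section~\ref{s:canonical} via interior-spider elimination (local complementation, pivoting, $Z$-deletion), the passage to phase-polynomial form, and Proposition~\ref{prop:phasepoly-to-canonical}, and then invert one of the two reductions using Observation~\ref{obs:inverses} together with $Z$-insertion (Proposition~\ref{prop:Z-insertion}) as the inverse of $Z$-deletion. The only notable difference is bookkeeping: the paper first deletes all $Z$-measured vertices so it can invoke the simplification of \cite{Duncan_2020} directly, and it routes explicitly through rGS-LC form via \cite[Theorem 13]{Backens_2014} before applying Observation~\ref{obs:rGS-LC_to_phasepoly} -- a step your sketch glosses over but which is needed for that observation to apply.
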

\begin{proof}
We begin by deleting all $Z$-measured vertices from both diagrams, keeping track of which vertices we delete and their set of neighbours when deleted.
The resulting diagrams has Pauli flow by Lemma~\ref{lem:Z-delete}. After all $Z$-measured vertices are removed, the MBQC-form parts of the diagrams (ignoring the local Cliffords) only have $X$ and $Y$ measurements and are thus of the kind considered in \cite{Duncan_2020}.
Then, there exists a terminating procedure (consisting of a sequence of local complementations, pivots and $Z$-deletions) rewriting the two diagrams into MBQC+LC-form diagrams $N$ and $N'$ which contain no interior spiders \cite[Theorem 5.4]{Duncan_2020}. Since local complementation and pivoting also preserve the existence of Pauli flow (Lemmas \ref{lem:local-complementation} and \ref{lem:pivot}), $N$ and $N'$ will also have Pauli flow.

As only $X$ and $Y$ measurements remain, they can be spider-merged and unmerged through each qubit to become local Cliffords on the outputs, thus $N$ and $N'$ are equivalent to GS-LC form diagrams. By \cite[Theorem 13]{Backens_2014}, every GS-LC form diagram can be rewritten into rGS-LC form using a sequence of local complementations, thus this step preserves Pauli flow. By Observation~\ref{obs:rGS-LC_to_phasepoly}, we can then rewrite each diagram into phase polynomial form, again using only local complementations (along with some operations on the local Cliffords that do not alter the flow), thus preserving Pauli flow.
Finally, by Proposition~\ref{prop:phasepoly-to-canonical}, we can rewrite each diagram into canonical form\footnotemark[\value{footnote}].
The rewrite steps use only local complementations and pivoting, so they preserve Pauli flow.
The resulting diagrams are equivalent and the canonical form is unique, so we have found a sequence of local complementations, pivots and $Z$-deletions rewriting $D$ and $D'$ into the same canonical form diagram $C$.

\footnotetext{Up to map-state duality and colour changing vertices with Hadamard operators.}
	
By Observation~\ref{obs:inverses}, local complementation and pivot can be inverted. Furthermore, $Z$-insert is a Pauli-flow preserving inverse to $Z$-delete.
Thus the sequence of rewrites from $D'$ to $C$ can be inverted while still preserving Pauli-flow.
By rewriting $D$ to $C$, then rewriting $C$ to $D'$, we obtain a sequence of flow-preserving rewrite rules transforming $D$ into $D'$.
This completes the proof.
\end{proof}

\begin{example}
We shall give a short example of this rewrite procedure in action.
	Consider the following two MBQC+LC-form diagrams, which we will call $D$ and $D'$, and which satisfy $\intf{D}\cong\intf{D'}$ by (non-flow preserving) diagram simplification techniques.
	\begin{center}
		  \begin{tikzpicture}[scale=0.5]
	\begin{pgfonlayer}{nodelayer}
		\node [style=Z dot] (37) at (-8, 1) {};
		\node [style=Z dot] (38) at (-8, -1) {};
		\node [style=Z dot] (39) at (-6, 1) {};
		\node [style=Z dot] (40) at (-6, -1) {};
		\node [style=Z dot] (41) at (-9, 0) {};
		\node [style=Z dot] (42) at (-5, 0) {};
		\node [style=Z phase dot] (43) at (-5, 1) {$\frac{\pi}{2}$};
		\node [style=Z phase dot] (44) at (-5, -1) {$-\frac{\pi}{2}$};
		\node [style=Z phase dot] (45) at (-8, 2) {$\frac{\pi}{2}$};
		\node [style=Z phase dot] (46) at (-8, -2.25) {$\pi$};
		\node [style=Z phase dot] (47) at (-10, 0) {$\frac{\pi}{2}$};
		\node [style=Z phase dot] (48) at (-4, 0) {$\frac{\pi}{2}$};
		\node [style=X phase dot] (49) at (-4, 1) {$\frac{\pi}{2}$};
		\node [style=X phase dot] (50) at (-4, -1) {$\frac{\pi}{2}$};
		\node [style=X phase dot] (51) at (-9, 1) {$-\frac{\pi}{2}$};
		\node [style=Z phase dot] (52) at (-10, 1) {$\frac{\pi}{2}$};
		\node [style=none] (53) at (-11, 1) {};
		\node [style=none] (54) at (-11, -1) {};
		\node [style=none] (55) at (-3, 1) {};
		\node [style=none] (56) at (-3, -1) {};
		\node [style=Z dot] (57) at (4.5, 1) {};
		\node [style=Z dot] (58) at (4.5, -1) {};
		\node [style=Z dot] (59) at (6, 1.75) {};
		\node [style=Z dot] (60) at (7.5, 1) {};
		\node [style=Z dot] (61) at (7.5, -1) {};
		\node [style=none] (62) at (10.25, 1) {};
		\node [style=none] (63) at (10.25, -1) {};
		\node [style=none] (64) at (3, 1) {};
		\node [style=none] (65) at (3, -1) {};
		\node [style=hadamard] (66) at (3.75, -1) {};
		\node [style=Z dot] (67) at (6, -2) {};
		\node [style=Z phase dot] (68) at (7.5, -2) {$\frac{\pi}{2}$};
		\node [style=Z phase dot] (69) at (4.5, -1.75) {$\frac{\pi}{2}$};
		\node [style=Z phase dot] (70) at (8.5, 1) {$-\frac{\pi}{2}$};
		\node [style=Z phase dot] (71) at (8.5, -1) {$\frac{\pi}{2}$};
		\node [style=Z phase dot] (72) at (7.5, 1.75) {$\frac{\pi}{2}$};
		\node [style=hadamard] (73) at (9.5, 1) {};
		\node [style=Z dot] (74) at (4.5, 1.75) {};
	\end{pgfonlayer}
	\begin{pgfonlayer}{edgelayer}
		\draw [style=hadamard edge] (41) to (38);
		\draw [style=hadamard edge] (41) to (37);
		\draw [style=hadamard edge] (37) to (39);
		\draw [style=hadamard edge] (37) to (40);
		\draw [style=hadamard edge] (37) to (38);
		\draw [style=hadamard edge] (38) to (40);
		\draw [style=hadamard edge] (40) to (42);
		\draw [style=hadamard edge] (42) to (39);
		\draw [in=90, out=-90] (45) to (37);
		\draw (46) to (38);
		\draw (41) to (47);
		\draw (39) to (43);
		\draw (40) to (44);
		\draw (42) to (48);
		\draw (53.center) to (52);
		\draw (52) to (51);
		\draw (51) to (37);
		\draw (54.center) to (38);
		\draw (43) to (49);
		\draw (49) to (55.center);
		\draw (44) to (50);
		\draw (50) to (56.center);
		\draw [style=hadamard edge] (57) to (59);
		\draw [style=hadamard edge] (59) to (60);
		\draw [style=hadamard edge] (57) to (61);
		\draw (64.center) to (57);
		\draw (65.center) to (66);
		\draw (66) to (58);
		\draw (58) to (69);
		\draw (67) to (68);
		\draw [style=hadamard edge] (58) to (67);
		\draw [style=hadamard edge] (67) to (61);
		\draw (60) to (70);
		\draw (61) to (71);
		\draw (72) to (59);
		\draw [style=hadamard edge] (57) to (58);
		\draw (60) to (70);
		\draw (70) to (73);
		\draw (73) to (62.center);
		\draw (71) to (63.center);
		\draw (74) to (57);
	\end{pgfonlayer}
\end{tikzpicture}
	\end{center}
Using the procedure from the proof of Theorem~\ref{thm:completeness}, we first rewrite $D$ to phase polynomial form.
 Perform triple local complementations (i.e.\ `inverse local complementations') about both the left-most and right-most qubits in the MBQC-form part, then apply $Z$-deletion to these qubits. A local complementation about the top left qubit gives us the fourth diagram, which is in rGS-LC form and in fact is equivalent to the left-most diagram in Example~\ref{ex:rGS-LCtoPhasePoly} up to map-state duality. We then obtain the final diagram by following the procedure in Observation~\ref{obs:rGS-LC_to_phasepoly}; note that this diagram is already in canonical form (up to map-state duality and colour changing spiders with Hadamard gates in their vertex operators) assuming that the input qubits have lower indices than the output qubits.
\begin{center}
	\begin{tikzpicture}[scale=0.5]
	\begin{pgfonlayer}{nodelayer}
		\node [style=Z dot] (0) at (-4, 1) {};
		\node [style=Z dot] (1) at (-4, -1) {};
		\node [style=Z dot] (2) at (-2.75, 1) {};
		\node [style=Z dot] (3) at (-2.75, -1) {};
		\node [style=Z dot] (4) at (-5, 0) {};
		\node [style=Z dot] (5) at (-2, 0) {};
		\node [style=Z phase dot] (6) at (-1.75, 1) {$\frac{\pi}{2}$};
		\node [style=Z phase dot] (7) at (-1.75, -1) {$-\frac{\pi}{2}$};
		\node [style=Z phase dot] (8) at (-4, 2) {$\frac{\pi}{2}$};
		\node [style=Z phase dot] (9) at (-4, -2) {$\pi$};
		\node [style=Z phase dot] (10) at (-6, 0) {$\frac{\pi}{2}$};
		\node [style=Z phase dot] (11) at (-1, 0) {$\frac{\pi}{2}$};
		\node [style=X phase dot] (12) at (-0.75, 1) {$\frac{\pi}{2}$};
		\node [style=X phase dot] (13) at (-0.75, -1) {$\frac{\pi}{2}$};
		\node [style=X phase dot] (14) at (-5, 1) {$-\frac{\pi}{2}$};
		\node [style=Z phase dot] (15) at (-6.25, 1) {$\frac{\pi}{2}$};
		\node [style=none] (16) at (-6.75, 1) {};
		\node [style=none] (17) at (-6.75, -1) {};
		\node [style=none] (18) at (0, 1) {};
		\node [style=none] (19) at (0, -1) {};
		\node [style=Z dot] (20) at (3.25, 1) {};
		\node [style=Z dot] (21) at (3.25, -1) {};
		\node [style=Z dot] (22) at (4.5, 1) {};
		\node [style=Z dot] (23) at (4.5, -1) {};
		\node [style=Z dot] (24) at (2.5, 0) {};
		\node [style=Z dot] (25) at (5.5, 0) {};
		\node [style=Z phase dot] (27) at (5.5, -1) {$\pi$};
		\node [style=Z phase dot] (29) at (3.25, -2) {$\frac{\pi}{2}$};
		\node [style=X phase dot] (32) at (6, 1) {$\frac{\pi}{2}$};
		\node [style=X phase dot] (33) at (6.5, -1) {$\frac{\pi}{2}$};
		\node [style=X phase dot] (34) at (2.25, 1) {$-\frac{\pi}{2}$};
		\node [style=Z phase dot] (35) at (1, 1) {$\frac{\pi}{2}$};
		\node [style=none] (36) at (0.5, 1) {};
		\node [style=none] (37) at (0.5, -1) {};
		\node [style=none] (38) at (7.25, 1) {};
		\node [style=none] (39) at (7.25, -1) {};
		\node [style=X dot] (40) at (1.5, 0) {};
		\node [style=X dot] (41) at (6.5, 0) {};
		\node [style=Z dot] (42) at (10.5, 1) {};
		\node [style=Z dot] (43) at (10.5, -1) {};
		\node [style=Z dot] (44) at (11.75, 1) {};
		\node [style=Z dot] (45) at (11.75, -1) {};
		\node [style=Z phase dot] (48) at (12.5, -1) {$\pi$};
		\node [style=Z phase dot] (50) at (10.5, -2) {$\frac{\pi}{2}$};
		\node [style=X phase dot] (51) at (13.25, 1) {$\frac{\pi}{2}$};
		\node [style=X phase dot] (52) at (13.5, -1) {$\frac{\pi}{2}$};
		\node [style=X phase dot] (53) at (9.5, 1) {$-\frac{\pi}{2}$};
		\node [style=Z phase dot] (54) at (8.25, 1) {$\frac{\pi}{2}$};
		\node [style=none] (55) at (7.75, 1) {};
		\node [style=none] (56) at (7.75, -1) {};
		\node [style=none] (57) at (14.5, 1) {};
		\node [style=none] (58) at (14.5, -1) {};
		\node [style=Z dot] (59) at (15.75, 1) {};
		\node [style=Z dot] (60) at (15.75, -1) {};
		\node [style=Z dot] (61) at (17, 1) {};
		\node [style=Z dot] (62) at (17, -1) {};
		\node [style=Z phase dot] (63) at (18, -1) {$\pi$};
		\node [style=Z phase dot] (64) at (15.75, 2) {$\frac{\pi}{2}$};
		\node [style=Z phase dot] (65) at (15.75, -2.25) {$\frac{\pi}{2}$};
		\node [style=X phase dot] (66) at (19.25, 1) {$\frac{\pi}{2}$};
		\node [style=X phase dot] (67) at (19.25, -1) {$\frac{\pi}{2}$};
		\node [style=none] (70) at (15, 1) {};
		\node [style=none] (71) at (15, -1) {};
		\node [style=none] (72) at (20.25, 1) {};
		\node [style=none] (73) at (20.25, -1) {};
		\node [style=Z phase dot] (74) at (18, 1) {$-\frac{\pi}{2}$};
		\node [style=none] (75) at (14.75, 0) {$=$};
		\node [style=none] (76) at (7.5, 0) {$=$};
		\node [style=none] (77) at (0.25, 0) {$=$};
		\node [style=Z dot] (78) at (21.5, 1) {};
		\node [style=Z dot] (79) at (21.5, -1) {};
		\node [style=Z dot] (80) at (23, 1) {};
		\node [style=Z dot] (81) at (23, -1) {};
		\node [style=Z phase dot] (83) at (21.5, 2) {$\frac{\pi}{2}$};
		\node [style=Z dot] (84) at (21.5, -2.25) {};
		\node [style=none] (87) at (20.75, 1) {};
		\node [style=none] (88) at (20.75, -1) {};
		\node [style=none] (89) at (25.25, 1) {};
		\node [style=none] (90) at (25.25, -1) {};
		\node [style=Z phase dot] (91) at (23.75, 1) {$\pi$};
		\node [style=none] (92) at (20.5, 0) {$=$};
		\node [style=hadamard] (93) at (24.5, 1) {};
		\node [style=hadamard] (94) at (24, -1) {};
		\node [style=Z dot] (95) at (10.5, 2) {};
		\node [style=Z dot] (96) at (3.25, 2) {};
	\end{pgfonlayer}
	\begin{pgfonlayer}{edgelayer}
		\draw [style=hadamard edge] (4) to (1);
		\draw [style=hadamard edge] (4) to (0);
		\draw [style=hadamard edge] (0) to (2);
		\draw [style=hadamard edge] (0) to (3);
		\draw [style=hadamard edge] (0) to (1);
		\draw [style=hadamard edge] (1) to (3);
		\draw [style=hadamard edge] (3) to (5);
		\draw [style=hadamard edge] (5) to (2);
		\draw (8) to (0);
		\draw (9) to (1);
		\draw (4) to (10);
		\draw (2) to (6);
		\draw (3) to (7);
		\draw (5) to (11);
		\draw (16.center) to (15);
		\draw (15) to (14);
		\draw (14) to (0);
		\draw (17.center) to (1);
		\draw (6) to (12);
		\draw (12) to (18.center);
		\draw (7) to (13);
		\draw (13) to (19.center);
		\draw [style=hadamard edge] (24) to (21);
		\draw [style=hadamard edge] (24) to (20);
		\draw [style=hadamard edge] (20) to (22);
		\draw [style=hadamard edge] (20) to (23);
		\draw [style=hadamard edge] (21) to (23);
		\draw [style=hadamard edge] (23) to (25);
		\draw [style=hadamard edge] (25) to (22);
		\draw (29) to (21);
		\draw (23) to (27);
		\draw (36.center) to (35);
		\draw [in=180, out=0] (35) to (34);
		\draw (34) to (20);
		\draw (37.center) to (21);
		\draw (32) to (38.center);
		\draw (27) to (33);
		\draw (33) to (39.center);
		\draw (40) to (24);
		\draw (25) to (41);
		\draw [style=hadamard edge] (22) to (23);
		\draw (22) to (32);
		\draw [style=hadamard edge] (42) to (44);
		\draw [style=hadamard edge] (42) to (45);
		\draw [style=hadamard edge] (43) to (45);
		\draw (50) to (43);
		\draw (45) to (48);
		\draw (55.center) to (54);
		\draw (54) to (53);
		\draw (53) to (42);
		\draw (56.center) to (43);
		\draw (51) to (57.center);
		\draw (48) to (52);
		\draw (52) to (58.center);
		\draw [style=hadamard edge] (44) to (45);
		\draw (44) to (51);
		\draw [style=hadamard edge] (59) to (61);
		\draw [style=hadamard edge] (59) to (62);
		\draw [style=hadamard edge] (60) to (62);
		\draw (64) to (59);
		\draw (65) to (60);
		\draw (62) to (63);
		\draw (71.center) to (60);
		\draw (66) to (72.center);
		\draw (63) to (67);
		\draw (67) to (73.center);
		\draw (70.center) to (59);
		\draw [style=hadamard edge] (78) to (80);
		\draw [style=hadamard edge] (79) to (81);
		\draw (83) to (78);
		\draw (84) to (79);
		\draw (88.center) to (79);
		\draw (87.center) to (78);
		\draw [style=hadamard edge] (78) to (79);
		\draw [style=hadamard edge] (78) to (81);
		\draw (80) to (91);
		\draw (91) to (93);
		\draw (93) to (89.center);
		\draw (81) to (94);
		\draw (94) to (90.center);
		\draw (95) to (42);
		\draw (96) to (20);
		\draw (61) to (74);
		\draw (74) to (66);
	\end{pgfonlayer}
\end{tikzpicture}
\end{center}
For $D'$, we perform local complementation about the two interior qubits of the MBQC-form part (here we have done this about the top qubit first, then the bottom qubit), and $Z$-delete both qubits.
\begin{center}
	\begin{tikzpicture}[scale=0.5]
	\begin{pgfonlayer}{nodelayer}
		\node [style=Z dot] (18) at (-12.75, 1) {};
		\node [style=Z dot] (19) at (-12.75, -1) {};
		\node [style=Z dot] (20) at (-11.25, 1.75) {};
		\node [style=Z dot] (21) at (-9.75, 1) {};
		\node [style=Z dot] (22) at (-9.75, -1) {};
		\node [style=none] (24) at (-7.25, 1) {};
		\node [style=none] (25) at (-7.25, -1) {};
		\node [style=none] (26) at (-14.25, 1) {};
		\node [style=none] (27) at (-14.25, -1) {};
		\node [style=hadamard] (28) at (-13.5, -1) {};
		\node [style=Z dot] (30) at (-11.25, -2) {};
		\node [style=Z phase dot] (31) at (-9.75, -2) {$\frac{\pi}{2}$};
		\node [style=Z phase dot] (32) at (-12.75, -1.75) {$\frac{\pi}{2}$};
		\node [style=Z phase dot] (33) at (-8.75, 1) {$-\frac{\pi}{2}$};
		\node [style=Z phase dot] (34) at (-8.75, -1) {$\frac{\pi}{2}$};
		\node [style=Z phase dot] (85) at (-9.75, 1.75) {$\frac{\pi}{2}$};
		\node [style=none] (90) at (-6.75, 0) {$=$};
		\node [style=hadamard] (93) at (-8, 1) {};
		\node [style=Z dot] (94) at (-4.5, 1) {};
		\node [style=Z dot] (95) at (-4.5, -1) {};
		\node [style=Z dot] (96) at (-3, 1.75) {};
		\node [style=Z dot] (97) at (-1.5, 1) {};
		\node [style=Z dot] (98) at (-1.5, -1) {};
		\node [style=none] (99) at (1, 1) {};
		\node [style=none] (100) at (1, -1) {};
		\node [style=none] (101) at (-6, 1) {};
		\node [style=none] (102) at (-6, -1) {};
		\node [style=hadamard] (103) at (-5.25, -1) {};
		\node [style=Z phase dot] (104) at (-4.5, 1.75) {$-\frac{\pi}{2}$};
		\node [style=Z dot] (105) at (-3, -2) {};
		\node [style=Z phase dot] (106) at (-1.5, -2) {$\frac{\pi}{2}$};
		\node [style=Z phase dot] (107) at (-4.5, -1.75) {$\frac{\pi}{2}$};
		\node [style=Z phase dot] (108) at (-0.5, 1) {$\pi$};
		\node [style=Z phase dot] (109) at (-0.5, -1) {$\frac{\pi}{2}$};
		\node [style=none] (111) at (1.5, 0) {$=$};
		\node [style=hadamard] (112) at (0.25, 1) {};
		\node [style=X dot] (113) at (-1.5, 1.75) {};
		\node [style=Z dot] (114) at (3.5, 1) {};
		\node [style=Z dot] (115) at (3.5, -1) {};
		\node [style=Z dot] (116) at (5, 1.75) {};
		\node [style=Z dot] (117) at (6.5, 1) {};
		\node [style=Z dot] (118) at (6.5, -1) {};
		\node [style=none] (119) at (9, 1) {};
		\node [style=none] (120) at (9, -1) {};
		\node [style=none] (121) at (2, 1) {};
		\node [style=none] (122) at (2, -1) {};
		\node [style=hadamard] (123) at (2.75, -1) {};
		\node [style=Z phase dot] (124) at (3.5, 1.75) {$-\frac{\pi}{2}$};
		\node [style=Z dot] (125) at (5, -2) {};
		\node [style=Z phase dot] (128) at (7.5, 1) {$\pi$};
		\node [style=none] (130) at (9.5, 0) {$=$};
		\node [style=hadamard] (131) at (8.25, 1) {};
		\node [style=X dot] (132) at (6.25, 1.75) {};
		\node [style=X dot] (133) at (6.25, -2) {};
		\node [style=Z dot] (134) at (3.5, -1.75) {};
		\node [style=Z dot] (135) at (11.5, 1) {};
		\node [style=Z dot] (136) at (11.5, -1) {};
		\node [style=Z dot] (138) at (14.5, 1) {};
		\node [style=Z dot] (139) at (14.5, -1) {};
		\node [style=none] (140) at (17.25, 1) {};
		\node [style=none] (141) at (17.25, -1) {};
		\node [style=none] (142) at (10, 1) {};
		\node [style=none] (143) at (10, -1) {};
		\node [style=hadamard] (144) at (10.75, -1) {};
		\node [style=Z phase dot] (145) at (11.5, 1.75) {$-\frac{\pi}{2}$};
		\node [style=Z phase dot] (147) at (15.5, 1) {$\pi$};
		\node [style=hadamard] (149) at (16.5, 1) {};
		\node [style=Z dot] (152) at (11.5, -1.75) {};
		\node [style=Z dot] (153) at (-12.75, 1.75) {};
	\end{pgfonlayer}
	\begin{pgfonlayer}{edgelayer}
		\draw [style=hadamard edge] (18) to (20);
		\draw [style=hadamard edge] (20) to (21);
		\draw [style=hadamard edge] (18) to (22);
		\draw (26.center) to (18);
		\draw (27.center) to (28);
		\draw (28) to (19);
		\draw (19) to (32);
		\draw (30) to (31);
		\draw [style=hadamard edge] (19) to (30);
		\draw [style=hadamard edge] (30) to (22);
		\draw (21) to (33);
		\draw (22) to (34);
		\draw (85) to (20);
		\draw [style=hadamard edge] (18) to (19);
		\draw (21) to (33);
		\draw (33) to (93);
		\draw (93) to (24.center);
		\draw (34) to (25.center);
		\draw [style=hadamard edge] (94) to (96);
		\draw [style=hadamard edge] (96) to (97);
		\draw [style=hadamard edge] (94) to (98);
		\draw (101.center) to (94);
		\draw (102.center) to (103);
		\draw (103) to (95);
		\draw (104) to (94);
		\draw (95) to (107);
		\draw (105) to (106);
		\draw [style=hadamard edge] (95) to (105);
		\draw [style=hadamard edge] (105) to (98);
		\draw (97) to (108);
		\draw (98) to (109);
		\draw [style=hadamard edge] (94) to (95);
		\draw (97) to (108);
		\draw (108) to (112);
		\draw (112) to (99.center);
		\draw (109) to (100.center);
		\draw (96) to (113);
		\draw [style=hadamard edge] (94) to (97);
		\draw [style=hadamard edge] (114) to (116);
		\draw [style=hadamard edge] (116) to (117);
		\draw [style=hadamard edge] (114) to (118);
		\draw (121.center) to (114);
		\draw (122.center) to (123);
		\draw (123) to (115);
		\draw (124) to (114);
		\draw [style=hadamard edge] (115) to (125);
		\draw [style=hadamard edge] (125) to (118);
		\draw (117) to (128);
		\draw [style=hadamard edge] (114) to (115);
		\draw (117) to (128);
		\draw (128) to (131);
		\draw (131) to (119.center);
		\draw (116) to (132);
		\draw [style=hadamard edge] (114) to (117);
		\draw [style=hadamard edge] (125) to (133);
		\draw (115) to (134);
		\draw (118) to (120.center);
		\draw [style=hadamard edge] (135) to (139);
		\draw (142.center) to (135);
		\draw (143.center) to (144);
		\draw (144) to (136);
		\draw (145) to (135);
		\draw (138) to (147);
		\draw [style=hadamard edge] (135) to (136);
		\draw (138) to (147);
		\draw (147) to (149);
		\draw (149) to (140.center);
		\draw [style=hadamard edge] (135) to (138);
		\draw (136) to (152);
		\draw (139) to (141.center);
		\draw [style=hadamard edge] (115) to (118);
		\draw [style=hadamard edge] (136) to (139);
		\draw (153) to (18);
	\end{pgfonlayer}
\end{tikzpicture}
\end{center}
This final diagram is already in phase polynomial form (up to map state duality and colour changing the spiders with Hadamard edges in their vertex operators) without us having to go through rGS-LC form. To rewrite this diagram into canonical form, all that remains is to pivot along the edge connecting the bottom left qubit to the bottom right qubit, giving the following diagram:

\begin{center}
	\begin{tikzpicture}[scale=0.5]
	\begin{pgfonlayer}{nodelayer}
		\node [style=Z dot] (0) at (-2, 1) {};
		\node [style=Z dot] (1) at (-2, -1) {};
		\node [style=Z dot] (2) at (1, 1) {};
		\node [style=Z dot] (3) at (1, -1) {};
		\node [style=none] (6) at (3.5, 1) {};
		\node [style=none] (7) at (3.5, -1) {};
		\node [style=none] (8) at (-3.5, 1) {};
		\node [style=none] (9) at (-3.5, -1) {};
		\node [style=Z phase dot] (11) at (-2, 1.75) {$\frac{\pi}{2}$};
		\node [style=hadamard] (13) at (2.25, -1) {};
		\node [style=Z dot] (14) at (-2, -1.75) {};
		\node [style=Z phase dot] (15) at (1.75, 1) {$\pi$};
		\node [style=hadamard] (16) at (2.75, 1) {};
	\end{pgfonlayer}
	\begin{pgfonlayer}{edgelayer}
		\draw [style=hadamard edge] (0) to (3);
		\draw (8.center) to (0);
		\draw (11) to (0);
		\draw (3) to (13);
		\draw (13) to (7.center);
		\draw (1) to (14);
		\draw (9.center) to (1);
		\draw [style=hadamard edge] (0) to (1);
		\draw (2) to (15);
		\draw (15) to (16);
		\draw (16) to (6.center);
		\draw [style=hadamard edge] (0) to (2);
		\draw [style=hadamard edge] (1) to (3);
	\end{pgfonlayer}
\end{tikzpicture}
\end{center}
We have therefore rewritten $D$ and $D'$ into the same canonical form diagram. Every rule used to re-write $D$ and $D'$ to canonical form is invertible and the inverses preserve Pauli flow, giving us a sequence of flow preserving rewrite rules taking $D$ to $D'$.
\end{example}

\section{Conclusions}
\label{s:conclusions}

We have presented the first flow-preserving rewrite rule that increases the number of qubits in an MBQC-form ZX-diagram, and shown that this -- together with existing rewrite rules that preserve the MBQC form -- is complete for stabilizer MBQC-form diagrams.
The completeness proof goes via a new canonical form.
The result may find applications in obfuscation or in more involved optimisation protocols.

Yet that is only the beginning of the investigation of flow-preserving rewrite rules and in future work we will consider more extensive sets of rewrite rules and ZX-diagrams.
The recent proof that circuit extraction from general unitary ZX-diagrams is \#P-hard \cite{deBeaudrap_2022} means this line of research is particularly important, as it allows us to explore the only family of ZX-diagrams for which a polynomial-time circuit-extraction algorithm is currently known.

Pauli flow is known not to be necessary for deterministic implementability of MBQC patterns with all-Pauli measurements \cite{Browne_2007}; it would also be interesting to see how it can be extended and what flow-preserving rewriting would look like under the new conditions.

\paragraph*{Acknowledgements}

Thanks to Hex Miller-Bakewell for helpful comments on earlier notes about the phase-polynomial form.

\bibliographystyle{eptcs}
\bibliography{refs}

\begin{thebibliography}{10}
\providecommand{\bibitemdeclare}[2]{}
\providecommand{\surnamestart}{}
\providecommand{\surnameend}{}
\providecommand{\urlprefix}{Available at }
\providecommand{\url}[1]{\texttt{#1}}
\providecommand{\href}[2]{\texttt{#2}}
\providecommand{\urlalt}[2]{\href{#1}{#2}}
\providecommand{\doi}[1]{doi:\urlalt{https://doi.org/#1}{#1}}
\providecommand{\eprint}[1]{arXiv:\urlalt{https://arxiv.org/abs/#1}{#1}}
\providecommand{\bibinfo}[2]{#2}

\bibitemdeclare{article}{amy_polynomial-time_2014}
\bibitem{amy_polynomial-time_2014}
\bibinfo{author}{Matthew \surnamestart Amy\surnameend}, \bibinfo{author}{Dmitri
  \surnamestart Maslov\surnameend} \& \bibinfo{author}{Michele \surnamestart
  Mosca\surnameend} (\bibinfo{year}{2014}):
  \emph{\bibinfo{title}{Polynomial-{Time} {T}-{Depth} {Optimization} of
  {Clifford}+{T} {Circuits} {Via} {Matroid} {Partitioning}}}.
\newblock {\slshape \bibinfo{journal}{IEEE Transactions on Computer-Aided
  Design of Integrated Circuits and Systems}}
  \bibinfo{volume}{33}(\bibinfo{number}{10}), pp. \bibinfo{pages}{1476--1489},
  \doi{10.1109/TCAD.2014.2341953}.

\bibitemdeclare{article}{Backens_2014}
\bibitem{Backens_2014}
\bibinfo{author}{Miriam \surnamestart Backens\surnameend}
  (\bibinfo{year}{2014}): \emph{\bibinfo{title}{The {ZX}-calculus is complete
  for stabilizer quantum mechanics}}.
\newblock {\slshape \bibinfo{journal}{New Journal of Physics}}
  \bibinfo{volume}{16}(\bibinfo{number}{9}), p. \bibinfo{pages}{093021},
  \doi{10.1088/1367-2630/16/9/093021}.

\bibitemdeclare{article}{Backens_2015}
\bibitem{Backens_2015}
\bibinfo{author}{Miriam \surnamestart Backens\surnameend}
  (\bibinfo{year}{2015}): \emph{\bibinfo{title}{Making the stabilizer
  {ZX}-calculus complete for scalars}}.
\newblock {\slshape \bibinfo{journal}{Electronic Proceedings in Theoretical
  Computer Science}} \bibinfo{volume}{195}, p. \bibinfo{pages}{17–32},
  \doi{10.4204/eptcs.195.2}.

\bibitemdeclare{article}{Backens_2021}
\bibitem{Backens_2021}
\bibinfo{author}{Miriam \surnamestart Backens\surnameend},
  \bibinfo{author}{Hector \surnamestart Miller-Bakewell\surnameend},
  \bibinfo{author}{Giovanni \surnamestart de~Felice\surnameend},
  \bibinfo{author}{Leo \surnamestart Lobski\surnameend} \&
  \bibinfo{author}{John \surnamestart van~de Wetering\surnameend}
  (\bibinfo{year}{2021}): \emph{\bibinfo{title}{There and back again: {A}
  circuit extraction tale}}.
\newblock {\slshape \bibinfo{journal}{Quantum}} \bibinfo{volume}{5}, p.
  \bibinfo{pages}{421}, \doi{10.22331/q-2021-03-25-421}.

\bibitemdeclare{misc}{deBeaudrap_2022}
\bibitem{deBeaudrap_2022}
\bibinfo{author}{Niel \surnamestart de~Beaudrap\surnameend},
  \bibinfo{author}{Aleks \surnamestart Kissinger\surnameend} \&
  \bibinfo{author}{John \surnamestart van~de Wetering\surnameend}
  (\bibinfo{year}{2022}): \emph{\bibinfo{title}{Circuit {Extraction} for
  {ZX}-diagrams can be $\#${P}-hard}}, \doi{10.48550/ARXIV.2202.09194}.
\newblock \urlprefix\url{https://arxiv.org/abs/2202.09194}.

\bibitemdeclare{article}{Bouchet_1988}
\bibitem{Bouchet_1988}
\bibinfo{author}{Andr\'{e} \surnamestart Bouchet\surnameend}
  (\bibinfo{year}{1987}): \emph{\bibinfo{title}{Graphic Presentations of
  Isotropic Systems}}.
\newblock {\slshape \bibinfo{journal}{Journal of Combinatorial Theory, Series
  B}} \bibinfo{volume}{45}(\bibinfo{number}{1}), p. \bibinfo{pages}{58–76},
  \doi{10.1016/0095-8956(88)90055-X}.

\bibitemdeclare{article}{broadbent_parallelizing_2009}
\bibitem{broadbent_parallelizing_2009}
\bibinfo{author}{Anne \surnamestart Broadbent\surnameend} \&
  \bibinfo{author}{Elham \surnamestart Kashefi\surnameend}
  (\bibinfo{year}{2009}): \emph{\bibinfo{title}{Parallelizing quantum
  circuits}}.
\newblock {\slshape \bibinfo{journal}{Theoretical Computer Science}}
  \bibinfo{volume}{410}(\bibinfo{number}{26}), pp. \bibinfo{pages}{2489--2510},
  \doi{10.1016/j.tcs.2008.12.046}.

\bibitemdeclare{article}{Browne_2007}
\bibitem{Browne_2007}
\bibinfo{author}{Daniel~E \surnamestart Browne\surnameend},
  \bibinfo{author}{Elham \surnamestart Kashefi\surnameend},
  \bibinfo{author}{Mehdi \surnamestart Mhalla\surnameend} \&
  \bibinfo{author}{Simon \surnamestart Perdrix\surnameend}
  (\bibinfo{year}{2007}): \emph{\bibinfo{title}{Generalized flow and
  determinism in measurement-based quantum computation}}.
\newblock {\slshape \bibinfo{journal}{New Journal of Physics}}
  \bibinfo{volume}{9}(\bibinfo{number}{8}), p. \bibinfo{pages}{250–250},
  \doi{10.1088/1367-2630/9/8/250}.

\bibitemdeclare{inproceedings}{cai_dichotomy_2018}
\bibitem{cai_dichotomy_2018}
\bibinfo{author}{Jin-Yi \surnamestart Cai\surnameend}, \bibinfo{author}{Pinyan
  \surnamestart Lu\surnameend} \& \bibinfo{author}{Mingji \surnamestart
  Xia\surnameend} (\bibinfo{year}{2018}): \emph{\bibinfo{title}{Dichotomy for
  {Real} {Holant}{\textasciicircum}c {Problems}}}.
\newblock In: {\slshape \bibinfo{booktitle}{Proceedings of the {Twenty}-{Ninth}
  {Annual} {ACM}-{SIAM} {Symposium} on {Discrete} {Algorithms}}},
  \bibinfo{publisher}{Society for Industrial and Applied Mathematics}, pp.
  \bibinfo{pages}{1802--1821}, \doi{10.1137/1.9781611975031.118}.

\bibitemdeclare{book}{Coecke_2017}
\bibitem{Coecke_2017}
\bibinfo{author}{Bob \surnamestart Coecke\surnameend} \& \bibinfo{author}{Aleks
  \surnamestart Kissinger\surnameend} (\bibinfo{year}{2017}):
  \emph{\bibinfo{title}{Picturing {Quantum} {Processes}: {A} {First} {Course}
  in {Quantum} {Theory} and {Diagrammatic} {Reasoning}}}.
\newblock \bibinfo{publisher}{Cambridge University Press},
  \doi{10.1017/9781316219317}.

\bibitemdeclare{article}{Danos_2006}
\bibitem{Danos_2006}
\bibinfo{author}{Vincent \surnamestart Danos\surnameend} \&
  \bibinfo{author}{Elham \surnamestart Kashefi\surnameend}
  (\bibinfo{year}{2006}): \emph{\bibinfo{title}{Determinism in the one-way
  model}}.
\newblock {\slshape \bibinfo{journal}{Phys. Rev. A}} \bibinfo{volume}{74}, p.
  \bibinfo{pages}{052310}, \doi{10.1103/PhysRevA.74.052310}.

\bibitemdeclare{article}{danos_parsimonious_2005}
\bibitem{danos_parsimonious_2005}
\bibinfo{author}{Vincent \surnamestart Danos\surnameend},
  \bibinfo{author}{Elham \surnamestart Kashefi\surnameend} \&
  \bibinfo{author}{Prakash \surnamestart Panangaden\surnameend}
  (\bibinfo{year}{2005}): \emph{\bibinfo{title}{Parsimonious and robust
  realizations of unitary maps in the one-way model}}.
\newblock {\slshape \bibinfo{journal}{Physical Review A}}
  \bibinfo{volume}{72}(\bibinfo{number}{6}), p. \bibinfo{pages}{064301},
  \doi{10.1103/PhysRevA.72.064301}.

\bibitemdeclare{article}{Dehaene_2003}
\bibitem{Dehaene_2003}
\bibinfo{author}{Jeroen \surnamestart Dehaene\surnameend} \&
  \bibinfo{author}{Bart \surnamestart De~Moor\surnameend}
  (\bibinfo{year}{2003}): \emph{\bibinfo{title}{Clifford group, stabilizer
  states, and linear and quadratic operations over GF(2)}}.
\newblock {\slshape \bibinfo{journal}{Phys. Rev. A}} \bibinfo{volume}{68}, p.
  \bibinfo{pages}{042318}, \doi{10.1103/PhysRevA.68.042318}.

\bibitemdeclare{article}{Duncan_2020}
\bibitem{Duncan_2020}
\bibinfo{author}{Ross \surnamestart Duncan\surnameend}, \bibinfo{author}{Aleks
  \surnamestart Kissinger\surnameend}, \bibinfo{author}{Simon \surnamestart
  Perdrix\surnameend} \& \bibinfo{author}{John \surnamestart van~de
  Wetering\surnameend} (\bibinfo{year}{2020}):
  \emph{\bibinfo{title}{Graph-theoretic {S}implification of {Q}uantum
  {C}ircuits with the {ZX}-calculus}}.
\newblock {\slshape \bibinfo{journal}{{Quantum}}} \bibinfo{volume}{4}, p.
  \bibinfo{pages}{279}, \doi{10.22331/q-2020-06-04-279}.

\bibitemdeclare{inproceedings}{Duncan_2009}
\bibitem{Duncan_2009}
\bibinfo{author}{Ross \surnamestart Duncan\surnameend} \&
  \bibinfo{author}{Simon \surnamestart Perdrix\surnameend}
  (\bibinfo{year}{2009}): \emph{\bibinfo{title}{Graph {States} and the
  {Necessity} of {Euler} {Decomposition}}}.
\newblock In \bibinfo{editor}{Klaus \surnamestart Ambos-Spies\surnameend},
  \bibinfo{editor}{Benedikt \surnamestart L{\"o}we\surnameend} \&
  \bibinfo{editor}{Wolfgang \surnamestart Merkle\surnameend}, editors:
  {\slshape \bibinfo{booktitle}{Mathematical Theory and Computational
  Practice}}, \bibinfo{publisher}{Springer Berlin Heidelberg},
  \bibinfo{address}{Berlin, Heidelberg}, pp. \bibinfo{pages}{167--177},
  \doi{10.1007/978-3-642-03073-4{\_}18}.

\bibitemdeclare{article}{Elliott_2008}
\bibitem{Elliott_2008}
\bibinfo{author}{Matthew~B. \surnamestart Elliott\surnameend},
  \bibinfo{author}{Bryan \surnamestart Eastin\surnameend} \&
  \bibinfo{author}{Carlton~M. \surnamestart Caves\surnameend}
  (\bibinfo{year}{2008}): \emph{\bibinfo{title}{Graphical description of the
  action of {Clifford} operators on stabilizer states}}.
\newblock {\slshape \bibinfo{journal}{Phys. Rev. A}} \bibinfo{volume}{77}, p.
  \bibinfo{pages}{042307}, \doi{10.1103/PhysRevA.77.042307}.

\bibitemdeclare{article}{Hu_2021}
\bibitem{Hu_2021}
\bibinfo{author}{Alexander~Tianlin \surnamestart Hu\surnameend} \&
  \bibinfo{author}{Andrey~Boris \surnamestart Khesin\surnameend}
  (\bibinfo{year}{2022}): \emph{\bibinfo{title}{Improved graph formalism for
  quantum circuit simulation}}.
\newblock {\slshape \bibinfo{journal}{Phys. Rev. A}} \bibinfo{volume}{105}, p.
  \bibinfo{pages}{022432}, \doi{10.1103/PhysRevA.105.022432}.

\bibitemdeclare{inproceedings}{Jeandel_2018}
\bibitem{Jeandel_2018}
\bibinfo{author}{Emmanuel \surnamestart Jeandel\surnameend},
  \bibinfo{author}{Simon \surnamestart Perdrix\surnameend} \&
  \bibinfo{author}{Renaud \surnamestart Vilmart\surnameend}
  (\bibinfo{year}{2018}): \emph{\bibinfo{title}{A {Complete} {Axiomatisation}
  of the {ZX}-{Calculus} for {Clifford+T} {Quantum} {Mechanics}}}.
\newblock In: {\slshape \bibinfo{booktitle}{Proceedings of the 33rd Annual
  ACM/IEEE Symposium on Logic in Computer Science}}, \bibinfo{series}{LICS
  '18}, \bibinfo{publisher}{Association for Computing Machinery},
  \bibinfo{address}{New York, NY, USA}, p. \bibinfo{pages}{559–568},
  \doi{10.1145/3209108.3209131}.

\bibitemdeclare{inproceedings}{Jeandel_2018_2}
\bibitem{Jeandel_2018_2}
\bibinfo{author}{Emmanuel \surnamestart Jeandel\surnameend},
  \bibinfo{author}{Simon \surnamestart Perdrix\surnameend} \&
  \bibinfo{author}{Renaud \surnamestart Vilmart\surnameend}
  (\bibinfo{year}{2018}): \emph{\bibinfo{title}{Diagrammatic {Reasoning} beyond
  {Clifford+T} {Quantum} {Mechanics}}}.
\newblock In: {\slshape \bibinfo{booktitle}{Proceedings of the 33rd Annual
  ACM/IEEE Symposium on Logic in Computer Science}}, \bibinfo{series}{LICS
  '18}, \bibinfo{publisher}{Association for Computing Machinery},
  \bibinfo{address}{New York, NY, USA}, p. \bibinfo{pages}{569–578},
  \doi{10.1145/3209108.3209139}.

\bibitemdeclare{article}{kissinger_reducing_2020}
\bibitem{kissinger_reducing_2020}
\bibinfo{author}{Aleks \surnamestart Kissinger\surnameend} \&
  \bibinfo{author}{John \surnamestart van~de Wetering\surnameend}
  (\bibinfo{year}{2020}): \emph{\bibinfo{title}{Reducing the number of
  non-{Clifford} gates in quantum circuits}}.
\newblock {\slshape \bibinfo{journal}{Physical Review A}}
  \bibinfo{volume}{102}(\bibinfo{number}{2}), p. \bibinfo{pages}{022406},
  \doi{10.1103/PhysRevA.102.022406}.

\bibitemdeclare{article}{miyazaki_analysis_2015}
\bibitem{miyazaki_analysis_2015}
\bibinfo{author}{Jisho \surnamestart Miyazaki\surnameend},
  \bibinfo{author}{Michal \surnamestart Hajdušek\surnameend} \&
  \bibinfo{author}{Mio \surnamestart Murao\surnameend} (\bibinfo{year}{2015}):
  \emph{\bibinfo{title}{Analysis of the trade-off between spatial and temporal
  resources for measurement-based quantum computation}}.
\newblock {\slshape \bibinfo{journal}{Physical Review A}}
  \bibinfo{volume}{91}(\bibinfo{number}{5}), p. \bibinfo{pages}{052302},
  \doi{10.1103/PhysRevA.91.052302}.

\bibitemdeclare{misc}{Ng_2017}
\bibitem{Ng_2017}
\bibinfo{author}{Kang~Feng \surnamestart Ng\surnameend} \&
  \bibinfo{author}{Quanlong \surnamestart Wang\surnameend}
  (\bibinfo{year}{2017}): \emph{\bibinfo{title}{A universal completion of the
  {ZX}-calculus}}, \doi{10.48550/arXiv.1706.09877}.

\bibitemdeclare{article}{Raussendorf_One-Way_2001}
\bibitem{Raussendorf_One-Way_2001}
\bibinfo{author}{Robert \surnamestart Raussendorf\surnameend} \&
  \bibinfo{author}{Hans~J. \surnamestart Briegel\surnameend}
  (\bibinfo{year}{2001}): \emph{\bibinfo{title}{A {One-Way} {Quantum}
  {Computer}}}.
\newblock {\slshape \bibinfo{journal}{Phys. Rev. Lett.}} \bibinfo{volume}{86},
  pp. \bibinfo{pages}{5188--5191}, \doi{10.1103/PhysRevLett.86.5188}.

\bibitemdeclare{inproceedings}{Simmons_2021}
\bibitem{Simmons_2021}
\bibinfo{author}{Will \surnamestart Simmons\surnameend} (\bibinfo{year}{2021}):
  \emph{\bibinfo{title}{Relating {Measurement} {Patterns} to {Circuits} via
  {Pauli} {Flow}}}.
\newblock In \bibinfo{editor}{Chris \surnamestart Heunen\surnameend} \&
  \bibinfo{editor}{Miriam \surnamestart Backens\surnameend}, editors: {\slshape
  \bibinfo{booktitle}{{\rm Proceedings 18th International Conference on}
  Quantum Physics and Logic, {\rm Gdansk, Poland, and online, 7-11 June
  2021}}}, {\slshape \bibinfo{series}{Electronic Proceedings in Theoretical
  Computer Science}} \bibinfo{volume}{343}, \bibinfo{publisher}{Open Publishing
  Association}, pp. \bibinfo{pages}{50--101}, \doi{10.4204/EPTCS.343.4}.

\bibitemdeclare{mastersthesis}{staudacher_optimization_2021}
\bibitem{staudacher_optimization_2021}
\bibinfo{author}{Korbinian \surnamestart Staudacher\surnameend}
  (\bibinfo{year}{2021}): \emph{\bibinfo{title}{Optimization {Approaches} for
  {Quantum} {Circuits} using {ZX}-calculus}}.
\newblock Master's thesis, \bibinfo{school}{Ludwig-Maximilians-Universität},
  \bibinfo{address}{München}.
\newblock
  \urlprefix\url{https://www.mnm-team.org/pub/Diplomarbeiten/stau21/PDF-Version/stau21.pdf}.

\bibitemdeclare{article}{VDN_2010}
\bibitem{VDN_2010}
\bibinfo{author}{Maarten \surnamestart Van Den~Nest\surnameend}
  (\bibinfo{year}{2010}): \emph{\bibinfo{title}{Classical {Simulation} of
  {Quantum} {Computation}, the {Gottesman-Knill} {Theorem}, and {Slightly}
  {Beyond}}}.
\newblock {\slshape \bibinfo{journal}{Quantum Info. Comput.}}
  \bibinfo{volume}{10}(\bibinfo{number}{3}), p. \bibinfo{pages}{258–271},
  \doi{10.5555/2011350.2011356}.

\bibitemdeclare{misc}{VDW_2020}
\bibitem{VDW_2020}
\bibinfo{author}{John \surnamestart van~de Wetering\surnameend}
  (\bibinfo{year}{2020}): \emph{\bibinfo{title}{ZX-calculus for the working
  quantum computer scientist}}, \doi{10.48550/ARXIV.2012.13966}.

\bibitemdeclare{misc}{wetering_2022}
\bibitem{wetering_2022}
\bibinfo{author}{John \surnamestart van~de Wetering\surnameend}
  (\bibinfo{year}{2022}): \emph{\bibinfo{title}{Personal communication}}.

\end{thebibliography}

\appendix

\section{Algebraic proofs for the canonical form}
\label{a:canonical-form-proofs}

\begin{proof}[Proof of Lemma~\ref{lem:phase-poly-free-only}]
 In \eqref{eq:stabiliser_state}, the functions $l$ and $q$ are allowed to depend on all components of the bit string $x$, i.e.\ $l(x)= \bigoplus_j d_j x_j$ for some fixed $d_j\in\ZZ_2$ and $q(x) = \bigoplus_{j<k} c_{jk}x_jx_k \oplus \bigoplus_j c_j x_j$ for some fixed $c_{jk},c_j\in\ZZ_2$.
 
 Given the set of free variables $F$, solving the defining system of linear equations for $A$ yields linear equations $x_j = a_j\oplus\bigoplus_{k\in F} a_{jk}x_k$ for every $j\in[n]\setminus F$, where $a_j,a_{jk}\in\ZZ_2$.
 
 Now suppose $d_j\neq 0$ for some $j\notin F$.
 Then we can substitute
 \[
  l(x)= \bigoplus_{j\in [n]} d_j x_j = \left(\bigoplus_{j\in [n]\setminus\{s\}} d_j x_j\right) \oplus a_s \oplus \bigoplus_{t\in F} a_{st}x_t = a_s \oplus \bigoplus_{j\in [n]\setminus\{s\}} (d_j\oplus a_{sj}) x_j,
 \]
 where we define $a_{sj}=0$ if $j\notin F$.
 The $a_s$ is constant and the factor $i^{a_s}$ can be absorbed into the overall scalar $\ld$.
 Since $l$ is computed modulo 2, the new function satisfies the same properties as the original one but no longer depends on $x_s$.
 Furthermore, as $a_{sj}=0$ for all $j\notin F$, this process does not introduce any new dependencies on dependent variables.
 
 Therefore, the substitution process strictly decreases the number of dependent variables that $l$ depends on and successive applications will eventually yield a function that depends only on free variables.
 An analogous argument holds for $q$.
\end{proof}

\begin{lemma}\label{lem:phase-poly-uniqueness}
 Let $\ket{\psi}$ and $\ket{\phi}$ be two stabilizer states with the same support $A$, and let $F$ be a set of free variables for $A$.
 Suppose there exists $\ld,\mu\in\CC\setminus\{0\}$ such that
 \[
  \ket{\psi} = \ld \sum_{x\in A} i^{l(x)} (-1)^{q(x)} \ket{x} \qquad\text{and}\qquad
  \ket{\phi} = \mu \sum_{x\in A} i^{l'(x)} (-1)^{q'(x)} \ket{x}
 \]
 where for some $d_j,d_j',c_{jk},c_j,c_{jk}',c_j'\in\ZZ_2$,
 \begin{align*}
  l(x) &= \bigoplus_{j\in F} d_j x_j & q(x) &= \bigoplus_{j,k\in F, \; j<k} c_{jk} x_j x_k \oplus \bigoplus_j c_j x_j \\
  l'(x) &= \bigoplus_{j\in F} d_j' x_j & q'(x) &= \bigoplus_{j,k\in F,\; j<k} c_{jk}' x_j x_k \oplus \bigoplus_j c_j' x_j.
 \end{align*}
 Then $\ket{\psi}$ and $\ket{\phi}$ are linearly dependent if and only if for all $j,k\in F$ we have $d_j=d_j'$, $c_{jk}=c_{jk}'$, and $c_j=c_j'$.
\end{lemma}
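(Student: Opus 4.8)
The plan is to prove the two implications separately: the backward direction is immediate, and the forward direction is carried by comparing the two states coefficient-by-coefficient and then probing the resulting identity at free-variable assignments of low Hamming weight. Throughout I use the fact, set up in Section~\ref{s:afffine-phase}, that every $x\in A$ is specified uniquely by an assignment of values to the free variables indexed by $F$, and that $l,q,l',q'$ depend only on these free values.

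For the backward direction, if $d_j=d_j'$, $c_j=c_j'$ and $c_{jk}=c_{jk}'$ for all $j,k\in F$, then $l=l'$ and $q=q'$ as functions on $A$, so the two sums coincide termwise and $\ket{\psi}=(\lambda/\mu)\ket{\phi}$; in particular they are linearly dependent. For the forward direction, assume $\ket{\psi}$ and $\ket{\phi}$ are linearly dependent. Since both are nonzero and have the same support $A$, this forces $\ket{\psi}=z\ket{\phi}$ for some $z\in\CC\setminus\{0\}$, and equating the coefficient of each $\ket{x}$ with $x\in A$ gives
\[
 \lambda\, i^{l(x)}(-1)^{q(x)} = z\mu\, i^{l'(x)}(-1)^{q'(x)}.
\]
I would first evaluate this at the base point of $A$ (all free variables set to $0$), where $l,q,l',q'$ all vanish; this yields $\lambda=z\mu$, and hence for every $x\in A$ we obtain the cleaner identity $i^{l(x)}(-1)^{q(x)} = i^{l'(x)}(-1)^{q'(x)}$.

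It then remains to read off the individual coefficients from this identity. Setting a single free variable $x_j$ (with $j\in F$) to $1$ and all others to $0$ gives $i^{d_j}(-1)^{c_j}=i^{d_j'}(-1)^{c_j'}$; since the map $(d,c)\mapsto i^{d}(-1)^{c}$ is a bijection from $\ZZ_2\times\ZZ_2$ onto $\{1,i,-1,-i\}$, this forces $d_j=d_j'$ and $c_j=c_j'$. Next, setting exactly two free variables $x_j,x_k$ (with $j<k$ in $F$) to $1$ activates precisely one quadratic term, yielding
\[
 i^{d_j\oplus d_k}(-1)^{c_j\oplus c_k\oplus c_{jk}} = i^{d_j'\oplus d_k'}(-1)^{c_j'\oplus c_k'\oplus c_{jk}'}.
\]
Cancelling the already-matched linear data leaves $(-1)^{c_{jk}}=(-1)^{c_{jk}'}$, so $c_{jk}=c_{jk}'$. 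Letting $j,k$ range over $F$ establishes equality of all coefficients, completing the proof.

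This argument is essentially the uniqueness of a phase polynomial over $\ZZ_2^{\dim A}$, which also underlies Lemma~\ref{lem:true-phase-poly}, so I do not anticipate a genuine obstacle. The only care needed is bookkeeping: one must confirm that on the weight-one and weight-two assignments the quadratic sum contributes exactly the expected single term (or none), and that computing the linear part $l$ modulo $2$ does not interfere with reading off the coefficients via the injectivity of $(d,c)\mapsto i^{d}(-1)^{c}$. Pinning down the normalisation $\lambda=z\mu$ at the base point, and thereby reducing to an exact equality of phase functions, is the step I would be most careful to state precisely.
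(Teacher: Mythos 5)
Your proposal is correct and follows essentially the same route as the paper's proof: fix the scalar by evaluating at the all-zeros assignment of the free variables, then probe the weight-one assignments to recover $d_j$ and $c_j$ and the weight-two assignments to recover $c_{jk}$. The only cosmetic difference is that you argue directly from the coefficient identity via the injectivity of $(d,c)\mapsto i^d(-1)^c$, whereas the paper argues contrapositively (real vs.\ imaginary amplitude, sign flip), but these are the same computation.
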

\begin{proof}
 The `if' direction is straightforward: if $d_j=d_j'$, $c_{jk}=c_{jk}'$, and $c_j=c_j'$ for all $j,k\in F$, then $\mu\ket{\psi}=\ld\ket{\phi}$.

 For the `only if' direction, note that $l(x)=l'(x)=q(x)=q'(x)=0$ if all variables in $F$ are assigned 0, so by rescaling such that $\ld=\mu$, we get $\ket{\psi}=\ket{\phi}$ if and only if they are linearly dependent.
 
 By definition, each assignment of values to the free variables in $F$ induces one assignment of values to all the variables that is in $A$. 
 Suppose there exists a $j\in F$ such that $d_j\neq d_j'$, wlog assume $d_j=1$ and $d_j'=0$ (otherwise the argument is symmetric).
 Let $\xi$ be the bit string in $A$ that has every free variable set to 0 except the one with index $j$.
 Then $\braket{\xi\mid\psi}$ is imaginary while $\braket{\xi\mid\phi}$ is real, so since the two states have the same non-zero amplitude for the assignment induced by setting all free variables to 0, they cannot be linearly dependent.
 
 Similarly, suppose there exists $j\in F$ such that $c_j\neq c_j'$, then for the same $\xi$ we have $\braket{\xi\mid\psi}=-\braket{\xi\mid\phi}$, so again the two states cannot be linearly dependent.
 
 So without loss of generality, assume that $d_j=d_j'$ and $c_j=c_j'$ for all $j\in F$.
 Now suppose there are $j,k\in F$ such that $c_{jk}\neq c_{jk}'$.
 Let $\zeta$ be the bit string induced by the assignment where $x_j=x_k=1$ and all other free variables are 0.
 Then again, $\braket{\zeta\mid\psi}=-\braket{\zeta\mid\phi}$ so the two states cannot be linearly dependent.
 
 Therefore, linear dependence implies that for all $j,k\in F$ we have $d_j=d_j'$, $c_{jk}=c_{jk}'$, and $c_j=c_j'$.
\end{proof}

\begin{proof}[Proof of Lemma~\ref{lem:true-phase-poly}]
 Via Lemmas~\ref{lem:phase-poly-free-only} and ~\ref{lem:phase-poly-uniqueness}, we can uniquely write $\ket{\psi} = \ld \sum_{x\in A} i^{l(x)} (-1)^{q(x)} \ket{x}$, where $l(x)=\bigoplus_{j\in F} d_j x_j$ and $q(x)=\bigoplus_{j,k\in F,\; j<k} c_{jk}x_jx_k \oplus \bigoplus_j c_j x_j$ with all coefficients taking values in $\ZZ_2$.
 
 As $y\bmod 2 = y^2 \bmod 4$ for all $y\in\ZZ$, we have
\[
 \bigoplus_{j\in F} d_j x_j = \left(\sum_{j\in F} d_j x_j\right)^2 \bmod 4 =  \left(\sum_{j\in F} d_j x_j + 2\sum_{j,k\in F,\; j< k} d_jd_kx_jx_k\right) \bmod 4,
\]
where we have used the fact that $d_j,x_j\in\ZZ_2$ for all $j$ and hence $(d_jx_j)^2=d_jx_j$.
We can thus write
\[
 \sum_{x\in A} i^{l(x)} (-1)^{q(x)} \ket{x} = \sum_{x\in A} i^{p(x)} \ket{x},
\]
where
\begin{align*}
 p(x) &= \sum_{j\in F} d_j x_j + 2\left(\sum_{j,k\in F,\; j<k} c_{jk}x_jx_k + \sum_j c_j x_j\right) + 2\sum_{j,k\in F,\; j<k} d_jd_k x_jx_k \\
 &= \sum_{j\in F} (d_j+2c_j) x_j + 2\sum_{j,k\in F,\; j<k} (c_{jk} + d_j d_k) x_j x_k
\end{align*}
Now, $r_j:= (d_j+2c_j)\in\ZZ_4$.
The coefficient $s_{jk} := c_{jk}+d_jd_k$ could take value 2, but as $p$ is in the exponent of $i$ and $s_{jk}$ is multiplied by 2, we may without loss of generality replace it with $s_{jk} := c_{jk}\oplus d_j d_k$ so that $s_{jk}\in\ZZ_2$.
 
 Conversely, we can find functions $l$ and $q$ from $p$ by setting $d_j := r_j \bmod 2$, $c_j := \frac{1}{2}(r_j-d_j)$, and $c_{jk}:= s_{jk}\oplus d_j d_k$.
 Thus, by uniqueness of $l$ and $q$, the phase polynomial expression is also unique.
\end{proof}

\section{An example illustrating Lemma~\ref{lem:phase-poly-bijection}}
\label{a:bijection-example}

Consider the following phase-polynomial form diagram from Example~\ref{ex:rGS-LCtoPhasePoly}, where we have numbered the qubits from top to bottom.

\begin{center}
	\begin{tikzpicture}[scale=0.5]
	\begin{pgfonlayer}{nodelayer}
		\node [style=Z phase dot] (0) at (-3, 1.5) {$\frac{\pi}{2}$};
		\node [style=Z phase dot] (1) at (-1, 0.5) {};
		\node [style=X phase dot] (2) at (-1, -0.5) {$\pi$};
		\node [style=X dot] (3) at (-3, -1.5) {};
		\node [style=none] (4) at (0.75, 1.5) {};
		\node [style=none] (5) at (0.75, 0.5) {};
		\node [style=none] (6) at (0.75, -0.5) {};
		\node [style=none] (7) at (0.75, -1.5) {};
		\node [style=none] (8) at (1.25, 1.5) {$\scriptstyle 1$};
		\node [style=none] (9) at (1.25, 0.5) {$\scriptstyle 2$};
		\node [style=none] (10) at (1.25, -0.5) {$\scriptstyle 3$};
		\node [style=none] (11) at (1.25, -1.5) {$\scriptstyle 4$};
	\end{pgfonlayer}
	\begin{pgfonlayer}{edgelayer}
		\draw [style=Hadamard edge] (0) to (1);
		\draw (0) to (2);
		\draw (0) to (3);
		\draw (3) to (1);
		\draw (0) to (4.center);
		\draw (1) to (5.center);
		\draw (2) to (6.center);
		\draw (3) to (7.center);
	\end{pgfonlayer}
\end{tikzpicture}
\end{center}

Following the procedure from Lemma~\ref{lem:phase-poly-bijection}, we construct the state corresponding to this diagram.
The state will be expressed as $\sum_{x\in A} i^{p(x)}\ket{x}$, where $p(x) = \sum_{j\in F} r_j x_j + 2\sum_{j,k\in F,\; j<k} s_{jk} x_j x_k$.
Here, $F$ is the set of free variables, $A$ is the affine space on which the state has support, and $p(x)$ is the phase polynomial with $r_j\in\ZZ_4$ and $s_{jk}\in\ZZ_2$ for all $j,k\in F$.
\begin{itemize}
 \item The set $F$ of free variables corresponding to this diagram is $F= \{x_1, x_2\}$ since qubits 1 and 2 are denoted by green spiders.
 
 \item The affine space $A$ is defined by the following set of equations arising from the red spiders:
 \begin{equation}\label{eq:example-affine}
  x_3 = 1 \oplus x_1 \qquad\qquad\qquad x_4 = x_1 \oplus x_2
 \end{equation}
 since qubit 3 has phase $\pi$ (giving the constant 1 on the right-hand side) and is connected to qubit 1, while qubit 4 has phase 0 and is connected to both 1 and 2.
 
 \item For the linear terms in the phase polynomial, we get that $r_1 = 1$ and $r_2 = 0$ as the phase of $x_1$ is $\frac{\pi}{2}$ and the phase of $x_2$ is $0$.
 
 \item For the quadratic terms in the phase polynomial, we have $s_{12}=1$ as there is a Hadamard edge connecting $x_1$ and $x_2$.
\end{itemize}
Combining these, the phase polynomial is $p(x) = x_1 + 2 x_1 x_2$.
The state corresponding to the diagram is therefore given by:
\begin{align*}
 \sum_{x\in A} i^{x_1 + 2 x_1 x_2}\ket{x}
 &= \sum_{x_1, x_2\in \ZZ_2} i^{x_1} (-1)^{x_1 x_2} \ket{x_1 x_2 (1\oplus x_1) (x_1\oplus x_2)} \\
 &= \ket{0010}+\ket{0111}+ i\ket{1001} -i\ket{1100}
\end{align*}
It is then quick to check that applying the procedure in Lemma~\ref{lem:phase-poly-bijection} for constructing a diagram from a state and a set of free variables gives back the original diagram.

Instead, we will show how to construct the diagram corresponding to the same state with a different set of free variables $F = \{x_2, x_3\}$.
To do this, we first rewrite the affine space and the phase polynomial in terms of the new free variables $x_2$ and $x_3$, and then apply the procedure for obtaining diagrams.

Choosing $x_3$ to be free instead of $x_1$, we rearrange the first equation of \eqref{eq:example-affine} and then substitute it into the second to get:
\begin{equation}\label{eq:affine-new}
 x_1 = 1\oplus x_3 \qquad\qquad\qquad x_4 = 1\oplus x_2 \oplus x_3
\end{equation}
Substituting into the phase polynomial yields $p(x) = (1\oplus x_3) + 2(1\oplus x_3)x_2$ where $\oplus$ denotes addition modulo 2.
Yet we want the phase polynomial to be computed modulo 4, since $i^4=1$.
Now, as $y\bmod 2 = y^2 \bmod 4$ for all $y\in\ZZ$, and $b^2 = b$ for all $b\in\ZZ_2$, this can be rewritten to:
\begin{align*}
 p(x) = (1\oplus x_3) + 2(1\oplus x_3)x_2
 = (1+x_3)^2 + 2(1+x_3)^2 x_2
 = 1 + 2 x_2 + 3 x_3 + 2 x_2 x_3 \pmod 4
\end{align*}
We thus have $r_2 =2$, $r_3=3$, and $s_{23}=1$.
The constant term in the phase polynomial is irrelevant since we are ignoring global scalars.
Up to scalar factor, the full state is
\[
 \sum_{x_2,x_3\in\ZZ_2} i^{2 x_2 + 3 x_3 + 2 x_2 x_3} \ket{(1\oplus x_3) x_2 x_3 (1\oplus x_2 \oplus x_3)}.
\]
To construct the diagram corresponding to this state and set of free variables:
\begin{itemize}
 \item We already have the equations for the dependent variables in terms of $F=\{x_2,x_3\}$ in \eqref{eq:affine-new}.
 
 \item Place a green spider with phase $r_2\frac{\pi}{2} = \pi$ for qubit 2 and a green spider with phase $r_3\frac{\pi}{2} = \frac{3\pi}{2}$ (or, equivalently, $-\frac{\pi}{2}$) for qubit 3.
 Each of the spiders is connected to one output wire.
 
 \item Place a red spider with phase $\pi$ for qubit 1 and a red spider with phase $\pi$ for qubit 4 since the equations for both $x_1$ and $x_4$ contain a constant term. Again, each of the spiders is connected to one output wire.
 
 \item Variable $x_1$ depends on $x_3$, so draw a plain wire between the spiders for qubits 1 and 3.
 Variable $x_4$ depends on both $x_2$ and $x_3$, so draw plain wires between the spiders for qubits 2 and 4, as well as between 3 and 4.
 
 \item As $s_{23} =1$, draw a Hadamard edge connecting the green spiders corresponding to $x_2$ and $x_3$.
\end{itemize}
This yields the following diagram:
\begin{center}
		\begin{tikzpicture}[scale=0.5]
	\begin{pgfonlayer}{nodelayer}
		\node [style=Z phase dot] (1) at (-1, 0.5) {$\pi$};
		\node [style=none] (4) at (1.5, 1.75) {};
		\node [style=none] (5) at (1.5, 0.5) {};
		\node [style=none] (6) at (1.5, -0.75) {};
		\node [style=none] (7) at (1.5, -2) {};
		\node [style=Z phase dot] (8) at (-1, -0.75) {$- \frac{\pi}{2}$};
		\node [style=X phase dot] (9) at (-2.75, -2) {$\pi$};
		\node [style=X phase dot] (10) at (-2.75, 1.75) {$\pi$};
		\node [style=none] (11) at (2.25, 1.75) {$\scriptstyle 1$};
		\node [style=none] (12) at (2.25, 0.5) {$\scriptstyle 2$};
		\node [style=none] (13) at (2.25, -0.75) {$\scriptstyle 3$};
		\node [style=none] (14) at (2.25, -2) {$\scriptstyle 4$};
	\end{pgfonlayer}
	\begin{pgfonlayer}{edgelayer}
		\draw (1) to (5.center);
		\draw (10) to (4.center);
		\draw (10) to (8);
		\draw (9) to (1);
		\draw (9) to (8);
		\draw (9) to (7.center);
		\draw (8) to (6.center);
		\draw [style=hadamard edge] (8) to (1);
	\end{pgfonlayer}
\end{tikzpicture}
\end{center}

\end{document}